\providecommand{\U}[1]{\protect\rule{.1in}{.1in}}
\newtheorem{theorem}{Theorem}
\newtheorem{acknowledgement}[theorem]{Acknowledgement}
\newtheorem{definition}[theorem]{Definition}
\newtheorem{example}[theorem]{Example}
\newtheorem{lemma}[theorem]{Lemma}
\newtheorem{remark}[theorem]{Remark}
\newenvironment{proof}[1][Proof]{\noindent\textbf{#1.} }{\ \rule{0.5em}{0.5em}}
\let\pdfoutput=\undefined\fi
\begin{document}

\title{Angular Spectra for non-Gaussian Isotropic Fields}
\author{Gyorgy Terdik\thanks{The author is very thankfull to an unknown referee for
valuable comments and suggestions.}\\Faculty of Informatics, University of Debrecen, Hungary, \\Terdik.Gyorgy$@$inf.unideb.hu}
\maketitle

\begin{abstract}
Cosmic Microwave Background (CMB) Anisotropies is a subject of intensive
research in several fields of sciences \cite{Hu2002}. In this paper we start a
systematic development of basic notions and theory in statistics according to
the application for CMB. The main result of this paper is the necessary and
sufficient condition for isotropy of a non-Gaussian field in terms of spectra.
Clear formulae for bi-, tri- and polyspectra and bi-, tri-, and higher order
covariances are also given.

\textit{Keywords}: Bispectrum, Trispectrum, Angular poly-Spectra, Cosmic
microwave background radiation; Gaussianity; spherical random fields

%EndExpansion

\end{abstract}

\section{Introduction}

"Cosmological Principle (first coined by Einstein): the Universe is, in the
large, homogeneous and isotropic" \cite{Bartlett1999}

In the last decade or so there has been some growing interest in studying the
spatial - time data measured on the surface of a sphere. These data includes
cosmic microwave background (CMB) anisotropies \cite{Kogo2006},
\cite{Okamoto2002}, \cite{Adshead2012}, medical imaging \cite{Kakarala1993}
\cite{Kakarala2012}, global and land-based temperature data \cite{Jones1994},
\cite{RaoTerdikRegress06}, gravitational and geomagnetic data, climate model
\cite{North1981}.

One of the problem in focus is the Non-Gaussianity of the observations, which
leads to the investigation of higher order angular spectra called polyspectra,
\cite{Hu2001}, \cite{Bartolo2010}, \cite{Benoit-Levy2012}. Angular
polyspectra, in particular the bispectrum and trispectrum shows to be an
appropriate measure of Gaussianity since for a Gaussian process all higher
(then second order) spectra are zero, \cite{Kamionkowski2011}. An other
important question is the Monte Carlo simulation of non-Gaussian isotropic
maps with a given power spectrum and bispectrum and possible polyspectrum,
\cite{Contaldi2001}, \cite{Rocha2005}.

In this paper some general properties of the angular polyspectra will be given
for isotropic stochastic processes on sphere. Our treatment follows the basic
theory of higher order spectra for non-Gaussian time series,
\cite{Bril_polyspect-65}, \cite{Brill-book-01}, \cite{Rao84a},
\cite{TerdikLN99}. The symmetry relations and the appropriate series expansion
of cumulants might have influence on the estimation of polyspectra as well,
\cite{Smith2012}, \cite{Izumi2012}.

The isotropy assumption implies very particular form for the angular spectra
therefore a delicate question is the construction of isotropic stochastic maps
on the sphere.

\section{Gaussian isotropic fields}

Gaussian isotropic processes on sphere has a long history starting with
\cite{Obukhov1947}, some basic theory and references can be found in
\cite{Yaglom1961}, \cite{Jones1963}, \cite{McLeod1986} \cite{Yadrenko1983}
\cite{Yaglom-book-87}, (\cite{Leonenko1999}). Due to the expansive recent
applications there are new books \cite{Gaetan2010}, \cite{Cressie2011} and
several papers covering a number of problems in general for spatial-time observations.

We consider a stochastic process $X\left(  L\right)  $ on the unit sphere
$\mathbb{S}_{2}$ in $\mathbb{R}^{3}$, where $L=\left(  \vartheta
,\varphi\right)  $, the co-latitude $\vartheta\in\left[  0,\pi\right]  $ and
the longitude $\varphi\in\left[  0,2\pi\right]  $. Let us suppose that
$X\left(  L\right)  $ is continuous (in mean square sense), then it has a
series expansion in terms of spherical harmonics $Y_{\ell}^{m}$%
\begin{equation}
X\left(  L\right)  =\sum_{\ell=0}^{\infty}\sum_{m=-\ell}^{\ell}Z_{\ell}%
^{m}Y_{\ell}^{m}\left(  L\right)  , \label{Series_X_L}%
\end{equation}
where the coefficients are given by
\[
Z_{\ell}^{m}=\int_{\mathbb{S}_{2}}X\left(  L\right)  \overline{Y_{\ell}%
^{m}\left(  L\right)  }\Omega\left(  dL\right)  ,
\]
where $\Omega\left(  dL\right)  =\sin\vartheta d\vartheta d\varphi$ is
Lebesque element of surface area on $\mathbb{S}_{2}$. It is generally
accepted, in time series analysis that $X\left(  L\right)  $ called linear if
$Z_{\ell}^{m}$ are independent and identically distributed ($\mathsf{E}%
X\left(  L\right)  =0$). Now it is straightforward that from the linearity
here follows that the covariance function $\mathcal{C}_{2}\left(  L_{1}%
,L_{2}\right)  =\mathsf{E}X\left(  L_{1}\right)  X\left(  L_{2}\right)  $
depends on the angular distance $\gamma$ between $L_{1}$ and $L_{2}$ only.
Necessarily the covariance function depends on the central angle between
locations and has the form
\begin{equation}
\mathcal{C}_{2}\left(  L_{1},L_{2}\right)  =\mathcal{C}\left(  \cos
\gamma\right)  =\sum_{\ell=0}^{\infty}f_{\ell}\frac{2\ell+1}{4\pi}P_{\ell
}\left(  \cos\gamma\right)  , \label{Cov_X}%
\end{equation}
where $P_{\ell}$ denotes the Legendre polynomial, see \cite{Yadrenko1983},
I.5., and all coefficients $f_{\ell}\geq0$. Note here that the inner product
$L_{1}\cdot L_{2}=\cos\gamma$. In other words $\mathcal{C}_{2}\left(
L_{1},L_{2}\right)  $ is invariant under the group of rotations, i.e.
$X\left(  L\right)  $ is\textit{\ isotropic}. Moreover the assumption of
linearity, i.e. the independence of the triangular array $Z_{\ell}^{m}$ is so
strong that the Gaussianity also follows, \cite{Baldi2007}. In other words the
only linear field on the sphere is the Gaussian one. We shall consider a
linearity which looks weaker although it will be shown to be equivalent to the
whole independence of $Z_{\ell}^{m}$.

\begin{definition}
The field $X\left(  L\right)  $ is linear if the generating array $Z_{\ell
}^{m}$ is uncorrelated and for fixed degree $\ell$, $\left\{  \left.  Z_{\ell
}^{m}\right\vert m=-\ell,-\ell+1\ldots,\ell-1,\ell\right\}  $ are independent.
\end{definition}

This concept of linearity corresponds to the physical approach to the theory
of angular momentum. One may consider linearity in some really weaker sense,
namely the rows $\left\{  Z_{\ell}^{m}\right\}  _{m=-\ell}^{\ell}$ of the
array $Z_{\ell}^{m}$, are independent but the variables $Z_{\ell}^{m}$ inside
the row $\ell$ are uncorrelated.

The convergence of the series $\sum_{\ell=0}^{\infty}f_{\ell}\frac{2\ell
+1}{4\pi}$ is equivalent to the continuity of $\mathcal{C}\left(
\cdot\right)  $ on $\left[  -1,1\right]  $. The superposition (\ref{Cov_X})
corresponds to the superposition of a covariance function on the real line in
terms of its spectrum and the orthogonal system $\left\{  \exp\left(
i2\pi\lambda k\right)  ,\quad k=0,1,2\ldots\right\}  $, hence we treat $\ell$
as the frequency and $\widehat{\mathcal{C}}_{\ell}=f_{\ell}$ is the value of
the spectrum at $\ell$. Since $P_{\ell}\left(  \cos0\right)  =1$, the variance
$\mathsf{E}X\left(  L\right)  ^{2}$ is broken up into the sum of spectra
therefore we have the analysis of variance. The inversion%
\[
\widehat{\mathcal{C}}_{\ell}=\int_{\mathbb{S}_{2}}\mathcal{C}\left(
\cos\gamma\right)  P_{\ell}\left(  \cos\gamma\right)  \Omega\left(  dL\right)
,
\]
is also valid. The stochastic process $X\left(  L\right)  $ itself has
spectral representation (\ref{Series_X_L}) also in terms of spherical
harmonics $Y_{\ell}^{m}$ with complex values, detailed account for spherical
harmonics $Y_{\ell}^{m}$ is found in \cite{Varshalovich1988},
\cite{SteinWeiss}. For a given $X\left(  L\right)  $ we have the inversion
\[
Z_{\ell}^{m}=\int_{\mathbb{S}_{2}}X\left(  L\right)  \overline{Y_{\ell}%
^{m}\left(  L\right)  }\Omega\left(  dL\right)  .
\]
The orthogonal random 'measure' $Z_{\ell}^{m}$ is a triangular array for each
fix $\ell;$ $m=-\ell,-\ell+1\ldots,\ell-1,\ell$, i.e. rows contain $2\ell+1$,
i.i.d Gaussian random variables, $\mathsf{E}Z_{\ell}^{m}=0,\;\mathsf{E}%
Z_{\ell}^{m}Z_{k}^{n\ast}=f_{\ell}\delta_{\ell,k}\delta_{m,n}$. The reason
that $f_{\ell}$ does not depend on $m$, is the Funk-Hecke formula
(\ref{Formula_Funk-Hecke}), see \cite{Yadrenko1983}, I.5., \cite{Jones1963}.

The $\mathcal{C}_{2}\left(  L_{1}\cdot L_{2}\right)  $ is strictly positive
definite if all $f_{\ell}$ is $\geq0$, and only finitely many are zero
(\cite{Schreiner1997}, \cite{Schoenberg1942} ).

\begin{example}
Mat\'{e}rn Class of Covariance Functions \cite{Matern2nded.1986},
\cite{Schabenberger2005} Suppose one has a covariance function on the real
line then its restriction to $\left[  0,\pi\right]  $ gives a covariance
function $\mathcal{C}_{0}\left(  \gamma\right)  ,$ on the sphere
$\mathbb{S}_{2}$, $\mathcal{C}_{0}\left(  \gamma\right)  =\mathcal{C}\left(
\cos\gamma\right)  $, for instance
\begin{align*}
\mathcal{C}_{0}\left(  \gamma\right)   &  =\sigma^{2}\frac{1}{\Gamma\left(
\nu\right)  }\left(  \frac{\vartheta\gamma}{2}\right)  ^{\nu}2K_{\nu}\left(
\vartheta\gamma\right)  ,\;\vartheta>0,\;\nu>0,\\
f_{\ell}  &  =\int_{0}^{\pi}\mathcal{C}_{0}\left(  \gamma\right)  \frac
{2\ell+1}{4\pi}P_{\ell}\left(  \cos\gamma\right)  d\gamma.
\end{align*}
Where $K_{\nu}$ is the modified Bessel function of the second kind. Here the
smoothness parameter $\nu$ controls the continuity, as well as $\vartheta$
controls the regularity \cite{Gaetan2010}, \cite{Sherman2011}. Note $K_{\nu
}\left(  r\right)  \sim\Gamma\left(  \nu\right)  \left(  r/2\right)  ^{-\nu
}/2$ if $r\rightarrow0$, and $\nu>0$, hence $\mathcal{C}_{0}\left(
\gamma\right)  /\sigma^{2}$ is a correlation function.
\end{example}

\begin{example}
The generating function of Legendre polynomial $P_{\ell}$ is
\[
\sum_{\ell=0}^{\infty}P_{\ell}\left(  x\right)  z^{\ell}=\left(
1-2xz+z^{2}\right)  ^{-1/2},\;x\in\left(  -1,1\right)  ,\;\left\vert
z\right\vert <1
\]
see (\ref{Funct_Gen_Pl1}). Let $0<z<1$, hence
\[
\mathcal{C}\left(  \cos\gamma\right)  =\frac{1}{\sqrt{1-2z\cos\gamma+z^{2}}},
\]
is a covariance function on $\cos\gamma\in\left[  -1,1\right]  $, with
$f_{\ell}=\frac{4\pi}{2\ell+1}z^{\ell}$. Similarly for any $n>2$ we have a
covariance function
\[
\mathcal{C}\left(  \cos\gamma\right)  =\frac{1}{\left(  1-2z\cos\gamma
+z^{2}\right)  ^{\left(  n-2\right)  /2}},
\]
if $0<z<1$.
\end{example}

\begin{example}
From the well known series
\[
\sum_{\ell=0}^{\infty}P_{\ell}\left(  x\right)  \frac{a^{\ell}}{2\ell+1}%
=\frac{1-a^{2}}{\left(  1-2xa+a^{2}\right)  ^{3/2}},\;x\in\left(  -1,1\right)
,\;\left\vert a\right\vert <1
\]
see \cite{Yadrenko1983}, for any $0<a<1$, we find the covariance function
\[
\mathcal{C}\left(  \cos\gamma\right)  =\frac{1-a^{2}}{\left(  1-2a\cos
\gamma+a^{2}\right)  ^{3/2}},
\]
on $\left[  -1,1\right]  $ and $f_{\ell}=4\pi a^{\ell}$. Similarly for
$n\geq2$, we have
\[
\mathcal{C}\left(  \cos\gamma\right)  =\frac{1-a^{2}}{\left(  1-2a\cos
\gamma+a^{2}\right)  ^{n/2}}.
\]
Moreover the series
\[
\sum_{\ell=1}^{\infty}P_{\ell}\left(  x\right)  \frac{t^{\ell}}{\ell}=\ln
\frac{2}{1-tx+\sqrt{1-2xt+t^{2}}},\;\;0<t<1,\;\left\vert x\right\vert <1
\]
see \cite{Prudnikov14}, 5.10.1.4., gives a covariance function as well.
\end{example}

\begin{example}
It is known, \cite{Breitenberger1963}, \cite{Weaver2001}, that the probability
density on the sphere
\[
\mathcal{C}\left(  \cos\gamma\right)  =\frac{\kappa}{4\pi\sinh\left(
\kappa\right)  }\exp\left(  \kappa\cos\gamma\right)  ,\;\kappa>0
\]
has series expansion
\[
\mathcal{C}\left(  \cos\gamma\right)  =\frac{\kappa}{\sinh\left(
\kappa\right)  }\sum_{\ell=0}^{\infty}\frac{2\ell+1}{4\pi}\sqrt{\frac{\pi
}{2\kappa}}I_{\ell+1/2}\left(  \kappa\right)  P_{\ell}\left(  \cos
\gamma\right)  ,
\]
where $\sqrt{\frac{\pi}{2\kappa}}I_{\ell+1/2}\left(  \kappa\right)  $ is the
modified spherical Bessel function of the first kind, \cite{Abramowit12},
also
\[
\left(  2\ell+1\right)  \sqrt{\frac{\pi}{2\kappa}}I_{\ell+1/2}\left(
\kappa\right)  \sim\frac{\kappa^{\ell}}{\left(  2\ell-1\right)  !!}.
\]
Hence $\mathcal{C}$ is a covariance function with spectrum
\begin{align*}
f_{\ell}  &  =\frac{\kappa}{\sinh\left(  \kappa\right)  }\sqrt{\frac{\pi
}{2\kappa}}I_{\ell+1/2}\left(  \kappa\right) \\
&  =\frac{I_{\ell+1/2}\left(  \kappa\right)  }{I_{1/2}\left(  \kappa\right)
}.
\end{align*}
Note $\sinh\left(  \kappa\right)  /\kappa=\sqrt{\frac{\pi}{2\kappa}}%
I_{1/2}\left(  \kappa\right)  $, and
\[
\exp\left(  a\cos\vartheta\right)  =\sum_{\ell=0}^{\infty}\left(
2\ell+1\right)  \sqrt{\frac{\pi}{2a}}I_{\ell+1/2}\left(  a\right)  P_{\ell
}\left(  \cos\vartheta\right)
\]
see \cite{Abramowit12}, \cite{Prudnikov14}.
\end{example}

\begin{example}
For $\kappa>0$, $\gamma\in\left[  0,\pi\right]  $ we have
\begin{align*}
\sum_{\ell=0}^{\infty}\frac{\kappa^{\ell}}{\ell!}P_{\ell}\left(  \cos
\gamma\right)   &  =\exp\left(  \kappa\cos\gamma\right)  J_{0}\left(
\kappa\sin\gamma\right)  ,\\
f_{\ell}  &  =\frac{\kappa^{\ell}}{\ell!}\frac{4\pi}{2\ell+1},
\end{align*}
see \cite{Prudnikov14}, hence $\exp\left(  \kappa\cos\gamma\right)
J_{0}\left(  \kappa\sin\gamma\right)  /\exp\left(  \kappa\right)  $ is a
correlation function, where $J_{0}$ denotes Bessel function of the first kind.
\end{example}

\begin{example}
We have a covariance function of the form
\begin{align*}
\mathcal{C}\left(  \cos\gamma\right)   &  =I_{0}\left(  2\kappa\sqrt
{\cos\gamma-1}\right)  I_{0}\left(  2\kappa\sqrt{\cos\gamma+1}\right) \\
&  =\sum_{\ell=0}^{\infty}\frac{\kappa^{\ell}}{\left(  \ell!\right)  ^{2}%
}P_{\ell}\left(  \cos\gamma\right)  ,\\
f_{\ell}  &  =\frac{\kappa^{\ell}}{\left(  \ell!\right)  ^{2}}\frac{4\pi
}{2\ell+1}%
\end{align*}
see \cite{Prudnikov14}. Note that $I_{0}\left(  x\right)  $ is a function of
$x^{2}/4$ see \cite{Abramowit12}{, } hence there is no danger of complex
values. $I_{0}\left(  0\right)  =1$, $I_{0}$ denotes modified Bessel function
of the first kind.
\end{example}

\begin{example}
\textbf{Poisson formula. }For a homogeneous isotropic field on $\mathbb{R}%
^{3}$ we have the spectral representation
\[
\mathcal{C}\left(  r\right)  =\int_{0}^{\infty}j_{0}\left(  \lambda r\right)
\Phi\left(  d\lambda\right)  ,
\]
of a covariance function with spectral measure $\Phi\left(  d\lambda\right)
$, see \cite{Yadrenko1983}, where $j_{0}$ is the Spherical Bessel function of
the first kind, see \cite{Abramowit12}. If we consider two locations $L_{1}$
and $L_{2}$ on the sphere $\mathbb{S}_{2}$ with angle $\gamma\in\left[
0,\pi\right]  $, then the distance $r=\left\Vert L_{1}-L_{2}\right\Vert $
between them in term of the angle is $2\sin\left(  \gamma/2\right)  $, and
$L_{1}\cdot L_{2}=\cos\gamma$. Hence we have the covariance function on sphere
$\mathcal{C}_{0}\left(  \cos\gamma\right)  =\mathcal{C}\left(  2\sin\left(
\gamma/2\right)  \right)  .$ $\mathcal{C}_{0}\left(  \cos\gamma\right)  $
defines an isotropic field on the sphere $\mathbb{S}_{2}$ with spectrum
\begin{equation}
f_{\ell}=2\pi^{2}\int_{0}^{\infty}J_{\ell+1/2}^{2}\left(  \lambda\right)
\frac{1}{\lambda}\Phi\left(  d\lambda\right)  , \label{PoissonForm}%
\end{equation}
where $J_{\ell+1/2}$ denotes the Bessel function of the first kind, see
\cite{Abramowit12}.
\end{example}

\begin{example}
\textbf{Lapalce-Beltrami model on }$\mathbb{S}_{2}$. Consider the homogeneous
isotropic field $X$ on $\mathbb{R}^{3}$ according to the equation
\[
\left(  \bigtriangleup-c^{2}\right)  X=\partial W,
\]
where $\bigtriangleup=\frac{\partial^{2}}{\partial x_{1}^{2}}+\frac
{\partial^{2}}{\partial x_{2}^{2}}+\frac{\partial^{2}}{\partial x_{3}^{2}}$,
denotes the Laplace operator on $\mathbb{R}^{3}$. Its spectrum, see
(\cite{Yadrenko1983}), is
\[
S\left(  \lambda\right)  =\frac{2}{\left(  2\pi\right)  ^{2}}\frac{\lambda
^{2}}{\left(  \lambda^{2}+c^{2}\right)  ^{2}},\quad\lambda^{2}=\left\Vert
\left(  \lambda_{1},\lambda_{2},\lambda_{3}\right)  \right\Vert ^{2},
\]
with covariance of Mat\'{e}rn Class
\[
\mathcal{C}\left(  r\right)  =\frac{1}{\left(  2\pi\right)  ^{3/2}}%
\frac{\left(  cr\right)  ^{1/2}K_{1/2}\left(  cr\right)  }{2c},
\]
where $K_{1/2}$ is the modified Bessel (Hankel) function, see
\cite{Abramowit12}. \newline Now the according to the Lapalce-Beltrami
operator
\[
\bigtriangleup_{B}=\frac{1}{\sin\vartheta}\frac{\partial}{\partial\vartheta
}\left(  \sin\vartheta\frac{\partial}{\partial\vartheta}\right)  +\frac
{1}{\sin^{2}\vartheta}\frac{\partial^{2}}{\partial\varphi^{2}},
\]
we consider the stochastic model
\[
\left(  \bigtriangleup_{B}-c^{2}\right)  X_{B}=\partial W_{B},
\]
on sphere. The covariance function $\mathcal{C}_{0}$ of $X_{B}$ is the
restriction of the covariance function $\mathcal{C}$ of $X$ on sphere and
$\mathcal{C}_{0}\left(  \cos\gamma\right)  =\mathcal{C}\left(  2\sin\left(
\gamma/2\right)  \right)  $, i.e.
\[
\mathcal{C}_{0}\left(  \cos\gamma\right)  =\frac{1}{\left(  2\pi\right)
^{3/2}}\sqrt{\frac{\sin\left(  \gamma/2\right)  }{2c}}K_{1/2}\left(
2c\sin\left(  \gamma/2\right)  \right)  .
\]
We apply the Poisson formula (\ref{PoissonForm}) when $\Phi\left(
d\lambda\right)  =S\left(  \lambda\right)  d\lambda$, and we obtain the
spectrum for $X_{B}$
\begin{align*}
f_{\ell} &  =2\pi^{2}\int_{0}^{\infty}J_{\ell+1/2}^{2}\left(  \lambda\right)
\frac{1}{\lambda}\frac{2}{\left(  2\pi\right)  ^{2}}\frac{\lambda^{2}}{\left(
\lambda^{2}+c^{2}\right)  ^{2}}d\lambda\\
&  =\int_{0}^{\infty}J_{\ell+1/2}^{2}\left(  \lambda\right)  \frac{\lambda
}{\left(  \lambda^{2}+c^{2}\right)  ^{2}}d\lambda,\\
&  =\frac{1}{\left(  \ell\left(  \ell+1\right)  +c^{2}\right)  ^{2}}.
\end{align*}

\end{example}

\section{Non-Gaussian isotropy and the angular spectrum}

In a physical phenomenon the isotropic property is treated as a principle. It
means that there is no reason to make difference between directions. The
corresponding property of a stochastic field is that the finite dimensional
distributions remain unchanged after rotating the space.

\subsection{Isotropy on sphere}

\begin{definition}
A stochastic field $X\left(  L\right)  $ on the unit sphere $\mathbb{S}_{2}$
is isotropic (in strict sense) if all finite dimensional distributions of
$\left\{  X\left(  L\right)  ,L\in\mathbb{S}_{2}\right\}  $ are invariant
under the rotation $g$ for every $g\in SO\left(  3\right)  $.
\end{definition}

From now on we do not assume Gaussianity. But, for the simplicity, we suppose
the existence of moments and that those determine the distribution as well.
The general form of a mean square continuous field is given in terms of
spherical harmonics
\begin{equation}
X\left(  L\right)  =\sum_{\ell=0}^{\infty}\sum_{m=-\ell}^{\ell}Z_{\ell}%
^{m}Y_{\ell}^{m}\left(  L\right)  , \label{X_general}%
\end{equation}
where $\left\{  Z_{\ell}^{m};\ell=0,1,\ldots;\text{ }m=-\ell,1-\ell
,\ldots,-1,0,1,\ldots,\ell-1,\ell\right\}  $ is a triangular array, in the row
$\ell$ we have $2\ell+1$ elements. Notice that $Y_{0}^{0}=1$, put $Z_{0}%
^{0}=\mu,$ otherwise $\mathsf{E}Z_{\ell}^{m}=0,$ therefore $\mathsf{E}X\left(
L\right)  =\mu,$ and the convergence is valid in mean square sense
\cite{Leonenko1999}. This development follows also from the stochastic
Peter-Weyl Theorem, see \cite{Marinucci2011} for details.

In case the $m^{th}$ order cumulants of $X\ $are invariant under the rotation
$g$ for every $g\in SO\left(  3\right)  $ then it will be called
\textit{isotropic in }$m^{th}$ order. Naturally a strictly isotropic field
with $m^{th}$ order moments is isotropic in $m^{th}$ order.

We list here some properties which follows from isotropy, see \cite{Baldi2007}
for details,

\begin{enumerate}
\item The distribution of both $\operatorname*{Re}Z_{\ell}^{m}\overset{d}{=}%
\operatorname*{Im}Z_{\ell}^{m}$ and $\operatorname*{Re}Z_{\ell}^{m}%
/\operatorname*{Im}Z_{\ell}^{m}\ $are \textit{Cauchy.}

\item Marginal distribution of both $\operatorname*{Re}Z_{\ell}^{m}$ and
$\operatorname*{Im}Z_{\ell}^{m}$ are always symmetric

\item For fixed $\ell$, $Z_{\ell}^{m}$, $m=0,1,\ldots,\ell-1,\ell$ are
independent iff they are Gaussian.
\end{enumerate}

Now, let us consider a rotation $g\in SO\left(  3\right)  $, it is known that
the spherical harmonics $Y_{\ell}^{m}$ at the rotated location are given in
terms of the Wigner D-matrix, see \ref{Wigner_D_matrix} Appendix
\ref{App_Spher_Harmonics}, more precisely
\[
\Lambda\left(  g\right)  Y_{\ell}^{m}\left(  L\right)  =\sum_{k=-\ell}^{\ell
}D_{k,m}^{\left(  \ell\right)  }\left(  g\right)  Y_{\ell}^{k}\left(
L\right)
\]
where $\Lambda\left(  g\right)  $ denotes the operator according to the
rotation $g$, $\Lambda\left(  g\right)  Y_{\ell}^{k}\left(  L\right)
=Y_{\ell}^{k}\left(  g^{-1}L\right)  $. Hence the rotated field has the
following form%
\begin{align*}
\Lambda\left(  g\right)  X\left(  L\right)   &  =\sum_{\ell=0}^{\infty}%
\sum_{m=-\ell}^{\ell}Z_{\ell}^{m}\sum_{k=-\ell}^{\ell}D_{k,m}^{\left(
\ell\right)  }\left(  g\right)  Y_{\ell}^{k}\left(  L\right) \\
&  =\sum_{\ell=0}^{\infty}\sum_{k=-\ell}^{\ell}\sum_{m=-\ell}^{\ell}%
D_{k,m}^{\left(  \ell\right)  }\left(  g\right)  Z_{\ell}^{m}Y_{\ell}%
^{k}\left(  L\right)  .
\end{align*}

The isotropy assumption is equivalent to that the distribution of the variable%
\begin{equation}
\mathcal{Z}_{\ell}^{k}\left(  g\right)  =\sum_{m=-\ell}^{\ell}D_{k,m}^{\left(
\ell\right)  }\left(  g\right)  Z_{\ell}^{m}, \label{Z_rotated}%
\end{equation}
is the same as the one of $Z_{\ell}^{k}$. This statement will be used
frequently below.

In matrix form
\[
\mathcal{Z}_{\ell}\left(  g\right)  =D^{\left(  \ell\right)  }\left(
g\right)  Z_{\ell},
\]
where $Z_{\ell}=\left(  Z_{\ell}^{-\ell},Z_{\ell}^{-\ell+1},\ldots,Z_{\ell
}^{\ell}\right)  ^{\top}$, $D^{\left(  \ell\right)  }=\left(  D_{k,m}^{\left(
\ell\right)  }\right)  _{-\ell,-\ell}^{\ell,\ell}$ (the dependence on $g$ will
be omitted unless it is necessary). This equation provides an equation for the
cumulant functions $\Phi_{Z}^{\ell}$ (log of the characteristic function) as
well, in case of isotropy, for each rotation $g$ we have
\[
\Phi_{Z}^{\ell}\left(  \underline{\omega}_{\ell}\right)  =\Phi_{Z}^{\ell
}\left(  \underline{\omega}_{\ell}D^{\left(  \ell\right)  }\left(  g\right)
\right)  ,
\]
where $\underline{\omega}_{\ell}=\left(  \omega_{-\ell},\omega_{-\ell
+1},\ldots,\omega_{\ell-1},\omega_{\ell}\right)  $. Hence the distribution of
$Z_{\ell}$ should be rotationally invariant on $\mathbb{R}^{2\ell+1}$. For
instance the covariance matrix $\mathcal{C}_{Z}\left(  \ell_{1},\ell
_{2}\right)  =\operatorname*{Cov}\left(  Z_{\ell_{1}},Z_{\ell_{2}}\right)  $
commutes with $D^{\left(  \ell_{n}\right)  }$, since $D^{\left(  \ell
_{n}\right)  }$ is unitary and $\mathcal{C}_{Z}\left(  \ell_{1},\ell
_{2}\right)  =D^{\left(  \ell_{1}\right)  \ast}\mathcal{C}_{Z}\left(  \ell
_{1},\ell_{2}\right)  D^{\left(  \ell_{2}\right)  }$. If $\ell_{1}=\ell
_{2}=\ell$, the only matrix which commutes with $D^{\left(  \ell\right)
}\left(  g\right)  $, for any $g\in SO\left(  3\right)  $ is constant times
unit matrix.

Assume for a moment that $\operatorname*{Cum}\nolimits_{2}\left(  Z_{\ell_{1}%
}^{k},Z_{\ell_{2}}^{m}\right)  $ does depend on $\ell_{1},k,\ell_{2},m$, then
we show below that from the isotropy $\operatorname*{Cum}\nolimits_{2}\left(
\mathcal{Z}_{\ell_{1}}^{k},\mathcal{Z}_{\ell_{2}}^{m\ast}\right)
=\delta_{\ell_{1},\ell_{2}}\delta_{k,m}f_{\ell_{1}}$, follows. Hence we should
restrict ourselves on an uncorrelated generating array $Z_{\ell}^{m},$ with
$\;\mathsf{E}Z_{\ell}^{m}Z_{k}^{n\ast}=\delta_{\ell,k}\delta_{m,n}\sigma
_{\ell,m}^{2}$. Since $\sigma_{\ell,m}^{2}$ does not depend on $m$ we denote
it $f_{\ell}$, moreover we shall consider the field $X$ in the following form%
\[
X\left(  L\right)  =\sum_{\ell=0}^{\infty}\sum_{m=-\ell}^{\ell}Z_{\ell}%
^{m}Y_{\ell}^{m}\left(  L\right)  ,
\]
where $\mathsf{E}Z_{\ell}^{m}Z_{k}^{n\ast}=\delta_{\ell,k}\delta_{m,n}f_{\ell
}$. In other words it is seen that the second and higher order structure of
the generating process $Z_{\ell}^{m}$ inside the same degree $\ell$ are
hiding, they are not identifiable.

According to the angular momentum of degree $\ell$ we define the field
\begin{equation}
u_{\ell}\left(  L\right)  =\sum_{m=-\ell}^{\ell}Z_{\ell}^{m}Y_{\ell}%
^{m}\left(  L\right)  . \label{FieldAngMom}%
\end{equation}
We shall be interested in the invariance of the distribution of $u_{\ell
}\left(  L\right)  $ under rotations as well. If the location is fixed then
$u_{\ell}\left(  L\right)  $ is connected to the distribution of
$\mathcal{Z}_{\ell}^{k}$ directly, since the Condon and Shortley phase
convention $\sqrt{\frac{2\ell+1}{4\pi}}D_{m,0}^{\left(  \ell\right)  }\left(
g\right)  =Y_{\ell}^{m\ast}\left(  \vartheta,\varphi\right)  $, where the
rotation $g$ given in Euler coordinates $\left(  \gamma,\vartheta
,\varphi\right)  ,$ provides
\[
u_{\ell}\left(  L\right)  =\sqrt{\frac{2\ell+1}{4\pi}}\sum_{m=-\ell}^{\ell
}D_{0,m}^{\left(  \ell\right)  \ast}\left(  \gamma,\vartheta,\varphi\right)
Z_{\ell}^{m},
\]
where the notation $\ast$ defined as the transpose and conjugate for a matrix
and just conjugate for a scalar. Observe that $\gamma$ here is arbitrary. Our
main interest is the comparison of the finite dimensional distributions of the
field $u_{\ell}\left(  L\right)  $ to the rotated one. In case a rotation
carries the location $L$ to the North pole $N$, $u_{\ell}$ simplifies
\[
u_{\ell}\left(  N\right)  =\sqrt{\frac{2\ell+1}{4\pi}}Z_{\ell}^{0}.
\]
Rewrite $X$ into the form%
\[
X\left(  L\right)  =\sum_{\ell=0}^{\infty}u_{\ell}\left(  L\right)  .
\]
We consider a real valued $X\left(  L\right)  $, therefore $Z_{\ell}^{m\ast
}\overset{d}{=}\left(  -1\right)  ^{m}Z_{\ell}^{-m}$, since $Y_{\ell}%
^{m}\left(  L\right)  ^{\ast}=\left(  -1\right)  ^{m}Y_{\ell}^{-m}\left(
L\right)  $. Moreover if we reflect the location to the center then from
$Y_{\ell}^{m}\left(  -L\right)  =\left(  -1\right)  ^{\ell}Y_{\ell}^{m}\left(
L\right)  $ follows
\begin{equation}
X\left(  -L\right)  =\sum_{\ell=0}^{\infty}\sum_{m=-\ell}^{\ell}\left(
-1\right)  ^{\ell}Z_{\ell}^{m}Y_{\ell}^{m}\left(  L\right)  . \label{Parity}%
\end{equation}
Therefore we have the following

\begin{remark}
\label{Rem_parity}Since the parity $u_{\ell}\left(  -L\right)  =\left(
-1\right)  ^{\ell}u_{\ell}\left(  L\right)  $ is valid, it implies that under
assumption of isotropy
\[
\operatorname*{Cum}\nolimits_{p}\left(  u_{\ell_{1}}\left(  L_{1}\right)
,\ldots,u_{\ell_{p}}\left(  L_{p}\right)  \right)  =\left(  -1\right)
^{\ell_{1}+\ell_{2}+\cdots+\ell_{p}}\operatorname*{Cum}\nolimits_{p}\left(
u_{\ell_{1}}\left(  L_{1}\right)  ,\ldots,u_{\ell_{p}}\left(  L_{p}\right)
\right)  ,
\]
hence for $p>2$ the sum $\mathcal{L}_{p}=\ell_{1}+\ell_{2}+\cdots+\ell_{p}$
must be even.
\end{remark}

\subsubsection{Second order Isotropy}

Consider the second order cumulants of $\mathcal{Z}_{\ell}^{k}$, defined in
(\ref{Z_rotated}),
\begin{align*}
\operatorname*{Cum}\nolimits_{2}\left(  \mathcal{Z}_{\ell_{1}}^{k}%
,\mathcal{Z}_{\ell_{2}}^{m}\right)   &  =\sum_{p,q=-\ell_{1},\ell_{2}}%
^{\ell_{1},\ell_{2}}D_{k,p}^{\left(  \ell_{1}\right)  }D_{m,q}^{\left(
\ell_{2}\right)  }\operatorname*{Cum}\nolimits_{2}\left(  Z_{\ell_{1}}%
^{p},Z_{\ell_{2}}^{q}\right) \\
&  =\sum_{p,q=-\ell_{1},\ell_{2}}^{\ell_{1},\ell_{2}}D_{k,p}^{\left(  \ell
_{1}\right)  }\left(  -1\right)  ^{p}D_{m,q}^{\left(  \ell_{2}\right)
}\operatorname*{Cum}\nolimits_{2}\left(  Z_{\ell_{1}}^{-p\ast},Z_{\ell_{2}%
}^{q}\right)  .
\end{align*}
Now assume isotropy, $\operatorname*{Cum}\nolimits_{2}\left(  \mathcal{Z}%
_{\ell_{1}}^{k},\mathcal{Z}_{\ell_{2}}^{m}\right)  =\operatorname*{Cum}%
\nolimits_{2}\left(  Z_{\ell_{1}}^{k},Z_{\ell_{2}}^{m}\right)  $ and integrate
both sides over the sphere according to the invariant Haar measure, we have
\begin{align*}
\operatorname*{Cum}\nolimits_{2}\left(  Z_{\ell_{1}}^{k},Z_{\ell_{2}}%
^{m}\right)   &  =\sum_{p,q=-\ell_{1},\ell_{2}}^{\ell_{1},\ell_{2}}%
\frac{\delta_{\ell_{1},\ell_{2}}\delta_{-p,q}\delta_{-k,m}\left(  -1\right)
^{k}}{2\ell_{1}+1}\operatorname*{Cum}\nolimits_{2}\left(  Z_{\ell_{1}}%
^{-p\ast},Z_{\ell_{2}}^{q}\right) \\
&  =\frac{\delta_{-k,m}\left(  -1\right)  ^{m}}{2\ell_{1}+1}\sum_{p=-\ell_{1}%
}^{\ell_{1}}\delta_{\ell_{1},\ell_{2}}\operatorname*{Cum}\nolimits_{2}\left(
Z_{\ell_{1}}^{-p\ast},Z_{\ell_{2}}^{p}\right) \\
&  =\delta_{\ell_{1},\ell_{2}}\delta_{-k,m}\left(  -1\right)  ^{m}C_{2}\left(
\ell_{1}\right)
\end{align*}
where%
\[
C_{2}\left(  \ell_{1}\right)  =\frac{1}{2\ell_{1}+1}\sum_{p=-\ell_{1}}%
^{\ell_{1}}\operatorname*{Cum}\nolimits_{2}\left(  Z_{\ell_{1}}^{-p\ast
},Z_{\ell_{1}}^{p}\right)  .
\]
Hence $\operatorname*{Cum}\nolimits_{2}\left(  Z_{\ell_{1}}^{k},Z_{\ell_{2}%
}^{m\ast}\right)  =\delta_{\ell_{1},\ell_{2}}\delta_{k,m}f_{\ell_{1}}$, i.e.
the series $Z_{\ell}^{k}$ is uncorrelated.

If the series $Z_{\ell}^{k}$ is uncorrelated then
\begin{align*}
\operatorname*{Cum}\nolimits_{2}\left(  \mathcal{Z}_{\ell_{1}}^{k}%
,\mathcal{Z}_{\ell_{2}}^{m}\right)   &  =\sum_{p,q=-\ell_{1},\ell_{2}}%
^{\ell_{1},\ell_{2}}D_{k,p}^{\left(  \ell_{1}\right)  }D_{m,q}^{\left(
\ell_{2}\right)  }\operatorname*{Cum}\nolimits_{2}\left(  Z_{\ell_{1}}%
^{p},Z_{\ell_{2}}^{q}\right) \\
&  =\delta_{\ell_{1},\ell_{2}}\sum_{p,q=-\ell_{1}}^{\ell_{1}}D_{k,p}^{\left(
\ell_{1}\right)  }D_{m,q}^{\left(  \ell_{1}\right)  }\left(  -1\right)
^{p}\operatorname*{Cum}\nolimits_{2}\left(  Z_{\ell_{1}}^{-p\ast},Z_{\ell_{1}%
}^{q}\right) \\
&  =\delta_{\ell_{1},\ell_{2}}\sum_{p=-\ell_{1}}^{\ell_{1}}D_{-p,-k}^{\left(
\ell_{1}\right)  \ast}D_{m,q}^{\left(  \ell_{1}\right)  }\left(  -1\right)
^{k}\delta_{-p,q}f_{\ell_{1}}\\
&  =\delta_{\ell_{1},\ell_{2}}\delta_{-k,m}\left(  -1\right)  ^{m}f_{\ell_{1}}%
\end{align*}
since $D_{k,m}^{\left(  \ell\right)  }$ is unitary, hence $\operatorname*{Cum}%
\nolimits_{2}\left(  \mathcal{Z}_{\ell_{1}}^{k},\mathcal{Z}_{\ell_{2}}%
^{-m\ast}\right)  =\delta_{\ell_{1},\ell_{2}}\delta_{-k,m}f_{\ell_{1}%
}=\operatorname*{Cum}\nolimits_{2}\left(  Z_{\ell_{1}}^{k},Z_{\ell_{2}%
}^{-m\ast}\right)  $.

\begin{lemma}
The field $X\left(  L\right)  $ is isotropic in second order iff the
triangular series $Z_{\ell}^{k}$ is uncorrelated with variance $f_{\ell}$.
\end{lemma}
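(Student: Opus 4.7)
The statement is essentially already assembled from the computations in the preceding paragraphs; my plan is to organize them into the two implications and isolate the one genuinely conceptual point.

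\emph{Isotropy in second order $\Rightarrow$ uncorrelated with $m$-independent variance.} Second-order isotropy means that for every $g \in SO(3)$ the rotated vectors $\mathcal{Z}_\ell(g) = D^{(\ell)}(g) Z_\ell$ satisfy $\operatorname{Cum}_2(\mathcal{Z}_{\ell_1}^k, \mathcal{Z}_{\ell_2}^m) = \operatorname{Cum}_2(Z_{\ell_1}^k, Z_{\ell_2}^m)$. I would expand the left-hand side bilinearly in the Wigner D-matrix entries, convert the second slot using the reality relation $Z_\ell^{m\ast} = (-1)^m Z_\ell^{-m}$, and then average both sides over $g$ with respect to the normalized Haar measure on $SO(3)$. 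The key ingredient is the orthogonality relation for D-matrix coefficients, which collapses the double sum to a Kronecker structure forcing $\ell_1 = \ell_2$ and $-k = m$, and producing a single scalar $f_{\ell_1}$ equal to the averaged diagonal cumulants. After reconjugation this gives $\operatorname{Cum}_2(Z_{\ell_1}^k, Z_{\ell_2}^{m\ast}) = \delta_{\ell_1,\ell_2}\delta_{k,m} f_{\ell_1}$, which is the first display in the paragraph preceding the lemma.

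\emph{Uncorrelated with variance $f_\ell$ $\Rightarrow$ second-order isotropy.} Here I would substitute the uncorrelatedness $\operatorname{Cum}_2(Z_{\ell_1}^p, Z_{\ell_2}^{q\ast}) = \delta_{\ell_1,\ell_2}\delta_{p,q} f_{\ell_1}$ directly into the bilinear expansion of $\operatorname{Cum}_2(\mathcal{Z}_{\ell_1}^k, \mathcal{Z}_{\ell_2}^{m\ast})$ and invoke unitarity of $D^{(\ell)}$ in the form $\sum_p D_{k,p}^{(\ell)} D_{m,p}^{(\ell)\ast} = \delta_{k,m}$. This reproduces the same second-order cumulant as for $Z$ itself, so the covariance structure is rotation-invariant; since under the standing moment assumption the second-order cumulants determine the second-order distribution, this is second-order isotropy.

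\emph{Main obstacle.} The only substantive (non-calculational) point is the $m$-independence of $f_\ell$. Conceptually this is Schur's lemma for the irreducible representation $D^{(\ell)}$ of $SO(3)$: the covariance matrix of $Z_\ell$ commutes with every $D^{(\ell)}(g)$ and must therefore be a scalar multiple of the identity, as already noted in the excerpt. The Haar-average argument above is exactly the coordinate-form manifestation of this fact, and I would flag the connection without re-deriving Schur's lemma.
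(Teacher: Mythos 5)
Your proof is correct and follows essentially the same route as the paper: Haar-averaging the rotated second-order cumulant and applying the orthogonality relation for Wigner D-matrix entries for necessity, and unitarity of $D^{(\ell)}$ for sufficiency, with the $m$-independence of $f_\ell$ being the Schur-type consequence the paper also invokes. The only superfluous step is your appeal to "cumulants determine the distribution" at the end of the converse; since the paper defines second-order isotropy as invariance of the second-order cumulants themselves, that invariance is already the conclusion.
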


We conclude that a field $X\left(  L\right)  $ with Gaussian i.i.d. $Z_{\ell
}^{m}$ is strictly isotropic. It follows from the isotropy that $u_{\ell_{1}}$
and $u_{\ell_{2}}$with different degrees $\ell_{1}$ and $\ell_{2}$ are
uncorrelated, and
\begin{align*}
\operatorname*{Cum}\nolimits_{2}\left(  u_{\ell}\left(  L_{1}\right)
,u_{\ell}\left(  L_{2}\right)  \right)   &  =\sum_{m_{1},m_{2}=-\ell}^{\ell
}\operatorname*{Cum}\nolimits_{2}\left(  Z_{\ell}^{m_{1}},Z_{\ell}^{m_{2}%
}\right)  Y_{\ell}^{m_{1}}\left(  L_{1}\right)  Y_{\ell}^{m_{2}}\left(
L_{2}\right) \\
&  =\sum_{m_{2}=-\ell}^{\ell}f_{\ell}Y_{\ell}^{-m_{2}}\left(  L_{1}\right)
Y_{\ell}^{m_{2}}\left(  L_{2}\right)  \left(  -1\right)  ^{m_{2}}\\
&  =f_{\ell}\sum_{m_{2}=-\ell}^{\ell}Y_{\ell}^{m_{2}\ast}\left(  L_{1}\right)
Y_{\ell}^{m_{2}}\left(  L_{2}\right) \\
&  =\frac{2\ell+1}{4\pi}f_{\ell}P_{\ell}\left(  L_{1}\cdot L_{2}\right)  ,
\end{align*}
by the addition formula see (\ref{FormulaAddition}). In particular
$\operatorname*{Var}\left(  u_{\ell}\left(  L\right)  \right)  =\frac{2\ell
+1}{4\pi}f_{\ell}$. Instead of the addition formula one arrives to the same
result if applies the isotropy for $\operatorname*{Cum}\nolimits_{2}\left(
u_{\ell}\left(  L_{1}\right)  ,u_{\ell}\left(  L_{2}\right)  \right)  $.
Rotate the locations $L_{1}$ and $L_{2}$ such that $L_{2}$ becomes the North
pole $N$ and at the same time $L_{1}$ belongs to the plane $xOz.$ This
rotation will be denoted by $g_{L_{2}L_{1}}$. We have $\operatorname*{Cum}%
\nolimits_{2}\left(  u_{\ell}\left(  L_{1}\right)  ,u_{\ell}\left(
L_{2}\right)  \right)  =\operatorname*{Cum}\nolimits_{2}\left(  u_{\ell
}\left(  g_{L_{2}L_{1}}L_{1}\right)  ,u_{\ell}\left(  N\right)  \right)  ,$
since $u_{\ell}\left(  N\right)  =\sqrt{\frac{2\ell+1}{4\pi}}Z_{\ell}^{0}$ and
\newline$Y_{\ell}^{0}\left(  \vartheta,\varphi\right)  =\sqrt{\frac{2\ell
+1}{4\pi}}P_{\ell}\left(  \cos\vartheta\right)  $,
\begin{align*}
\operatorname*{Cum}\nolimits_{2}\left(  u_{\ell}\left(  L_{1}\right)
,u_{\ell}\left(  L_{2}\right)  \right)   &  =\operatorname*{Cum}%
\nolimits_{2}\left(  u_{\ell}\left(  g_{L_{2}L_{1}}L_{1}\right)  ,\sqrt
{\frac{2\ell+1}{4\pi}}Z_{\ell}^{0}\right) \\
&  =\operatorname*{Cum}\nolimits_{2}\left(  \sqrt{\frac{2\ell+1}{4\pi}}%
P_{\ell}\left(  g_{L_{2}L_{1}}L_{1}\cdot N\right)  Z_{\ell}^{0},\sqrt
{\frac{2\ell+1}{4\pi}}Z_{\ell}^{0}\right) \\
&  =\frac{2\ell+1}{4\pi}f_{\ell}P_{\ell}\left(  L_{1}\cdot L_{2}\right)  .
\end{align*}

\subsection{Spectrum}

Consider the covariance function $\mathcal{C}_{2}\left(  L_{1},L_{2}\right)
=\mathsf{E}\left(  X\left(  L_{1}\right)  -\mu\right)  \left(  X\left(
L_{2}\right)  -\mu\right)  $ for an isotropic field. Let the rotation
$g_{L_{2}L_{1}}\ $be the one which takes the location $L_{2}$ into the North
pole $N$, and $L_{1}$ into the plane $xOz.$ The Euler coordinates of
$g_{L_{2}L_{1}}L_{1}$ is the co-latitude angle $\vartheta$ and $0$,since the
rotation does not change the angle $\vartheta$ between $L_{1}\ $and $L_{2}$,
such that $\cos\vartheta=L_{1}\cdot L_{2}$. Under isotropy assumption the
joint distribution of $X\left(  L_{1}\right)  $ and $X\left(  L_{2}\right)  $
equals to the joint distribution of $X\left(  N\right)  $ and $X\left(
g_{L_{2}L_{1}}L_{1}\right)  $, i.e. $X\left(  N\right)  $ and $X\left(
\vartheta,0\right)  $ contain all pairwise information. In other words the
covariance of an isotropic field depends on $\vartheta$ only,
$\operatorname*{Cum}\nolimits_{2}\left(  X\left(  L_{1}\right)  ,X\left(
L_{2}\right)  \right)  =\operatorname{Cov}\left(  X\left(  L_{1}\right)
,X\left(  L_{2}\right)  \right)  =\mathcal{C}_{2}\left(  g_{L_{2}L_{1}}%
L_{1},N\right)  $, necessarily $\mathcal{C}_{2}\left(  L_{1},L_{2}\right)
=\mathcal{C}_{2}\left(  L_{1}\cdot L_{2}\right)  =\mathcal{C}\left(
\cos\vartheta\right)  $. Now
\begin{align*}
X\left(  N\right)   &  =\sum_{\ell=0}^{\infty}\sum_{m=-\ell}^{\ell}Y_{\ell
}^{m}\left(  N\right)  Z_{\ell}^{m}\\
&  =\sum_{\ell=0}^{\infty}\sum_{m=-\ell}^{\ell}\delta_{m,0}\sqrt{\frac
{2\ell+1}{4\pi}}Z_{\ell}^{m}\\
&  =\sum_{\ell=0}^{\infty}\sqrt{\frac{2\ell+1}{4\pi}}Z_{\ell}^{0},
\end{align*}
and
\[
X\left(  g_{L_{2}L_{1}}L_{1}\right)  =\sum_{\ell=0}^{\infty}\sum_{m=-\ell
}^{\ell}Y_{\ell}^{m}\left(  g_{L_{2}L_{1}}L_{1}\right)  Z_{\ell}^{m},
\]
We have $\operatorname*{Cum}\nolimits_{2}\left(  Z_{\ell}^{0},Z_{k}%
^{n}\right)  =\delta_{\ell,k}\delta_{0,n}f_{\ell}$, hence%
\begin{align*}
\mathcal{C}_{2}\left(  L_{1},L_{2}\right)   &  =\sum_{\ell=0}^{\infty}f_{\ell
}\sqrt{\frac{2\ell+1}{4\pi}}Y_{\ell}^{0}\left(  g_{L_{2}L_{1}}L_{1}\right) \\
&  =\sum_{\ell=0}^{\infty}f_{\ell}\frac{2\ell+1}{4\pi}P_{\ell}\left(
\cos\vartheta\right)  .
\end{align*}
Similarly to the time series setup the covariance $\mathcal{C}_{2}\left(
L_{1},L_{2}\right)  $ is expanded in terms of an orthonormed system
$\frac{2\ell+1}{4\pi}P_{\ell}\left(  \cos\vartheta\right)  $ with coefficients
$f_{\ell}$. In particular the variance $\operatorname*{Var}\left(  X\left(
L\right)  \right)  $ is split up into the superposition of $f_{\ell}$'s. Hence
$S_{2}\left(  \ell\right)  =f_{\ell}$ is called \textbf{spectrum} of the field
$X\left(  L\right)  $ according to the frequency $\ell$.

\section{Bispectrum}

If the field $X\left(  L\right)  $ is non-Gaussian then the first
characteristic after the second order moments to be considered is the third
order cumulant, referred to bicovariance or 3-point covariance also since it
is the third order central moment. The corresponding quantity in frequency
domain is the bispectrum. Similarly to the spectrum when the covariance
(second order cumulant)%
\[
\operatorname*{Cum}\nolimits_{2}\left(  X\left(  L_{1}\right)  ,X\left(
L_{2}\right)  \right)  =\sum_{\ell=0}^{\infty}f_{\ell}\frac{2\ell+1}{4\pi
}P_{\ell}\left(  \cos\vartheta\right)  ,
\]
is split up to the superposition of some orthogonal system of functions and
the coefficients are called spectrum. We see that the orthogonal and normed
system is the Legendre polynomial system $\left\{  \frac{2\ell+1}{4\pi}%
P_{\ell}\right\}  $ and the spectrum $S_{2}\left(  \ell\right)  =f_{\ell}$
corresponds to the frequency $\ell$. We put a similar question for the higher
order angular spectra as well. Namely, under assumption of isotropy of the
$m^{th}$ order cumulant we consider the series expansion according to an
orthonormed system and the coefficients will be called $m^{th}$ order
spectrum. The Wigner 3j symbols
\[%
\begin{pmatrix}
\ell_{1} & \ell_{2} & \ell_{3}\\
m_{1} & m_{2} & m_{3}%
\end{pmatrix}
\]
will be intensively used from now on, see Appendix \ref{App_Spher_Harmonics},
\ref{Wigner_3j}. It depends on the quantum numbers $\ell_{1},\ell_{2},\ell
_{3}$ called degrees and orders $m_{1},m_{2},m_{3}$.

\subsection{Isotropy in third order}

The rotation applied to he triangular array $Z_{\ell}^{m}$ shows the necessary
and sufficient condition for the third order isotropy.

\begin{lemma}
\label{Lemm_Iso3}The field $X\left(  L\right)  $ is isotropic in third order
iff the bicovariance \newline$\operatorname*{Cum}\nolimits_{3}\left(
Z_{\ell_{1}}^{m_{1}},Z_{\ell_{2}}^{m_{2}},Z_{\ell_{3}}^{m_{3}}\right)  $ of
the triangular series $Z_{\ell}^{m}$ has the form
\begin{equation}
\operatorname*{Cum}\nolimits_{3}\left(  Z_{\ell_{1}}^{m_{1}},Z_{\ell_{2}%
}^{m_{2}},Z_{\ell_{3}}^{m_{3}}\right)  =%
\begin{pmatrix}
\ell_{1} & \ell_{2} & \ell_{3}\\
m_{1} & m_{2} & m_{3}%
\end{pmatrix}
B_{3}\left(  \ell_{1},\ell_{2},\ell_{3}\right)  , \label{BicovZ}%
\end{equation}
with
\[
B_{3}\left(  \ell_{1},\ell_{2},\ell_{3}\right)  =\sum_{k_{1},k_{2},k_{3}}%
\begin{pmatrix}
\ell_{1} & \ell_{2} & \ell_{3}\\
k_{1} & k_{2} & k_{3}%
\end{pmatrix}
\operatorname*{Cum}\nolimits_{3}\left(  Z_{\ell_{1}}^{k_{1}},Z_{\ell_{2}%
}^{k_{2}},Z_{\ell_{3}}^{k_{3}}\right)  .
\]

\end{lemma}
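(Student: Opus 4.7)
The plan is to convert the isotropy statement into a condition on the third-order cumulants of the triangular array $Z_\ell^m$ using the rotation formula (\ref{Z_rotated}), then exploit the integral orthogonality of triple products of Wigner D-matrices to extract the claimed 3j form.

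First I would observe that, by the paragraph preceding the lemma, third-order isotropy of $X$ is equivalent to the requirement that the third-order cumulant of the rotated coefficients $\mathcal{Z}_\ell^k(g)=\sum_m D_{k,m}^{(\ell)}(g)Z_\ell^m$ agrees with that of $Z_\ell^k$ for every $g\in SO(3)$. Using trilinearity of cumulants this reads
\[
\operatorname*{Cum}\nolimits_{3}\!\left(Z_{\ell_{1}}^{k_{1}},Z_{\ell_{2}}^{k_{2}},Z_{\ell_{3}}^{k_{3}}\right)=\sum_{m_{1},m_{2},m_{3}}D_{k_{1},m_{1}}^{(\ell_{1})}(g)\,D_{k_{2},m_{2}}^{(\ell_{2})}(g)\,D_{k_{3},m_{3}}^{(\ell_{3})}(g)\,\operatorname*{Cum}\nolimits_{3}\!\left(Z_{\ell_{1}}^{m_{1}},Z_{\ell_{2}}^{m_{2}},Z_{\ell_{3}}^{m_{3}}\right),
\]
valid for all $g$.

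For the necessity direction, I would apply the normalized Haar integral $\frac{1}{8\pi^{2}}\int_{SO(3)}\,dg$ to both sides. The left-hand side is constant in $g$, and the integral on the right can be evaluated by the classical orthogonality formula
\[
\frac{1}{8\pi^{2}}\int_{SO(3)}D_{k_{1},m_{1}}^{(\ell_{1})}(g)D_{k_{2},m_{2}}^{(\ell_{2})}(g)D_{k_{3},m_{3}}^{(\ell_{3})}(g)\,dg=\begin{pmatrix}\ell_{1}&\ell_{2}&\ell_{3}\\k_{1}&k_{2}&k_{3}\end{pmatrix}\begin{pmatrix}\ell_{1}&\ell_{2}&\ell_{3}\\m_{1}&m_{2}&m_{3}\end{pmatrix},
\]
(the triple-D integration identity recalled in the Appendix on spherical harmonics). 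Substituting, the Wigner 3j symbol in the indices $k_{i}$ factors out, while the sum over the $m_{i}$ of the other 3j symbol against the cumulants is precisely the definition of $B_{3}(\ell_{1},\ell_{2},\ell_{3})$, giving the stated formula (\ref{BicovZ}).

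For the converse, I would substitute the ansatz (\ref{BicovZ}) into the trilinear expansion above and use the defining invariance property of the Wigner 3j symbol,
\[
\sum_{m_{1},m_{2},m_{3}}D_{k_{1},m_{1}}^{(\ell_{1})}(g)D_{k_{2},m_{2}}^{(\ell_{2})}(g)D_{k_{3},m_{3}}^{(\ell_{3})}(g)\begin{pmatrix}\ell_{1}&\ell_{2}&\ell_{3}\\m_{1}&m_{2}&m_{3}\end{pmatrix}=\begin{pmatrix}\ell_{1}&\ell_{2}&\ell_{3}\\k_{1}&k_{2}&k_{3}\end{pmatrix},
\]
which is the content of the 3j symbol being a rotation-invariant intertwiner. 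The rotated cumulant then coincides with the unrotated one for every $g$; since finite-dimensional cumulants of $X$ at arbitrary triples $(L_1,L_2,L_3)$ are linear combinations of the cumulants of the $Z_\ell^m$ weighted by products of $Y_{\ell}^{m}$'s, the invariance transfers to $X$ and yields third-order isotropy. The main delicate point I anticipate is just bookkeeping: ensuring that the convention for $\operatorname*{Cum}\nolimits_{3}$ is used without conjugation (as done throughout this section so that reality conditions on $X$ play no role in the identity itself) and checking that the Haar normalization and the 3j integral identity are stated in the conventions compatible with those of the appendix.
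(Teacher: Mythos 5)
Your proposal is correct and follows essentially the same route as the paper's proof in Appendix C: necessity via Haar integration of the rotated cumulant identity using the triple-D integral (\ref{Int_D_D_D}), and sufficiency via the intertwiner identity (\ref{D_3Prod}). The only cosmetic difference is that the paper's Haar measure $dg$ is already normalized, so no separate $1/8\pi^{2}$ factor is needed.
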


See Appendix \ref{Append_Bisp_Trisp} for the proof. Note here that
$\operatorname*{Cum}\nolimits_{3}\left(  Z_{\ell_{1}}^{m_{1}},Z_{\ell_{2}%
}^{m_{2}},Z_{\ell_{3}}^{m_{3}}\right)  =0$, if $m_{1}+m_{2}+m_{3}\neq0$, that
is not all the possible bicovariances come into the picture, moreover the
cumulants are depending on the orders $m_{1},m_{2},m_{3}$ through the Wigner
3j symbols only. In other words the third order probabilistic property inside
a fixed quantum number $\ell$ does not show up. The function $B_{3}$ of the
frequencies is an average of the cumulants $\operatorname*{Cum}\nolimits_{3}%
\left(  Z_{\ell_{1}}^{k_{1}},Z_{\ell_{2}}^{k_{2}},Z_{\ell_{3}}^{k_{3}}\right)
$ by 'probability' amplitudes, hence it is called as 'angle average
bispectrum', we shall turn back to the notion of the bispectrum later in this section.

Notice first that the quantity
\begin{equation}%
\begin{pmatrix}
\ell_{1} & \ell_{2} & \ell_{3}\\
m_{1} & m_{2} & m_{3}%
\end{pmatrix}
B_{3}\left(  \ell_{1},\ell_{2},\ell_{3}\right)  , \label{Bisp_sym}%
\end{equation}
is symmetric in $\ell_{1},\ell_{2},\ell_{3}$, see Appendix
\ref{Append_Bisp_Trisp}, proof \ref{Bisp_sym_degree}. Moreover $B_{3}\left(
\ell_{1},\ell_{2},\ell_{3}\right)  $ is symmetric in $\ell_{1},\ell_{2}%
,\ell_{3}$ as well, since by parity transformation (\ref{Parity}) $\ell
_{1}+\ell_{2}+\ell_{3}$ must be even, the coefficients $%
\begin{pmatrix}
\ell_{1} & \ell_{2} & \ell_{3}\\
k_{1} & k_{2} & k_{3}%
\end{pmatrix}
$ are symmetric as well.

From now on we \textit{fix the order} of $\ell_{1},\ell_{2},\ell_{3}$ such
that $\ell_{1}\leq\ell_{2}\leq\ell_{3}$.

Now we turn to the angular momentum field $u_{\ell}$. First observe
\begin{multline*}
\operatorname*{Cum}\nolimits_{3}\left(  u_{\ell_{1}}\left(  L_{1}\right)
,u_{\ell_{2}}\left(  L_{2}\right)  ,u_{\ell_{3}}\left(  L_{3}\right)  \right)
=\sum_{m_{1},m_{2},m_{3}}Y_{\ell_{1}}^{m_{1}}\left(  L_{1}\right)  Y_{\ell
_{2}}^{m_{2}}\left(  L_{2}\right)  Y_{\ell_{3}}^{m_{3}}\left(  L_{3}\right)
\operatorname*{Cum}\nolimits_{3}\left(  Z_{\ell_{1:3}}^{m_{1:3}}\right) \\
=\sum_{m_{1},m_{2},m_{3}}Y_{\ell_{1}}^{m_{1}}\left(  L_{1}\right)  Y_{\ell
_{2}}^{m_{2}}\left(  L_{2}\right)  Y_{\ell_{3}}^{m_{3}}\left(  L_{3}\right)
\begin{pmatrix}
\ell_{1} & \ell_{2} & \ell_{3}\\
m_{1} & m_{2} & m_{3}%
\end{pmatrix}
B_{3}\left(  \ell_{1},\ell_{2},\ell_{3}\right) \\
=B_{3}\left(  \ell_{1},\ell_{2},\ell_{3}\right)  \sum_{m_{1},m_{2},m_{3}}%
\begin{pmatrix}
\ell_{1} & \ell_{2} & \ell_{3}\\
m_{1} & m_{2} & m_{3}%
\end{pmatrix}
Y_{\ell_{1}}^{m_{1}}\left(  L_{1}\right)  Y_{\ell_{2}}^{m_{2}}\left(
L_{2}\right)  Y_{\ell_{3}}^{m_{3}}\left(  L_{3}\right)  .
\end{multline*}
This last expression is invariant under both the rotation of $L_{j}$'s and
ordering of $\ell_{1},\ell_{2},\ell_{3}$. The \textit{3-product of spherical
harmonics}\textbf{\ } $Y_{\ell}^{m}$, \textbf{\ }
\[
\widetilde{I}_{\ell_{1},\ell_{2},\ell_{3}}\left(  L_{1},L_{2},L_{3}\right)
=\sum_{m_{1},m_{2},m_{3}}%
\begin{pmatrix}
\ell_{1} & \ell_{2} & \ell_{3}\\
m_{1} & m_{2} & m_{3}%
\end{pmatrix}
Y_{\ell_{1}}^{m_{1}}\left(  L_{1}\right)  Y_{\ell_{2}}^{m_{2}}\left(
L_{2}\right)  Y_{\ell_{3}}^{m_{3}}\left(  L_{3}\right)
\]
is \textit{rotational invariant}, see Appendix (\ref{RotationProd3}),
therefore without any loss of generality we apply the rotation $g_{L_{3}L_{2}%
}$ which takes the location $L_{3}\ $into the North pole $N$ and at the same
time takes $L_{2}$ into the $zOx$- plane
\begin{multline*}
\operatorname*{Cum}\nolimits_{3}\left(  u_{\ell_{1}}\left(  L_{1}\right)
,u_{\ell_{2}}\left(  L_{2}\right)  ,u_{\ell_{3}}\left(  L_{3}\right)  \right)
=\operatorname*{Cum}\nolimits_{3}\left(  u_{\ell_{1}}\left(  g_{L_{3}L_{2}%
}L_{1}\right)  ,u_{\ell_{2}}\left(  g_{L_{3}L_{2}}L_{2}\right)  ,u_{\ell_{3}%
}\left(  N\right)  \right) \\
=B_{3}\left(  \ell_{1},\ell_{2},\ell_{3}\right)  \sum_{m_{1}=-\ell_{1}}%
^{\ell_{1}}\sum_{m_{2}=-\ell_{2}}^{\ell_{2}}Y_{\ell_{1}}^{m_{1}}\left(
g_{L_{3}L_{2}}L_{1}\right)  Y_{\ell_{2}}^{m_{2}}\left(  g_{L_{3}L_{2}}%
L_{2}\right)  Y_{\ell_{3}}^{0}\left(  N\right)
\begin{pmatrix}
\ell_{1} & \ell_{2} & \ell_{3}\\
m_{1} & m_{2} & 0
\end{pmatrix}
\\
=\sqrt{\frac{2\ell_{3}+1}{4\pi}}B_{3}\left(  \ell_{1},\ell_{2},\ell
_{3}\right)  \sum_{m_{1},m_{2}}Y_{\ell_{1}}^{m_{1}}\left(  g_{L_{3}L_{2}}%
L_{1}\right)  Y_{\ell_{2}}^{m_{2}}\left(  g_{L_{3}L_{2}}L_{2}\right)
\begin{pmatrix}
\ell_{1} & \ell_{2} & \ell_{3}\\
m_{1} & m_{2} & 0
\end{pmatrix}
.
\end{multline*}
The third order cumulants of $u_{\ell}$ contains an orthonormed system of
functions
\begin{align*}
I_{\ell_{1},\ell_{2},\ell_{3}}\left(  L_{1},L_{2},L_{3}\right)   &
=I_{\ell_{1},\ell_{2},\ell_{3}}\left(  g_{L_{3}L_{2}}L_{1},g_{L_{3}L_{2}}%
L_{2},N\right) \\
&  =\sqrt{\frac{2\ell_{3}+1}{4\pi}}\sum_{m_{1:2}}%
\begin{pmatrix}
\ell_{1} & \ell_{2} & \ell_{3}\\
m_{1} & m_{2} & 0
\end{pmatrix}
Y_{\ell_{1}}^{m_{1}}\left(  g_{L_{3}L_{2}}L_{1}\right)  Y_{\ell_{2}}^{m_{2}%
}\left(  g_{L_{3}L_{2}}L_{2}\right)  .
\end{align*}
In this way $I_{\ell_{1},\ell_{2},\ell_{3}}\left(  L_{1},L_{2},L_{3}\right)  $
is connected to the \textit{bipolar spherical harmonics}\textbf{,} i.e. to the
irreducible tensor products of the spherical harmonics with different
arguments, \cite{Varshalovich1988}.
\[
\left\{  Y_{\ell_{1}}\left(  L_{1}\right)  \otimes Y_{\ell_{2}}\left(
L_{2}\right)  \right\}  _{\ell,m}=\left(  -1\right)  ^{\ell_{1}-\ell_{2}%
+m}\sqrt{2\ell+1}\sum_{m_{1},m_{2}}%
\begin{pmatrix}
\ell_{1} & \ell_{2} & \ell\\
m_{1} & m_{2} & m
\end{pmatrix}
Y_{\ell_{1}}^{m_{1}}\left(  L_{1}\right)  Y_{\ell_{2}}^{m_{2}}\left(
L_{2}\right)
\]
Rewrite $I_{\ell_{1},\ell_{2},\ell_{3}}$ in terms of Euler angles
\begin{align*}
\mathcal{I}_{\ell_{1},\ell_{2},\ell_{3}}\left(  \vartheta_{1},\varphi
_{1},\vartheta_{2}\right)   &  =I_{\ell_{1},\ell_{2},\ell_{3}}\left(
g_{L_{3}L_{2}}L_{1},g_{L_{3}L_{2}}L_{2},N\right) \\
&  =\sqrt{\frac{2\ell_{3}+1}{4\pi}}\sum_{m_{1},m_{2}}%
\begin{pmatrix}
\ell_{1} & \ell_{2} & \ell_{3}\\
m_{1} & m_{2} & 0
\end{pmatrix}
Y_{\ell_{1}}^{m_{1}}\left(  \vartheta_{1},\varphi_{1}\right)  Y_{\ell_{2}%
}^{-m_{1}}\left(  \vartheta_{2},0\right) \\
&  =\sqrt{\frac{2\ell_{3}+1}{4\pi}}\sum_{m_{1},m_{2}}%
\begin{pmatrix}
\ell_{1} & \ell_{2} & \ell_{3}\\
m_{1} & m_{2} & 0
\end{pmatrix}
\left(  -1\right)  ^{m_{1}}Y_{\ell_{1}}^{m_{1}}\left(  \vartheta_{1}%
,\varphi_{1}\right)  Y_{\ell_{2}}^{m_{1}\ast}\left(  \vartheta_{2},0\right)  .
\end{align*}
According to the usual measure $\Omega\left(  dL_{1}\right)  \Omega\left(
dL_{2}\right)  =\sin\vartheta_{1}d\vartheta_{1}d\varphi_{1}\sin\vartheta
_{2}d\vartheta_{2}d\varphi_{2}$ on $\vartheta\in\left[  0,\pi\right]  $,
$\varphi\in\left[  0,2\pi\right]  $, we have
\[
\int_{\mathbb{S}_{2}}Y_{\ell}^{m\ast}\left(  \vartheta,0\right)  Y_{j}%
^{k}\left(  \vartheta,0\right)  \Omega\left(  dL\right)  =\delta_{\ell
,j}\delta_{m,k},
\]
hence%
\[
\iint\limits_{\mathbb{S}_{2}}\mathcal{I}_{\ell_{1},\ell_{2},\ell_{3}}^{\ast
}\mathcal{I}_{j_{1},j_{2},j_{3}}\Omega\left(  dL_{1}\right)  \Omega\left(
dL_{2}\right)  =\delta_{\ell_{1},j_{1}}\delta_{\ell_{2},j_{2}}\delta_{\ell
_{3},j_{3}},
\]
since
\[
\left(  2\ell_{3}+1\right)  \sum_{m_{1},m_{2}}%
\begin{pmatrix}
\ell_{1} & \ell_{2} & \ell_{3}\\
m_{1} & m_{2} & 0
\end{pmatrix}
^{2}=1.
\]
The third order cumulants of $u_{\ell}$ simplifies
\[
\operatorname*{Cum}\nolimits_{3}\left(  u_{\ell_{1}}\left(  L_{1}\right)
,u_{\ell_{2}}\left(  L_{2}\right)  ,u_{\ell_{3}}\left(  L_{3}\right)  \right)
=B_{3}\left(  \ell_{1},\ell_{2},\ell_{3}\right)  \mathcal{I}_{\ell_{1}%
,\ell_{2},\ell_{3}}\left(  \vartheta_{1},\varphi_{1},\vartheta_{2}\right)  .
\]
If we denote it by
\[
\operatorname*{Cum}\nolimits_{3}\left(  u_{\ell_{1}}\left(  L_{1}\right)
,u_{\ell_{2}}\left(  L_{2}\right)  ,u_{\ell_{3}}\left(  L_{3}\right)  \right)
=C_{u,\ell_{1},\ell_{2},\ell_{3}}\left(  \vartheta_{1},\varphi_{1}%
,\vartheta_{2}\right)  ,
\]
then we have the inversion formula
\begin{align*}
&  \iint\limits_{\mathbb{S}_{2}}C_{u,j_{1},j_{2},j_{3}}\left(  \vartheta
_{1},\varphi_{1},\vartheta_{2}\right)  \mathcal{I}_{\ell_{1},\ell_{2},\ell
_{3}}\left(  \vartheta_{1},\varphi_{1},\vartheta_{2}\right)  \Omega\left(
dL_{1}\right)  \Omega\left(  dL_{2}\right) \\
&  =\delta_{\ell_{1},j_{1}}\delta_{\ell_{2},j_{2}}\delta_{\ell_{3},j_{3}}%
B_{3}\left(  \ell_{1},\ell_{2},\ell_{3}\right)  .
\end{align*}

\subsection{Bispectrum of isotropic fields on sphere}

The use of the bispectrum for detecting non-Gaussianity and nonlinearity is
well known in time series analysis \cite{Rao84a}, \cite{Hinich_test-82}
\cite{Terdik-Math-JTSA-98}, the similar question has been put and studied for
CMB analysis see \cite{Marinucci2004}, \cite{Adshead2012} and references
therein. The asymptotics of the bispectrum has a long history started by
\cite{Rosenblatt-Ness_est_bisp-65}, the asymptotic distribution of angular
bispectrum when the degrees are increasing is considered by
\cite{Marinucci2006}, \cite{Marinucci2008}.

Start with the general case $\operatorname*{Cum}\nolimits_{3}\left(  X\left(
L_{1}\right)  ,X\left(  L_{2}\right)  ,X\left(  L_{3}\right)  \right)  $. The
value of the bicovariance does not change under the rotation $g_{L_{3}L_{2}}$
\begin{align*}
\operatorname*{Cum}\nolimits_{3}\left(  X\left(  L_{1}\right)  ,X\left(
L_{2}\right)  ,X\left(  L_{3}\right)  \right)   &  =\operatorname*{Cum}%
\nolimits_{3}\left(  X\left(  g_{L_{3}L_{2}}L_{1}\right)  ,X\left(
g_{L_{3}L_{2}}L_{2}\right)  ,X\left(  N\right)  \right) \\
&  =\operatorname*{Cum}\nolimits_{3}\left(  X\left(  \vartheta_{1},\varphi
_{1}\right)  ,X\left(  \vartheta_{2},0\right)  ,X\left(  N\right)  \right)  ,
\end{align*}
the spherical coordinates of the locations $g_{L_{3}L_{2}}L_{1}=\left(
\vartheta_{1},\varphi_{1}\right)  $, $g_{L_{3}L_{2}}L_{2}=\left(
\vartheta_{2},0\right)  $ are defined by the locations $L_{1}$, $L_{2}$,
$L_{3}$ as follows: $g_{L_{3}L_{2}}L_{2}\cdot N=L_{2}\cdot L_{3}=$
$\cos\vartheta_{2}$, $g_{L_{3}L_{2}}L_{1}\cdot N=L_{1}\cdot L_{3}%
=\cos\vartheta_{1}$. These angles $\left(  \vartheta_{1},\varphi_{1}%
,\vartheta_{2}\right)  $ define uniquely, up to rotations, the triangle given
by locations $L_{1}$, $L_{2}$, $L_{3}$. In general the bicovariance is written
$C_{3}\left(  \vartheta_{1},\varphi_{1},\vartheta_{2}\right)
=\operatorname*{Cum}\nolimits_{3}\left(  X\left(  \vartheta_{1},\varphi
_{1}\right)  ,X\left(  \vartheta_{2},0\right)  ,X\left(  N\right)  \right)  $,
i.e. it depends on a location $L\left(  \vartheta_{2},0\right)  $ from the
main circle $\left(  \varphi_{2}=0\right)  $ and a general location
$L_{0}=L_{0}\left(  \vartheta_{1},\varphi_{1}\right)  $.

$\;$%
\begin{align}
&  \operatorname*{Cum}\nolimits_{3}\left(  X\left(  L_{1}\right)  ,X\left(
L_{2}\right)  ,X\left(  L_{3}\right)  \right) \nonumber\\
&  =\sum_{\ell_{1},\ell_{2},\ell_{3}=0}^{\infty}\sqrt{\frac{2\ell_{3}+1}{4\pi
}}\sum_{m_{1}=-\ell_{1}}^{\ell_{1}}\sum_{m_{2}=-\ell_{2}}^{\ell_{2}%
}\operatorname*{Cum}\nolimits_{3}\left(  Z_{\ell_{1}}^{m_{1}},Z_{\ell_{2}%
}^{m_{2}},Z_{\ell_{3}}^{0}\right)  Y_{\ell_{1}}^{m_{1}}\left(  g_{L_{3}L_{2}%
}L_{1}\right)  Y_{\ell_{2}}^{m_{2}}\left(  g_{L_{3}L_{2}}L_{2}\right)
\nonumber\\
&  =\sum_{\ell_{1},\ell_{2},\ell_{3}=0}^{\infty}B_{3}\left(  \ell_{1},\ell
_{2},\ell_{3}\right)  \mathcal{I}_{\ell_{1},\ell_{2},\ell_{3}}\left(
\vartheta_{1},\varphi_{1},\vartheta_{2}\right)  . \label{Expansion_bicov}%
\end{align}
This series expansion of $\operatorname*{Cum}\nolimits_{3}\left(  X\left(
L_{1}\right)  ,X\left(  L_{2}\right)  ,X\left(  L_{3}\right)  \right)  $
implies the following definition

\begin{definition}
The bispectrum of the isotropic field $X\left(  L\right)  $ is
\[
S_{3}\left(  \ell_{1},\ell_{2},\ell_{3}\right)  =B_{3}\left(  \ell_{1}%
,\ell_{2},\ell_{3}\right)  ,
\]
and the bicoherence of $X\left(  L\right)  $ is
\[
\frac{S_{3}\left(  \ell_{1},\ell_{2},\ell_{3}\right)  }{\sqrt{S_{2}\left(
\ell_{1}\right)  S_{2}\left(  \ell_{2}\right)  S_{2}\left(  \ell_{3}\right)
}}=\frac{B_{3}\left(  \ell_{1},\ell_{2},\ell_{3}\right)  }{\sqrt{f_{\ell_{1}%
}f_{\ell_{2}}f_{\ell_{3}}}}.
\]

\end{definition}

\begin{theorem}
The bicovariances $\operatorname*{Cum}\nolimits_{3}\left(  X\left(
L_{1}\right)  ,X\left(  L_{2}\right)  ,X\left(  L_{3}\right)  \right)  $ of
the isotropic field $X\left(  L\right)  $ have the series expansion
(\ref{Expansion_bicov}) in terms of the bispectrum $B_{3}\left(  \ell_{1}%
,\ell_{2},\ell_{3}\right)  $ and orthonormed system $\mathcal{I}_{\ell
_{1},\ell_{2},\ell_{3}}$, hence
\[
B_{3}\left(  \ell_{1},\ell_{2},\ell_{3}\right)  =\iint\limits_{\mathbb{S}_{2}%
}\operatorname*{Cum}\nolimits_{3}\left(  X\left(  \vartheta_{1},\varphi
_{1}\right)  ,X\left(  \vartheta_{2},0\right)  ,X\left(  N\right)  \right)
\mathcal{I}_{\ell_{1},\ell_{2},\ell_{3}}\left(  \vartheta_{1},\varphi
_{1},\vartheta_{2}\right)  \Omega\left(  dL_{1}\right)  \Omega\left(
dL_{2}\right)  .
\]

\end{theorem}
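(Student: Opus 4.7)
The statement is essentially a consolidation of the chain of identities that has just been established in the subsection, so my plan is to walk through the pieces and glue them together, then invert by orthonormality.

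First I would expand $X(L_j) = \sum_{\ell_j,m_j} Z_{\ell_j}^{m_j} Y_{\ell_j}^{m_j}(L_j)$ for $j=1,2,3$ and use multilinearity of the third cumulant to obtain
\[
\operatorname{Cum}_{3}(X(L_1),X(L_2),X(L_3)) = \sum_{\ell_{1:3}} \sum_{m_{1:3}} Y_{\ell_1}^{m_1}(L_1)Y_{\ell_2}^{m_2}(L_2)Y_{\ell_3}^{m_3}(L_3)\operatorname{Cum}_{3}(Z_{\ell_1}^{m_1},Z_{\ell_2}^{m_2},Z_{\ell_3}^{m_3}).
\]
Then I would invoke Lemma \ref{Lemm_Iso3} to substitute the factorization $\operatorname{Cum}_{3}(Z_{\ell_1}^{m_1},Z_{\ell_2}^{m_2},Z_{\ell_3}^{m_3}) = \begin{pmatrix}\ell_1&\ell_2&\ell_3\\m_1&m_2&m_3\end{pmatrix} B_3(\ell_1,\ell_2,\ell_3)$. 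This turns the inner $m$-sum into the rotationally invariant triple product $\widetilde{I}_{\ell_1,\ell_2,\ell_3}(L_1,L_2,L_3)$ defined earlier.

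Next, because $\widetilde{I}_{\ell_1,\ell_2,\ell_3}$ is rotationally invariant, I would apply the rotation $g_{L_3 L_2}$ sending $L_3$ to the North pole $N$ and $L_2$ into the $xOz$-plane, so that $Y_{\ell_3}^{m_3}(N) = \delta_{m_3,0}\sqrt{(2\ell_3+1)/(4\pi)}$. This forces $m_3=0$ and reduces the triple product to $\mathcal{I}_{\ell_1,\ell_2,\ell_3}(\vartheta_1,\varphi_1,\vartheta_2)$. Assembling the sum over degrees yields exactly the expansion (\ref{Expansion_bicov}), which is the first assertion of the theorem. Invariance of the left-hand side under the rotation used here is precisely the isotropy hypothesis, so no extra work is needed to justify passing to these reduced coordinates.

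For the inversion formula I would multiply both sides of (\ref{Expansion_bicov}) by $\mathcal{I}_{\ell_1,\ell_2,\ell_3}(\vartheta_1,\varphi_1,\vartheta_2)$ and integrate in $\Omega(dL_1)\Omega(dL_2)$ over $\mathbb{S}_2\times\mathbb{S}_2$. Using the orthonormality relation $\iint \mathcal{I}^{*}_{j_1,j_2,j_3}\mathcal{I}_{\ell_1,\ell_2,\ell_3}\,\Omega(dL_1)\Omega(dL_2) = \delta_{\ell_1,j_1}\delta_{\ell_2,j_2}\delta_{\ell_3,j_3}$ already established (together with the reality of $B_3$, which follows from the parity remark \ref{Rem_parity} plus the symmetry of the 3j coefficients) collapses the double series to the single term $B_3(\ell_1,\ell_2,\ell_3)$. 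The main point where care is needed is keeping track of complex conjugation in $\mathcal{I}$: one must check that the $B_3$'s are real so that $\mathcal{I}$ versus $\mathcal{I}^\ast$ yields the same inversion, which is ensured by the fact that $\ell_1+\ell_2+\ell_3$ is even and that the real field $X$ produces real bicovariances.
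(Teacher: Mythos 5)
Your proposal is correct and follows essentially the same route as the paper: expand in spherical harmonics, substitute the factorized cumulant from Lemma \ref{Lemm_Iso3}, use the rotational invariance of the triple product to reduce to $\mathcal{I}_{\ell_{1},\ell_{2},\ell_{3}}$, and invert via the orthonormality relation. Your side remark on conjugation is also handled correctly in spirit, since the parity constraint ($\ell_{1}+\ell_{2}+\ell_{3}$ even) together with the 3j symmetries makes $\mathcal{I}_{\ell_{1},\ell_{2},\ell_{3}}$ real, so $\mathcal{I}$ and $\mathcal{I}^{\ast}$ give the same inversion.
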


\subsubsection{Particular cases}

For any locations $L_{1}$ and $L$ we have
\[
\operatorname*{Cum}\nolimits_{3}\left(  u_{\ell_{1}}\left(  L_{1}\right)
,u_{\ell_{2}}\left(  L\right)  ,u_{\ell_{3}}\left(  L\right)  \right)  =\sqrt{%
%TCIMACRO{\dprod _{i}}%
%BeginExpansion
{\displaystyle\prod_{i}}
%EndExpansion
\frac{2\ell_{i}+1}{4\pi}}%
\begin{pmatrix}
\ell_{1} & \ell_{2} & \ell_{3}\\
0 & 0 & 0
\end{pmatrix}
B_{3}\left(  \ell_{1},\ell_{2},\ell_{3}\right)  P_{\ell_{1}}\left(  \cos
L\cdot L_{1}\right)  ,
\]
from this readily follows%
\begin{align*}
\operatorname*{Cum}\nolimits_{3}\left(  X\left(  L_{1}\right)  ,X\left(
L\right)  ,X\left(  L\right)  \right)   &  =\sum_{\ell_{1},\ell_{2},\ell
_{3}=0}^{\infty}%
%TCIMACRO{\dprod _{i=1}^{3}}%
%BeginExpansion
{\displaystyle\prod_{i=1}^{3}}
%EndExpansion
\sqrt{\frac{2\ell_{i}+1}{4\pi}}%
\begin{pmatrix}
\ell_{1} & \ell_{2} & \ell_{3}\\
0 & 0 & 0
\end{pmatrix}
\\
&  \times B_{3}\left(  \ell_{1},\ell_{2},\ell_{3}\right)  P_{\ell_{1}}\left(
\cos L\cdot L_{1}\right)  .
\end{align*}
In particular, according to the Condon and Shortley phase convention, we have
for any location $L$ the third order central moment
\[
\operatorname*{Cum}\nolimits_{3}\left(  u_{\ell_{1}}\left(  L\right)
,u_{\ell_{2}}\left(  L\right)  ,u_{\ell_{3}}\left(  L\right)  \right)  =\sqrt{%
%TCIMACRO{\dprod }%
%BeginExpansion
{\displaystyle\prod}
%EndExpansion
\frac{2\ell_{i}+1}{4\pi}}%
\begin{pmatrix}
\ell_{1} & \ell_{2} & \ell_{3}\\
0 & 0 & 0
\end{pmatrix}
B_{3}\left(  \ell_{1},\ell_{2},\ell_{3}\right)  ,
\]
and the third order cumulant (central moment) $\operatorname*{Cum}%
\nolimits_{3}\left(  X\left(  L\right)  ,X\left(  L\right)  ,X\left(
L\right)  \right)  $ of $X\left(  L\right)  $ is
\begin{align*}
\operatorname*{Cum}\nolimits_{3}\left(  X\left(  L\right)  ,X\left(  L\right)
,X\left(  L\right)  \right)   &  =\operatorname*{Cum}\nolimits_{3}\left(
X\left(  N\right)  ,X\left(  N\right)  ,X\left(  N\right)  \right) \\
&  =\sum_{\ell_{1},\ell_{2},\ell_{3}=0}^{\infty}\operatorname*{Cum}%
\nolimits_{3}\left(  Z_{\ell_{1}}^{0},Z_{\ell_{2}}^{0},Z_{\ell_{3}}%
^{0}\right)
%TCIMACRO{\dprod }%
%BeginExpansion
{\displaystyle\prod}
%EndExpansion
\sqrt{\frac{2\ell_{i}+1}{4\pi}}\\
&  =\sum_{\ell_{1},\ell_{2},\ell_{3}=0}^{\infty}%
%TCIMACRO{\dprod }%
%BeginExpansion
{\displaystyle\prod}
%EndExpansion
\sqrt{\frac{2\ell_{i}+1}{4\pi}}%
\begin{pmatrix}
\ell_{1} & \ell_{2} & \ell_{3}\\
0 & 0 & 0
\end{pmatrix}
B_{3}\left(  \ell_{1},\ell_{2},\ell_{3}\right)
\end{align*}
Note that the summation is taken over those $\ell_{1},\ell_{2},\ell_{3}$ when
$\ell_{1}+\ell_{2}+\ell_{3}=\mathcal{L}$ is even and for this case the
following explicit expression is valid
\[%
\begin{pmatrix}
\ell_{1} & \ell_{2} & \ell_{3}\\
0 & 0 & 0
\end{pmatrix}
=\left(  -1\right)  ^{\mathcal{L}/2}\sqrt{\frac{\prod\left(  \mathcal{L}%
-2\ell_{j}\right)  !}{\left(  \mathcal{L}+1\right)  !}}\frac{\left(
\mathcal{L}/2\right)  !}{\prod\left(  \mathcal{L}/2-\ell_{j}\right)  !}%
\]

\subsection{Linear field}

First we assume that the rows of the triangle array $Z_{\ell}^{m}$ contain
\textbf{\ }independent variables. In other words the angular momentum field
$u_{\ell}$ is linear. Then for a fixed degree $\ell$%
\[
\operatorname*{Cum}\nolimits_{3}\left(  Z_{\ell}^{m_{1}},Z_{\ell}^{m_{2}%
},Z_{\ell}^{m_{3}}\right)  =\delta_{m_{i}=m}%
\begin{pmatrix}
\ell & \ell & \ell\\
m & m & m
\end{pmatrix}
B_{3}\left(  \ell,\ell,\ell\right)  ,
\]
now the selection rules apply, see Appendix \ref{App_Spher_Harmonics},
\ref{Select.Rules}
\[
\operatorname*{Cum}\nolimits_{3}\left(  Z_{\ell}^{m_{1}},Z_{\ell}^{m_{2}%
},Z_{\ell}^{m_{3}}\right)  =\delta_{m_{i}=m}\delta_{m=0}%
\begin{pmatrix}
\ell & \ell & \ell\\
0 & 0 & 0
\end{pmatrix}
B_{3}\left(  \ell,\ell,\ell\right)  .
\]
Hence the only nonzero third order cumulant might be $\operatorname*{Cum}%
\nolimits_{3}\left(  Z_{\ell}^{0},Z_{\ell}^{0},Z_{\ell}^{0}\right)  $. Further
the bispect\-rum
\begin{align*}
B_{3}\left(  \ell,\ell,\ell\right)   &  =\sum_{k}%
\begin{pmatrix}
\ell & \ell & \ell\\
k & k & k
\end{pmatrix}
\operatorname*{Cum}\nolimits_{3}\left(  Z_{\ell}^{k},Z_{\ell}^{k},Z_{\ell}%
^{k}\right) \\
&  =%
\begin{pmatrix}
\ell & \ell & \ell\\
0 & 0 & 0
\end{pmatrix}
\operatorname*{Cum}\nolimits_{3}\left(  Z_{\ell}^{0},Z_{\ell}^{0},Z_{\ell}%
^{0}\right)  ,
\end{align*}
therefore
\[
\operatorname*{Cum}\nolimits_{3}\left(  Z_{\ell}^{0},Z_{\ell}^{0},Z_{\ell}%
^{0}\right)  =%
\begin{pmatrix}
\ell & \ell & \ell\\
0 & 0 & 0
\end{pmatrix}
^{2}\operatorname*{Cum}\nolimits_{3}\left(  Z_{\ell}^{0},Z_{\ell}^{0},Z_{\ell
}^{0}\right)  .
\]
We conclude from this that from the isotropy and independence follows
$\operatorname*{Cum}\nolimits_{3}\left(  Z_{\ell}^{m},Z_{\ell}^{m},Z_{\ell
}^{m}\right)  =0$. Moreover if all the members of the triangle array $Z_{\ell
}^{m}$ are independent then \newline$\operatorname*{Cum}\nolimits_{3}\left(
X\left(  L_{1}\right)  ,X\left(  L_{2}\right)  ,X\left(  L_{3}\right)
\right)  =0$. Third order cumulants vanish for instance when the distribution
is Gaussian, or symmetric, etc.

\subsection{Symmetries of the bispectrum}

The cumulants $\operatorname*{Cum}\nolimits_{3}\left(  X\left(  L_{1}\right)
,X\left(  L_{2}\right)  ,X\left(  L_{3}\right)  \right)  $ according to
different locations $L_{1},L_{2}$ and $L_{3}$ are defined by the spherical
triangle with vertices $L_{1},L_{2}$ and $L_{3}$. The efficiency of
computations and avoiding the redundancy in statistical procedures require to
determine the principal domains. \ The\textbf{\ }\textit{principal
domain}\textbf{\ }for the bispectrum according to the \textbf{\ }frequencies
$\ell_{1},\ell_{2},\ell_{3}$ is the following

\begin{enumerate}
\item $\ell_{1},\ell_{2},\ell_{3}$ is monotone: $\ell_{1}\leq\ell_{2}\leq
\ell_{3},$

\item $\ell_{1}+\ell_{2}+\ell_{3}$ is even, see Remark \ref{Rem_parity},

\item $\ell_{1},\ell_{2},\ell_{3}$ fulfils the triangular inequality
$\left\vert \ell_{1}-\ell_{2}\right\vert \leq\ell_{3}\leq\ell_{1}+\ell_{2}$,

\item $\operatorname*{Cum}\nolimits_{3}\left(  Z_{\ell_{1}}^{m_{1}}%
,Z_{\ell_{2}}^{m_{2}},Z_{\ell_{3}}^{m_{3}}\right)  =0$, unless $m_{1}%
+m_{2}+m_{3}=0.$
\end{enumerate}

Similarly the \textbf{\ }\textit{principal domain} for the bicovariance
according to the locations $\left(  L_{1},L_{2},L_{3}\right)  $ is $\left\{
\left(  \vartheta_{1},\varphi_{1},\vartheta_{2}\right)  \left\vert
\vartheta_{1}\in\left[  0,\pi\right]  ,\varphi_{2}\in\left[  0,\pi\right]
,\vartheta_{2}\in\left[  0,\pi\right]  \right.  \right\}  $. We apply the
following notation of these angles $\cos\vartheta_{1}=L_{2}\cdot L_{3}$,
$\cos\vartheta_{2}=L_{1}\cdot L_{3}$, and $\varphi_{2}=\phi_{3}$ is the
surface angle at $L_{3}$. The third central angle is given by $\cos
\vartheta_{3}=L_{1}\cdot L_{2}$. The surface angle $\phi_{3}$ can be
calculated for instance from the cosine formula
\[
\cos\vartheta_{3}=\cos\vartheta_{1}\cos\vartheta_{2}+\sin\vartheta_{1}%
\sin\vartheta_{2}\cos\varphi_{2}.
\]

\section{Trispectrum}

\subsection{Isotropy in fourth order}

In the papers \cite{Hu2001}, \cite{Kogo2006} for instance, the notion of
\textit{trispectrum} is used for the coefficients of the series expansion of
the fourth order moments. Naturally, because of the expression of the moment
via cumulants (\ref{mocum}) it contains terms which correspond to the Gaussian
part of the field. This is one of the reasons that the time series analysis
considers the series expansion of cumulants instead of moments and reserve the
notion of \textit{trispectrum} for this case only, see
\cite{Bril_polyspect-65}, \cite{Molle-Hinich95}. The cumulants up to third
order equal to the central moments, but it is not so for higher order, see
Appendix \ref{Sect_Cummulants}. Cumulants are very useful, for instance they
measure the distance of a distribution from the Gaussian one.

\begin{lemma}
\label{Lemm_Iso4}The field $X\left(  L\right)  $ is isotropic in fourth order
iff the cumulant \newline$\operatorname*{Cum}\nolimits_{4}\left(  Z_{\ell_{1}%
}^{m_{1}},Z_{\ell_{2}}^{m_{2}},Z_{\ell_{3}}^{m_{3}},Z_{\ell_{4}}^{m_{4}%
}\right)  $ of the triangular array $Z_{\ell}^{m}$ has the form
\begin{align}
\operatorname*{Cum}\nolimits_{4}\left(  Z_{\ell_{1}}^{m_{1}},Z_{\ell_{2}%
}^{m_{2}},Z_{\ell_{3}}^{m_{3}},Z_{\ell_{4}}^{m_{4}}\right)   &  =\sum
_{\ell^{1},m^{1}}\left(  -1\right)  ^{m^{1}}%
\begin{pmatrix}
\ell_{1} & \ell_{2} & \ell^{1}\\
m_{1} & m_{2} & m^{1}%
\end{pmatrix}%
\begin{pmatrix}
\ell^{1} & \ell_{3} & \ell_{4}\\
-m^{1} & m_{3} & m_{4}%
\end{pmatrix}
\nonumber\\
&  \times\sqrt{2\ell^{1}+1}T_{4}\left(  \left.  \ell_{1},\ell_{2},\ell
_{3},\ell_{4}\right\vert \ell^{1}\right)  \label{Cum4_Z}%
\end{align}
where $m^{1}=-m_{1}-m_{2}$.
\end{lemma}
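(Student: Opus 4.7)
The plan is to imitate at fourth order the rotation-and-average argument used for the second and third-order isotropy lemmas. Isotropy in fourth order is equivalent, via the multilinearity of cumulants applied to (\ref{Z_rotated}), to the identity
$$\operatorname{Cum}\nolimits_{4}\bigl(Z_{\ell_{1}}^{k_{1}},Z_{\ell_{2}}^{k_{2}},Z_{\ell_{3}}^{k_{3}},Z_{\ell_{4}}^{k_{4}}\bigr) = \sum_{m_{1},\ldots,m_{4}} \prod_{i=1}^{4} D_{k_{i},m_{i}}^{(\ell_{i})}(g)\,\operatorname{Cum}\nolimits_{4}\bigl(Z_{\ell_{1}}^{m_{1}},\ldots,Z_{\ell_{4}}^{m_{4}}\bigr),$$
which must hold for every $g\in SO(3)$.

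For the necessity, I would integrate both sides over $SO(3)$ against the normalized Haar measure. The left side is constant in $g$, so the task reduces to evaluating the Haar integral of a four-fold product of Wigner $D$-matrices. I would compute it by first coupling the pair $(\ell_{1},\ell_{2})$, and independently the pair $(\ell_{3},\ell_{4})$, through the Clebsch--Gordan expansion
$$D^{(\ell_{1})}_{k_{1},m_{1}}(g)\, D^{(\ell_{2})}_{k_{2},m_{2}}(g) = \sum_{\ell^{1}} (2\ell^{1}+1) \begin{pmatrix} \ell_{1} & \ell_{2} & \ell^{1} \\ k_{1} & k_{2} & -K \end{pmatrix} \begin{pmatrix} \ell_{1} & \ell_{2} & \ell^{1} \\ m_{1} & m_{2} & -M \end{pmatrix} (-1)^{K-M}\, D^{(\ell^{1})}_{K,M}(g),$$
and then invoking the orthogonality relation $\int D^{(\ell^{1})}_{K,M}(g) \overline{D^{(\ell^{2})}_{K',M'}(g)}\, dg = (2\ell^{1}+1)^{-1}\delta_{\ell^{1},\ell^{2}}\delta_{K,K'}\delta_{M,M'}$ to collapse the integral to a single sum over a common intermediate $\ell^{1}$. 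The surviving pair of 3j factors carries the indices $(m_{1},m_{2},m^{1})$ and $(-m^{1},m_{3},m_{4})$ with $m^{1}=-m_{1}-m_{2}$, and reading off the coefficient yields precisely the claimed formula with $T_{4}(\ell_{1},\ldots,\ell_{4}\mid \ell^{1})$ emerging as the bipolar expansion coefficient $T_{4}(\ell_{1},\ldots,\ell_{4}\mid \ell^{1}) = \sqrt{2\ell^{1}+1}\sum_{k_{1},\ldots,k_{4}}(-1)^{K}\begin{pmatrix} \ell_{1} & \ell_{2} & \ell^{1} \\ k_{1} & k_{2} & -K \end{pmatrix}\begin{pmatrix} \ell^{1} & \ell_{3} & \ell_{4} \\ K & k_{3} & k_{4} \end{pmatrix}\operatorname{Cum}_{4}(Z_{\ell_{1}}^{k_{1}},\ldots,Z_{\ell_{4}}^{k_{4}})$.

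For the sufficiency, I would substitute the claimed form back into the rotation identity and check $g$-independence directly. The key ingredient is the classical rotational invariance of the 3j symbol,
$$\sum_{m_{1},m_{2},m^{1}} D^{(\ell_{1})}_{k_{1},m_{1}}(g)\, D^{(\ell_{2})}_{k_{2},m_{2}}(g)\, D^{(\ell^{1})}_{K,m^{1}}(g) \begin{pmatrix} \ell_{1} & \ell_{2} & \ell^{1} \\ m_{1} & m_{2} & m^{1} \end{pmatrix} = \begin{pmatrix} \ell_{1} & \ell_{2} & \ell^{1} \\ k_{1} & k_{2} & K \end{pmatrix},$$
applied once to the $(\ell_{1},\ell_{2})$ pair threaded through $m^{1}$ and once to the $(\ell_{3},\ell_{4})$ pair threaded through $-m^{1}$; the two auxiliary $D^{(\ell^{1})}$ factors introduced in this way cancel against each other via unitarity of $D^{(\ell^{1})}$, reproducing the same expression with $k_{i}$ in place of $m_{i}$.

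The main obstacle will be the bookkeeping of phases and normalizations: the $(-1)^{m^{1}}$ prefactor, the asymmetric appearance of $m^{1}$ versus $-m^{1}$ in the two 3j symbols, and the $\sqrt{2\ell^{1}+1}$ weight all originate from the conversion $D^{(\ell)}_{K,M} = (-1)^{K-M}\overline{D^{(\ell)}_{-K,-M}}$ needed to turn one of the $D$-matrices into its conjugate before applying the orthogonality relation, combined with the $(2\ell^{1}+1)$ produced by Clebsch--Gordan. Keeping these phases consistent between the two coupling channels is the delicate accounting step. Once that is settled, the selection rule $m_{1}+m_{2}+m_{3}+m_{4}=0$ is automatic from the two 3j constraints $m_{1}+m_{2}+m^{1}=0$ and $-m^{1}+m_{3}+m_{4}=0$, and the parity requirement $\ell_{1}+\ell_{2}+\ell_{3}+\ell_{4}$ even follows from Remark~\ref{Rem_parity}.
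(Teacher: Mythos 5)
Your proposal is correct and follows essentially the same route as the paper: necessity by averaging the rotation identity over $SO(3)$ and evaluating the Haar integral of the four-fold $D$-matrix product via Clebsch--Gordan coupling of pairs (the paper's Lemma \ref{Int_4D} does this by reducing to the triple-product integral (\ref{Int_D_D_D}) rather than the two-$D$ orthogonality, an equivalent bookkeeping choice), and sufficiency by substituting (\ref{Cum4_Z}) back and using the invariance identity (\ref{D_3Prod}) on each coupled pair, which is exactly the content of the paper's Lemma \ref{Lemma_4D}. The phase accounting you flag as the delicate step is indeed where the paper's auxiliary lemmas do their work, and your identification of $T_{4}$ agrees with (\ref{Spectr_3}).
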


See the Appendix \ref{Append_Trisp} for the proof. The function $T_{4}\left(
\left.  \ell_{1},\ell_{2},\ell_{3},\ell_{4}\right\vert \ell^{1}\right)  $ has
the form
\begin{align}
T_{4}\left(  \left.  \ell_{1},\ell_{2},\ell_{3},\ell_{4}\right\vert \ell
^{1}\right)   &  =\sqrt{2\ell^{1}+1}\sum_{k_{1},k_{2},k_{3},k_{4},k^{1}%
}\left(  -1\right)  ^{k^{1}}%
\begin{pmatrix}
\ell_{1} & \ell_{2} & \ell^{1}\\
k_{1} & k_{2} & k^{1}%
\end{pmatrix}%
\begin{pmatrix}
\ell^{1} & \ell_{3} & \ell_{4}\\
-k^{1} & k_{3} & k_{4}%
\end{pmatrix}
\nonumber\\
&  \times\operatorname*{Cum}\nolimits_{4}\left(  Z_{\ell_{1}}^{k_{1}}%
,Z_{\ell_{2}}^{k_{2}},Z_{\ell_{3}}^{k_{3}},Z_{\ell_{4}}^{k_{4}}\right)
\label{Spectr_3}%
\end{align}
where $k^{1}=-k_{1}-k_{2}$.

We notice here, the (\ref{Spectr_3}) shows that the cumulants of $Z_{\ell}%
^{m}$ and the function \newline$T_{4}\left(  \left.  \ell_{1},\ell_{2}%
,\ell_{3},\ell_{4}\right\vert \ell^{1}\right)  $ define each other uniquely.

\begin{remark}
\label{Trisp_principal} If we assume that $T_{4}\left(  \left.  \ell_{1}%
,\ell_{2},\ell_{3},\ell_{4}\right\vert \ell^{1}\right)  $ is known for
$\ell_{1}\leq\ell_{2}\leq\ell_{3}\leq\ell_{4}$, then the cumulants
$\operatorname*{Cum}\nolimits_{4}\left(  Z_{\ell_{1}}^{m_{1}},Z_{\ell_{2}%
}^{m_{2}},Z_{\ell_{3}}^{m_{3}},Z_{\ell_{4}}^{m_{4}}\right)  $ are given for
any $\left(  \ell_{1},\ell_{2},\ell_{3},\ell_{4}\right)  $. \newline Vice
versa $T_{4}\left(  \left.  \ell_{1},\ell_{2},\ell_{3},\ell_{4}\right\vert
\ell^{1}\right)  $ can be calculated for any $\left(  \ell_{1},\ell_{2}%
,\ell_{3},\ell_{4}\right)  $ by (\ref{Spectr_3}).
\end{remark}

The sum in (\ref{Spectr_3}) works for $k_{1},k_{2},k_{3},k_{4}$ and
$k^{1}=k_{1}+k_{2}=-k_{3}-k_{4}$. Hence the summation contains those $k_{m}$'s
when equation $k_{1}+k_{2}+k_{3}+k_{4}=0$, fulfils. Similarly \newline%
$\operatorname*{Cum}\nolimits_{4}\left(  Z_{\ell_{1}}^{m_{1}},Z_{\ell_{2}%
}^{m_{2}},Z_{\ell_{3}}^{m_{3}},Z_{\ell_{4}}^{m_{4}}\right)  =0$, if
$m_{1}+m_{2}+m_{3}+m_{4}\neq0$. The triangular inequality for $\ell_{1}%
,\ell_{2},\ell^{1}$ and $\ell_{3},\ell_{4},\ell^{1}$ suggests that the
'quadrilateral' with edges $\left(  \ell_{1},\ell_{2},\ell_{3},\ell
_{4}\right)  $ consists of two triangles $\ell_{1},\ell_{2},\ell^{1}$ and
$\ell_{3},\ell_{4},\ell^{1}$. The cumulants are invariant of the order of
their variables therefore this should be true for the other diagonal as well
$\ell_{4},\ell_{1},\ell$ and $\ell_{2},\ell_{3},\ell$. Actually since the left
hand side of (\ref{Cum4_Z}) is symmetric in $\ell_{1},\ell_{2},\ell_{3}%
,\ell_{4}$, the right hand side does not depend of the choice of the diagonal.

Note that parity transformation (\ref{Parity}) implies that $\ell_{1}+\ell
_{2}+\ell_{3}+\ell_{4}$ must be even and we turn to the field $u_{\ell}$, see
\ref{FieldAngMom},
\begin{multline*}
\operatorname*{Cum}\nolimits_{4}\left(  u_{\ell_{1}}\left(  L_{1}\right)
,u_{\ell_{2}}\left(  L_{2}\right)  ,u_{\ell_{3}}\left(  L_{3}\right)
,u_{\ell_{4}}\left(  L_{4}\right)  \right)  \\
=\sum_{m_{1:4}}\prod\limits_{j=1}^{4}Y_{\ell_{j}}^{m_{j}}\left(  L_{j}\right)
\operatorname*{Cum}\nolimits_{4}\left(  Z_{\ell_{1}}^{m_{1}},Z_{\ell_{2}%
}^{m_{2}},Z_{\ell_{3}}^{m_{3}},Z_{\ell_{4}}^{m_{4}}\right)  \\
=\sum_{\ell^{1},m^{1}}T_{4}\left(  \left.  \ell_{1},\ell_{2},\ell_{3},\ell
_{4}\right\vert \ell^{1}\right)  \sqrt{2\ell^{1}+1}\sum_{m_{1:4}}%
\prod\limits_{j=1}^{4}Y_{\ell_{j}}^{m_{j}}\left(  L_{j}\right)
\begin{pmatrix}
\ell_{1} & \ell_{2} & \ell^{1}\\
m_{1} & m_{2} & -m^{1}%
\end{pmatrix}%
\begin{pmatrix}
\ell^{1} & \ell_{3} & \ell_{4}\\
m^{1} & m_{3} & m_{4}%
\end{pmatrix}
.
\end{multline*}
The cumulant in the left hand side is symmetric in $\ell_{1},\ell_{2},\ell
_{3},\ell_{4}$, and invariant under rotation. The function
\begin{align*}
\widetilde{I}_{\ell_{1},\ell_{2},\ell_{3},\ell_{4},\ell^{1}}\left(
L_{1},L_{2},L_{3},L_{4}\right)   &  =\sqrt{2\ell^{1}+1}\sum_{m_{1},\ldots
m_{4},m^{1}}\prod\limits_{j=1}^{4}Y_{\ell_{j}}^{m_{j}}\left(  L_{j}\right)  \\
&  \times%
\begin{pmatrix}
\ell_{1} & \ell_{2} & \ell^{1}\\
m_{1} & m_{2} & -m^{1}%
\end{pmatrix}%
\begin{pmatrix}
\ell^{1} & \ell_{3} & \ell_{4}\\
m^{1} & m_{3} & m_{4}%
\end{pmatrix}
,
\end{align*}
is invariant under the rotation, since if we apply a rotation $g$ on each
$L_{j}$, then we can use Lemma \ref{Lemma_4D} Appendix to show that the right
hand side does not change. Therefore we apply the rotation $g_{L_{4}L_{3}}$,
which takes the location $L_{4}$ to the North pole and at the same time takes
$L_{3}$ into the plane $xOz$,
\begin{multline*}
\operatorname*{Cum}\nolimits_{4}\left(  u_{\ell_{1}}\left(  L_{1}\right)
,u_{\ell_{2}}\left(  L_{2}\right)  ,u_{\ell_{3}}\left(  L_{3}\right)
,u_{\ell_{4}}\left(  L_{4}\right)  \right)  \\
=\operatorname*{Cum}\nolimits_{4}\left(  u_{\ell_{1}}\left(  g_{L_{4}L_{3}%
}L_{1}\right)  ,u_{\ell_{2}}\left(  g_{L_{4}L_{3}}L_{2}\right)  ,u_{\ell_{3}%
}\left(  g_{L_{4}L_{3}}L_{3}\right)  ,u_{\ell_{4}}\left(  N\right)  \right)
\\
=\sqrt{\frac{2\ell_{4}+1}{4\pi}}\sum_{\ell^{1}}T_{4}\left(  \left.  \ell
_{1},\ell_{2},\ell_{3},\ell_{4}\right\vert \ell^{1}\right)  \sqrt{2\ell^{1}%
+1}\\
\times\sum_{m_{1},m_{2},m_{3},m^{1}}\prod\limits_{j=1}^{3}Y_{\ell_{j}}^{m_{j}%
}\left(  g_{L_{4}L_{3}}L_{j}\right)
\begin{pmatrix}
\ell_{1} & \ell_{2} & \ell^{1}\\
m_{1} & m_{2} & -m^{1}%
\end{pmatrix}%
\begin{pmatrix}
\ell^{1} & \ell_{3} & \ell_{4}\\
m^{1} & m_{3} & 0
\end{pmatrix}
,
\end{multline*}
we have%
\begin{multline*}
\operatorname*{Cum}\nolimits_{4}\left(  u_{\ell_{1}}\left(  L_{1}\right)
,u_{\ell_{2}}\left(  L_{2}\right)  ,u_{\ell_{3}}\left(  L_{3}\right)
,u_{\ell_{4}}\left(  L_{4}\right)  \right)  \\
=\sum_{\ell^{1}}T_{4}\left(  \left.  \ell_{1},\ell_{2},\ell_{3},\ell
_{4}\right\vert \ell^{1}\right)  \mathcal{I}_{\ell_{1},\ell_{2},\ell_{3}%
,\ell_{4},\ell^{1}}\left(  \vartheta_{1},\vartheta_{2},\vartheta_{3}%
,\varphi_{1},\varphi_{2}\right)  ,
\end{multline*}
where the function $\mathcal{I}_{\ell_{1},\ell_{2},\ell_{3},\ell_{4},\ell^{1}%
}$ is given by
\begin{align*}
\mathcal{I}_{\ell_{1},\ell_{2},\ell_{3},\ell_{4},\ell^{1}}\left(
\vartheta_{1},\vartheta_{2},\vartheta_{3},\varphi_{1},\varphi_{2}\right)   &
=\sqrt{\frac{2\ell_{4}+1}{4\pi}\left(  2\ell^{1}+1\right)  }\sum
_{m_{1:3},m^{1}}\prod\limits_{j=1}^{3}Y_{\ell_{j}}^{m_{j}}\left(
g_{L_{4}L_{3}}L_{j}\right)  \\
&  \times%
\begin{pmatrix}
\ell_{1} & \ell_{2} & \ell^{1}\\
m_{1} & m_{2} & -m^{1}%
\end{pmatrix}%
\begin{pmatrix}
\ell^{1} & \ell_{3} & \ell_{4}\\
m^{1} & m_{3} & 0
\end{pmatrix}
.
\end{align*}
Actually $\mathcal{I}_{\ell_{1},\ell_{2},\ell_{3},\ell_{4},\ell^{1}}$ is the
rotated version of $\widetilde{I}_{\ell_{1},\ell_{2},\ell_{3},\ell_{4}%
,\ell^{1}}$, it corresponds to the tripolar spherical harmonics\textbf{\ }%
defined as \textbf{\ }irreducible tensor products of the spherical harmonics
with different arguments \cite{Varshalovich1988}, p.160. We define the
spherical coordinates $\left(  \vartheta_{1},\vartheta_{2},\vartheta
_{3},\varphi_{1},\varphi_{2}\right)  $ as follows $g_{L_{4}L_{3}}L_{3}=\left(
\vartheta_{3},0\right)  $, $g_{L_{4}L_{3}}L_{j}=\left(  \vartheta_{j}%
,\varphi_{j}\right)  $, $j=1,2$. Note the orthonormality of the system
according to the measure $\prod_{k=1}^{3}\Omega\left(  dL_{k}\right)
=\prod_{k=1}^{3}\sin\vartheta_{k}d\vartheta_{k}d\varphi_{k}$, $\vartheta
_{k}\in\left[  0,\pi\right]  $, $\varphi_{k}\in\left[  0,2\pi\right]  $, i.e.
\[
\iiint\limits_{\mathbb{S}_{2}}\mathcal{I}_{\ell_{1},\ell_{2},\ell_{3},\ell
_{4},\ell^{1}}^{\ast}\mathcal{I}_{j_{1},j_{2},j_{3},j_{4},\ell^{2}}\prod
_{k=1}^{3}\Omega\left(  dL_{k}\right)  =\delta_{\ell^{1},\ell^{2}}%
\prod_{k=1:4}\delta_{\ell_{k},j_{k}},
\]
since collecting the coefficients after integration, we have
\[
\left(  2\ell_{4}+1\right)  \sum_{m^{1},m_{3}}%
\begin{pmatrix}
\ell^{1} & \ell_{3} & \ell_{4}\\
m^{1} & m_{3} & 0
\end{pmatrix}
^{2}\left(  2\ell^{1}+1\right)  \sum_{m_{1},m_{2},m^{2}}%
\begin{pmatrix}
\ell_{1} & \ell_{2} & \ell^{1}\\
m_{1} & m_{2} & -m^{2}%
\end{pmatrix}
^{2}=1.
\]
The expression for the $\operatorname*{Cum}\nolimits_{4}\left(  X\left(
L_{1}\right)  ,X\left(  L_{2}\right)  ,X\left(  L_{3}\right)  ,X\left(
L_{4}\right)  \right)  =\mathcal{C}_{4}\left(  \vartheta_{1},\vartheta
_{2},\vartheta_{3},\varphi_{1},\varphi_{2}\right)  $ is straightforward
\begin{align}
\mathcal{C}_{4}\left(  \vartheta_{1},\vartheta_{2},\vartheta_{3},\varphi
_{1},\varphi_{2}\right)    & =\sum_{\ell_{1},\ell_{2},\ell_{3},\ell_{4}%
=0}^{\infty}\operatorname*{Cum}\nolimits_{4}\left(  u_{\ell_{1}}\left(
L_{1}\right)  ,\ldots,u_{\ell_{4}}\left(  L_{4}\right)  \right)  \nonumber\\
& =\sum_{\ell_{1},\ldots,\ell_{4}=0}^{\infty}\sum_{\ell^{1}}T_{4}\left(
\left.  \ell_{1},\ell_{2},\ell_{3},\ell_{4}\right\vert \ell^{1}\right)
\mathcal{I}_{\ell_{1},\ell_{2},\ell_{3},\ell_{4},\ell^{1}}\left(  L_{1}%
,\ldots,L_{4}\right)  .\label{Expansion_tri}%
\end{align}

\begin{definition}
The trispectrum of the isotropic field $X\left(  L\right)  $ is given by%
\[
S_{4}\left(  \left.  \ell_{1},\ell_{2},\ell_{3},\ell_{4}\right\vert \ell
^{1}\right)  =T_{4}\left(  \left.  \ell_{1},\ell_{2},\ell_{3},\ell
_{4}\right\vert \ell^{1}\right)  .
\]

\end{definition}

\begin{theorem}
The fourth order cumulant $\operatorname*{Cum}\nolimits_{4}\left(  X\left(
L_{1}\right)  ,X\left(  L_{2}\right)  ,X\left(  L_{3}\right)  ,X\left(
L_{4}\right)  \right)  $ of the isotropic field $X\left(  L\right)  $ have the
series expansion (\ref{Expansion_tri}) in terms of the trispectrum
$S_{4}\left(  \left.  \ell_{1},\ell_{2},\ell_{3},\ell_{4}\right\vert \ell
^{1}\right)  $ and orthonormed system $\mathcal{I}_{\ell_{1},\ell_{2},\ell
_{3},\ell_{4},\ell^{1}}$, hence
\[
S_{4}\left(  \left.  \ell_{1},\ell_{2},\ell_{3},\ell_{4}\right\vert \ell
^{1}\right)  =\iiint\limits_{\mathbb{S}_{2}}\mathcal{C}_{4}\left(
\vartheta_{1},\vartheta_{2},\vartheta_{3},\varphi_{1},\varphi_{2}\right)
\mathcal{I}_{\ell_{1},\ell_{2},\ell_{3},\ell_{4},\ell^{1}}\left(
\vartheta_{1},\vartheta_{2},\vartheta_{3},\varphi_{1},\varphi_{2}\right)
\prod_{k=1}^{3}\Omega\left(  dL_{k}\right)  .
\]

\end{theorem}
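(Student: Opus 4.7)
The plan is to read off the series expansion directly from the spherical harmonic representation of $X$ combined with Lemma~\ref{Lemm_Iso4}, and then invert using the orthonormality of the tripolar system $\mathcal{I}_{\ell_{1},\ell_{2},\ell_{3},\ell_{4},\ell^{1}}$ established just before the theorem.

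First, starting from $X(L)=\sum_{\ell,m} Z_{\ell}^{m} Y_{\ell}^{m}(L)$, the multilinearity of cumulants yields
\begin{equation*}
\operatorname{Cum}\nolimits_{4}(X(L_{1}),\ldots,X(L_{4}))=\sum_{\ell_{1:4},\,m_{1:4}}\prod_{j=1}^{4} Y_{\ell_{j}}^{m_{j}}(L_{j})\,\operatorname{Cum}\nolimits_{4}(Z_{\ell_{1}}^{m_{1}},\ldots,Z_{\ell_{4}}^{m_{4}}).
\end{equation*}
Substituting the expression (\ref{Cum4_Z}) from Lemma~\ref{Lemm_Iso4} collects the sum over $m_{1:4}$ into the rotation-invariant kernel $\widetilde{I}_{\ell_{1:4},\ell^{1}}(L_{1},\ldots,L_{4})$, weighted by $T_{4}(\ell_{1},\ldots,\ell_{4}|\ell^{1})=S_{4}(\ell_{1},\ldots,\ell_{4}|\ell^{1})$. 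Then I invoke the rotational invariance of $\widetilde{I}$, argued immediately before the theorem via Lemma~\ref{Lemma_4D}, and apply $g_{L_{4}L_{3}}$, sending $L_{4}$ to the North pole $N$ and $L_{3}$ into the $xOz$-plane. Because $Y_{\ell_{4}}^{m_{4}}(N)=\sqrt{(2\ell_{4}+1)/(4\pi)}\,\delta_{m_{4},0}$, the index $m_{4}$ collapses and the remaining sum reduces precisely to $\mathcal{I}_{\ell_{1:4},\ell^{1}}(\vartheta_{1},\vartheta_{2},\vartheta_{3},\varphi_{1},\varphi_{2})$, yielding the expansion (\ref{Expansion_tri}).

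Second, for the inversion formula, I multiply both sides of (\ref{Expansion_tri}) by $\mathcal{I}_{\ell_{1:4},\ell^{1}}$ and integrate over $\mathbb{S}_{2}^{3}$ against $\prod_{k=1}^{3}\Omega(dL_{k})$. Using the orthonormality
\begin{equation*}
\iiint_{\mathbb{S}_{2}}\mathcal{I}_{\ell_{1:4},\ell^{1}}^{\ast}\,\mathcal{I}_{j_{1:4},\ell^{2}}\prod_{k=1}^{3}\Omega(dL_{k})=\delta_{\ell^{1},\ell^{2}}\prod_{k=1}^{4}\delta_{\ell_{k},j_{k}},
\end{equation*}
which was established just above the theorem statement, every term in the resulting quadruple sum except the one with $(j_{1:4},\ell^{2})=(\ell_{1:4},\ell^{1})$ is annihilated, isolating $T_{4}(\ell_{1},\ldots,\ell_{4}|\ell^{1})$.

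The main technical obstacle is the justification of the termwise integration of the quadruple series over $\mathbb{S}_{2}^{3}$. This calls for combining the mean-square convergence of the expansion of $X$, the uniform boundedness of the Wigner $3j$ symbols, and the summability of the cumulants $\operatorname{Cum}_{4}(Z_{\ell_{1:4}}^{m_{1:4}})$ guaranteed by the moment hypothesis on $X$; once these ingredients are in place, a Fubini/Parseval argument for $\mathcal{C}_{4}$ viewed as an $L^{2}$-kernel on $(\mathbb{S}_{2})^{3}$ legitimates the exchange of summation and integration and completes the proof.
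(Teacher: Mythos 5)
Your proposal follows the paper's own route exactly: expand $X$ via the triangular array, substitute the cumulant form from Lemma \ref{Lemm_Iso4}, collect the $m$-sums into the rotation-invariant kernel $\widetilde{I}_{\ell_{1:4},\ell^{1}}$, reduce to $\mathcal{I}_{\ell_{1:4},\ell^{1}}$ by the rotation $g_{L_{4}L_{3}}$, and invert with the orthonormality relation stated just before the theorem. The additional remark on justifying termwise integration is a welcome point of rigor the paper leaves implicit, but the argument is the same.
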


In particular, we have formulae for the marginal fourth order cumulants%
\begin{align*}
\operatorname*{Cum}\nolimits_{4}\left(  X\left(  L_{1}\right)  ,X\left(
L_{2}\right)  ,X\left(  L\right)  ,X\left(  L\right)  \right)   &  =\sum
_{\ell_{1},\ell_{2},\ell_{3},\ell_{4}=0}^{\infty}\sqrt{\frac{2\ell_{3}+1}%
{4\pi}}\sum_{\ell^{1}}%
\begin{pmatrix}
\ell_{3} & \ell_{4} & \ell^{1}\\
0 & 0 & 0
\end{pmatrix}
\\
&  \times T_{4}\left(  \left.  \ell_{1},\ell_{2},\ell_{3},\ell_{4}\right\vert
\ell^{1}\right)  \mathcal{I}_{\ell_{1},\ell_{2},\ell^{1}}\left(  \vartheta
_{1},\varphi_{1},\vartheta_{2}\right)  ,
\end{align*}
and
\begin{multline*}
\operatorname*{Cum}\nolimits_{4}\left(  X\left(  L_{1}\right)  ,X\left(
L\right)  ,X\left(  L\right)  ,X\left(  L\right)  \right) \\
=\sum_{\ell_{1},\ell_{2},\ell_{3},\ell_{4}=0}^{\infty}\sqrt{\prod_{j=1}%
^{4}\frac{2\ell_{j}+1}{4\pi}}\sum_{\ell^{1}}T_{4}\left(  \left.  \ell
_{1},\ldots,\ell_{4}\right\vert \ell^{1}\right) \\
\times\sqrt{2\ell^{1}+1}%
\begin{pmatrix}
\ell_{1} & \ell_{2} & \ell^{1}\\
0 & 0 & 0
\end{pmatrix}%
\begin{pmatrix}
\ell^{1} & \ell_{3} & \ell_{4}\\
0 & 0 & 0
\end{pmatrix}
P_{\ell_{1}}\left(  L\cdot L_{1}\right)
\end{multline*}
also%
\begin{multline*}
\operatorname*{Cum}\nolimits_{4}\left(  X\left(  L\right)  ,X\left(  L\right)
,X\left(  L\right)  ,X\left(  L\right)  \right) \\
=\sum_{\ell_{1},\ell_{2},\ell_{3},\ell_{4}=0}^{\infty}\sqrt{\prod_{j=1}%
^{4}\frac{2\ell_{i}+1}{4\pi}}\sum_{\ell^{1}}T_{4}\left(  \left.  \ell_{1}%
,\ell_{2},\ell_{3},\ell_{4}\right\vert \ell^{1}\right)  \sqrt{2\ell^{1}+1}%
\begin{pmatrix}
\ell_{1} & \ell_{2} & \ell^{1}\\
0 & 0 & 0
\end{pmatrix}%
\begin{pmatrix}
\ell^{1} & \ell_{3} & \ell_{4}\\
0 & 0 & 0
\end{pmatrix}
.
\end{multline*}

\subsection{Linear field}

If the triangular array $Z_{\ell}^{m}$ contains \textbf{\ }independent
variables, i.e. the angular momentum field $u_{\ell}$ is linear. Then for a
fixed degree $\ell$
\begin{align*}
\operatorname*{Cum}\nolimits_{4}\left(  Z_{\ell}^{m_{1}},Z_{\ell}^{m_{2}%
},Z_{\ell}^{m_{3}},Z_{\ell}^{m_{4}}\right)   &  =\delta_{m_{i}=m}\sum
_{\ell^{1},m^{1}}%
\begin{pmatrix}
\ell_{1} & \ell_{2} & \ell^{1}\\
m_{1} & m_{2} & m^{1}%
\end{pmatrix}%
\begin{pmatrix}
\ell^{1} & \ell_{3} & \ell_{4}\\
-m^{1} & m_{3} & m_{4}%
\end{pmatrix}
\\
&  \times\left(  -1\right)  ^{m^{1}}\sqrt{2\ell^{1}+1}T_{4}\left(  \left.
\ell,\ell,\ell,\ell\right\vert \ell^{1}\right)  ,
\end{align*}
from the selection rules follows $m^{1}=-\left(  m_{1}+m_{2}\right)  =-2m$,
$m^{1}=m_{3}+m_{4}=2m$, therefore $m_{i}=0$. Now for similar reason
\[
T_{4}\left(  \left.  \ell,\ell,\ell,\ell\right\vert \ell^{1}\right)
=\sqrt{2\ell^{1}+1}\operatorname*{Cum}\nolimits_{4}\left(  Z_{\ell}%
^{0},Z_{\ell}^{0},Z_{\ell}^{0},Z_{\ell}^{0}\right)
\begin{pmatrix}
\ell & \ell & \ell^{1}\\
0 & 0 & 0
\end{pmatrix}%
\begin{pmatrix}
\ell^{1} & \ell & \ell\\
0 & 0 & 0
\end{pmatrix}
.
\]
The only nonzero cumulants are
\[
\operatorname*{Cum}\nolimits_{4}\left(  Z_{\ell}^{0},Z_{\ell}^{0},Z_{\ell}%
^{0},Z_{\ell}^{0}\right)  =\sum_{\ell^{1}}\sqrt{2\ell^{1}+1}%
\begin{pmatrix}
\ell & \ell & \ell^{1}\\
0 & 0 & 0
\end{pmatrix}
^{2}%
\begin{pmatrix}
\ell^{1} & \ell & \ell\\
0 & 0 & 0
\end{pmatrix}
^{2}\operatorname*{Cum}\nolimits_{4}\left(  Z_{\ell}^{0},Z_{\ell}^{0},Z_{\ell
}^{0},Z_{\ell}^{0}\right)  ,
\]
hence for an isotropic linear field $u_{\ell}$ $\operatorname*{Cum}%
\nolimits_{4}\left(  Z_{\ell}^{m_{1}},Z_{\ell}^{m_{2}},Z_{\ell}^{m_{3}%
},Z_{\ell}^{m_{4}}\right)  =0,$ for all $\ell$ and $m_{i}$. If additionally
the series of $u_{\ell}$ is independent then $\operatorname*{Cum}%
\nolimits_{4}\left(  X\left(  L_{1}\right)  ,X\left(  L_{2}\right)  ,X\left(
L_{3}\right)  ,X\left(  L_{4}\right)  \right)  =0$ follows.

\section{Higher order spectra for isotropic fields}

The generalization of the bispectrum and trispectrum is possible for arbitrary
higher order \cite{Marinucci2010}, \cite{Marinucci2011}. If the bispectrum and
trispectrum are zero then the field can not be Gaussian, in case all
polyspectra, except the second order one, are zero then the isotropic field is
necessary Gaussian. First we show that the characterization of the isotropy in
$p^{th}$ order can be given similarly to the previous cases.

\begin{lemma}
\label{Lemm_Iso_p}The field
\[
X\left(  L\right)  =\sum_{\ell=0}^{\infty}\sum_{m=-\ell}^{\ell}Z_{\ell}%
^{m}Y_{\ell}^{m}\left(  L\right)  ,
\]
is isotropic in $p^{th}$ order $\left(  p>3\right)  $ iff the cumulant
$\operatorname*{Cum}\nolimits_{p}\left(  Z_{\ell_{1}}^{m_{1}},Z_{\ell_{2}%
}^{m_{2}},\ldots,Z_{\ell_{p}}^{m_{p}}\right)  $ of the triangular array
$Z_{\ell}^{m}$ has the form
\begin{align}
\operatorname*{Cum}\nolimits_{p}\left(  Z_{\ell_{1}}^{m_{1}},Z_{\ell_{2}%
}^{m_{2}},\ldots,Z_{\ell_{p}}^{m_{p}}\right)   &  =\sum_{\substack{\ell
^{1},\ldots,\ell^{p-3} \\m^{1},\ldots,m^{p-3}}}\left(  -1\right)
^{\Sigma_{a=1}^{p-3}m^{a}}\prod\limits_{a=0}^{p-3}%
\begin{pmatrix}
\ell^{a} & \ell_{a+2} & \ell^{a+1}\\
-m^{a} & m_{a+2} & m^{a+1}%
\end{pmatrix}
\label{CumP_Z}\\
&  \times\prod\limits_{a=1}^{p-3}\sqrt{2\ell^{a}+1}\widetilde{S}_{p}\left(
\left.  \ell_{1},\ldots,\ell_{p}\right\vert \ell^{1},\ldots,\ell^{p-3}\right)
,\nonumber
\end{align}
where $\ell^{0}=\ell_{1}$, $\ell^{p-2}=\ell_{p}$, $k^{0}=-k_{1}$,
$k^{p-2}=k_{p}$, $m^{0}=-m_{1}$, $m^{p-2}=m_{p}$. The function $\widetilde{S}%
_{p}\left(  \left.  \ell_{1},\ldots,\ell_{p}\right\vert \ell^{1},\ldots
,\ell^{p-3}\right)  $ has the form
\begin{align}
\widetilde{S}_{p}\left(  \left.  \ell_{1},\ldots,\ell_{p}\right\vert \ell
^{1},\ldots,\ell^{p-3}\right)   &  =\sum_{\substack{k_{1},\ldots,k_{p}
\\k^{1},\ldots,k^{p-3}}}\left(  -1\right)  ^{\Sigma_{a=1}^{p-3}k^{a}}%
\prod\limits_{a=0}^{p-3}%
\begin{pmatrix}
\ell^{a} & \ell_{a+2} & \ell^{a+1}\\
-k^{a} & k_{a+2} & k^{a+1}%
\end{pmatrix}
\label{Spectr_p}\\
&  \times\prod\limits_{a=1}^{p-3}\sqrt{2\ell^{a}+1}\operatorname*{Cum}%
\nolimits_{p}\left(  Z_{\ell_{1}}^{k_{1}},Z_{\ell_{2}}^{k_{2}},\ldots
,Z_{\ell_{p}}^{k_{p}}\right)  .\nonumber
\end{align}

\end{lemma}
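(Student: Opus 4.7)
The plan is to mimic the strategy already deployed for Lemmas \ref{Lemm_Iso3} and \ref{Lemm_Iso4}, but now keep track of a chain of $p-3$ intermediate angular momenta. First, by multilinearity of cumulants together with the rotation rule (\ref{Z_rotated}),
\[
\operatorname*{Cum}\nolimits_{p}\!\left(\mathcal{Z}_{\ell_{1}}^{k_{1}},\ldots,\mathcal{Z}_{\ell_{p}}^{k_{p}}\right)=\sum_{m_{1},\ldots,m_{p}}\prod_{j=1}^{p}D_{k_{j},m_{j}}^{(\ell_{j})}(g)\,\operatorname*{Cum}\nolimits_{p}\!\left(Z_{\ell_{1}}^{m_{1}},\ldots,Z_{\ell_{p}}^{m_{p}}\right).
\]
Isotropy in $p^{\text{th}}$ order is the statement that the left hand side is independent of $g\in SO(3)$ and equals $\operatorname*{Cum}\nolimits_{p}(Z_{\ell_{1}}^{k_{1}},\ldots,Z_{\ell_{p}}^{k_{p}})$. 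Integrating both sides against the normalized Haar measure $dg$ on $SO(3)$ therefore leaves the cumulant unchanged while projecting the product of $p$ Wigner $D$-functions onto its rotationally invariant part.

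The crucial tool is the closed form for the Haar integral of a product of $p$ Wigner $D$-functions. For $p=3$ it reduces to the product of two 3j symbols used in Lemma \ref{Lemm_Iso3}. For $p\geq 4$ I would obtain it by induction: iterate the Clebsch--Gordan expansion
\[
D_{k_{1},m_{1}}^{(\ell_{1})}D_{k_{2},m_{2}}^{(\ell_{2})}=\sum_{\ell^{1},k^{1},m^{1}}(2\ell^{1}+1)
\begin{pmatrix}\ell_{1}&\ell_{2}&\ell^{1}\\k_{1}&k_{2}&-k^{1}\end{pmatrix}
\begin{pmatrix}\ell_{1}&\ell_{2}&\ell^{1}\\m_{1}&m_{2}&-m^{1}\end{pmatrix}
D_{k^{1},m^{1}}^{(\ell^{1})\ast},
\]
reducing the product of $p$ $D$-functions to a product of $p-1$, then $p-2$, etc., until the last step is the integrable triple product. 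Each contraction introduces one intermediate degree $\ell^{a}$ together with one pair of 3j symbols of the form
\[
\begin{pmatrix}\ell^{a}&\ell_{a+2}&\ell^{a+1}\\-k^{a}&k_{a+2}&k^{a+1}\end{pmatrix}
\begin{pmatrix}\ell^{a}&\ell_{a+2}&\ell^{a+1}\\-m^{a}&m_{a+2}&m^{a+1}\end{pmatrix},
\]
corresponding to the standard tree coupling $((\cdots((\ell_{1}\otimes\ell_{2})\otimes\ell_{3})\otimes\cdots)\otimes\ell_{p})$. After integrating, one factor depends only on the $m_{j}$'s and the other only on the $k_{j}$'s; the sign $(-1)^{\Sigma_{a}m^{a}}$ and the weights $\sqrt{2\ell^{a}+1}$ arise exactly from this iterative contraction. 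Substituting into the integrated identity for the cumulant yields (\ref{CumP_Z}) with $\widetilde{S}_{p}$ given by (\ref{Spectr_p}); the conditions $\ell^{0}=\ell_{1}$, $\ell^{p-2}=\ell_{p}$, $m^{0}=-m_{1}$, $m^{p-2}=m_{p}$ reflect the two free ends of the coupling chain.

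For the converse direction, assume (\ref{CumP_Z}). Then for any $g\in SO(3)$ I would substitute into the multilinear expansion of $\operatorname*{Cum}\nolimits_{p}(\mathcal{Z}_{\ell_{1}}^{k_{1}},\ldots,\mathcal{Z}_{\ell_{p}}^{k_{p}})$ and use the transformation law of 3j symbols under rotations, equivalently the unitarity of the Wigner matrices $D^{(\ell)}(g)D^{(\ell)}(g)^{\ast}=I$: summing over any $m_{a+2}$ with a factor $D_{k_{a+2},m_{a+2}}^{(\ell_{a+2})}(g)$ against a 3j symbol reproduces the same 3j symbol with the other two order-indices rotated, and these rotations cancel pairwise along the chain. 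This reduces the rotated cumulant to the unrotated one and yields isotropy. Finally, (\ref{Spectr_p}) is obtained from (\ref{CumP_Z}) by orthogonality of the 3j symbols (the Biedenharn--Elliott identity at each link), which gives unique invertibility and proves Remark \ref{Trisp_principal} in the general setting.

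The main obstacle is not conceptual but combinatorial: one must verify carefully that the iterated Clebsch--Gordan contraction used above is exactly the standard linear coupling scheme appearing in (\ref{CumP_Z}), that the phases $(-1)^{\Sigma_{a}m^{a}}$ are bookkept consistently across the chain, and that the end conditions on $\ell^{0},\ell^{p-2},m^{0},m^{p-2}$ are respected. An induction on $p$, with base cases $p=3$ (Lemma \ref{Lemm_Iso3}) and $p=4$ (Lemma \ref{Lemm_Iso4}) already established, is the cleanest route: at step $p\to p+1$ one couples $\ell^{p-3}$ and $\ell_{p+1}$ via a single extra 3j symbol and applies the orthogonality relations of Appendix \ref{App_Spher_Harmonics} to close the identity.
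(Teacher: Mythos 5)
Your proposal is correct and follows essentially the same route as the paper: for necessity you integrate the $D$-matrix expansion of $\operatorname*{Cum}\nolimits_{p}\left(\mathcal{Z}_{\ell_{1:p}}^{m_{1:p}}\right)$ over the Haar measure and evaluate $\int_{SO(3)}\prod_{a}D_{k_{a},m_{a}}^{(\ell_{a})}dg$ by iterated Clebsch--Gordan contraction (the paper's Lemma \ref{Lemma_IntD_p}), and for sufficiency you contract the chain of $3j$ symbols against the product of $D$-matrices to recover the same chain in the $k$-indices (the paper's Lemma \ref{Lemma_SumD_p}). The paper carries out the inductive bookkeeping of phases and end conditions that you flag as the remaining combinatorial work, but the strategy is identical.
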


See the Appendix \ref{Append_PolySp} for the proof. Consider a convex polygon
with vertices $A_{1:p-1}$, and edges $\ell_{1:p}$. The diagonals denoted by
$\ell^{j}$, $j=1,2,\ldots,p-3$, starting from the vertex $A_{1}$ divide this
polygon into $p-2$ triangles. The first triangle has sides (angular mometum)
$\ell_{1}$, $\ell_{2}$ and $\ell^{1}$, the next one has sides $\ell^{1}$,
$\ell_{3}$ and $\ell^{2}$, the general one is $\ell^{a} $, $\ell_{a+2}$ and
$\ell^{a+1}$, finally the last one $\ell^{p-3}$, $\ell_{p-1}$ and $\ell_{p}$.
For each $a$ the sides of the triangle $\left(  \ell^{a},\ell_{a+2},\ell
^{a+1}\right)  $ should fulfil the triangle inequality $\left\vert \ell
^{a}-\ell^{a+1}\right\vert \leq\ell_{a+2}\leq\ell^{a}+\ell^{a+1}$. The
coefficients in (\ref{CumP_Z}) will differ from zero if orders $-m^{a}%
,m_{a+2},m^{a+1}$, fulfil the assumption $-m^{a}+m_{a+2}+m^{a+1}=0$, for all
$a$. This implies $m_{1}+m_{2}=-m^{1}$, $m_{3}+m^{2}=m^{1}$, \ldots,
$m_{p-1}+m_{p}=m^{p-3}$. Let us plug in consecutively $m^{a}$ and we shall
arrive to the result $m_{1}+m_{2}+\cdots+m_{p}=0$. Hence $\operatorname*{Cum}%
\nolimits_{p}\left(  Z_{\ell_{1}}^{m_{1}},Z_{\ell_{2}}^{m_{2}},\ldots
,Z_{\ell_{p}}^{m_{p}}\right)  =0$, unless $m_{1}+m_{2}+\cdots+m_{p}=0$.

\begin{remark}
The cumulant $\operatorname*{Cum}\nolimits_{p}\left(  Z_{\ell_{1}}^{m_{1}%
},Z_{\ell_{2}}^{m_{2}},\ldots,Z_{\ell_{p}}^{m_{p}}\right)  $ is invariant
under the order of the quantum numbers $\ell_{1},\ell_{2},\ldots,\ell_{p}$,
hence the right hand side of (\ref{CumP_Z}) is invariant as well. The result
is that the function $\widetilde{S}_{p}\left(  \left.  \ell_{1},\ldots
,\ell_{p}\right\vert \ell^{1},\ldots,\ell^{p-3}\right)  $ given on values
$\ell_{1}\leq\ell_{2}\leq\cdots\leq\ell_{p}$ will determine all cumulants
$\operatorname*{Cum}\nolimits_{p}\left(  Z_{\ell_{1}}^{m_{1}},Z_{\ell_{2}%
}^{m_{2}},\ldots,Z_{\ell_{p}}^{m_{p}}\right)  $ by (\ref{CumP_Z}).
\end{remark}

Repeat the representation of the field
\[
X\left(  L\right)  =\sum_{\ell=0}^{\infty}u_{\ell}\left(  L\right)  ,
\]
where%
\[
u_{\ell}\left(  L\right)  =\sum_{m=-\ell}^{\ell}Y_{\ell}^{m}\left(  L\right)
Z_{\ell}^{m}.
\]
We are interested in the invariance of the distribution of $u_{\ell}\left(
L\right)  $ under rotations. If the location is fixed then $u_{\ell}\left(
L\right)  $ is connected to the distribution of $\mathcal{Z}_{\ell}^{k}$,
since the Condon and Shortley phase convention
(\ref{Convention_Condon_Shortley}) provides
\[
u_{\ell}\left(  L\right)  =\sqrt{\frac{2\ell+1}{4\pi}}\sum_{m=-\ell}^{\ell
}D_{m,0}^{\left(  \ell\right)  \ast}Z_{\ell}^{m}.
\]
Our main interest is the comparison of the finite dimensional distributions of
the process $u_{\ell}\left(  L\right)  $ to the rotated one in case the
rotation carries one location to the North pole $N$, since $u_{\ell}$
simplifies to
\[
u_{\ell}\left(  N\right)  =\sqrt{\frac{2\ell+1}{4\pi}}Z_{\ell}^{0}.
\]
We have
\begin{multline*}
\operatorname*{Cum}\nolimits_{p}\left(  u_{\ell_{1}}\left(  L_{1}\right)
,u_{\ell_{2}}\left(  L_{2}\right)  ,\ldots,u_{\ell_{p}}\left(  L_{p}\right)
\right)  =\sum_{m_{1},\ldots,m_{p}}\prod\limits_{j=1}^{p}Y_{\ell_{j}}^{m_{j}%
}\left(  L_{j}\right)  \operatorname*{Cum}\nolimits_{p}\left(  Z_{\ell_{1}%
}^{m_{1}},Z_{\ell_{2}}^{m_{2}},\ldots,Z_{\ell_{p}}^{m_{p}}\right) \\
=\sum_{m_{1},\ldots,m_{p}}\prod\limits_{j=1}^{p}Y_{\ell_{j}}^{m_{j}}\left(
L_{j}\right)  \sum_{\substack{\ell^{1},\ldots,\ell^{p-3}\\m^{1},\ldots
,m^{p-3}}}\left(  -1\right)  ^{\Sigma_{a=1}^{p-3}m^{a}}\prod\limits_{a=0}%
^{p-3}%
\begin{pmatrix}
\ell^{a} & \ell_{a+2} & \ell^{a+1}\\
-m^{a} & m_{a+2} & m^{a+1}%
\end{pmatrix}
\\
\times\prod\limits_{a=1}^{p-3}\sqrt{2\ell^{a}+1}\widetilde{S}_{p}\left(
\left.  \ell_{1},\ldots,\ell_{p}\right\vert \ell^{1},\ldots,\ell^{p-3}\right)
\\
=\sum_{\ell^{1},\ldots,\ell^{p-3}}\widetilde{I}_{\ell_{1:p},\ell^{1:p-3}%
}\left(  L_{1},\ldots,L_{p}\right)  \widetilde{S}_{p}\left(  \left.  \ell
_{1},\ldots,\ell_{p}\right\vert \ell^{1},\ldots,\ell^{p-3}\right)  .
\end{multline*}
The following $p-$product\textit{\ of spherical harmonics}\textbf{\ }
$Y_{\ell}^{m}$
\begin{multline*}
\widetilde{I}_{\ell_{1},\ldots,\ell_{p},\ell^{1},\ldots,\ell^{p-3}}\left(
L_{1},\ldots,L_{p}\right)  =\sum_{\substack{m_{1},\ldots,m_{p}\\m^{1}%
,\ldots,m^{p-3}}}\prod\limits_{j=1}^{p}Y_{\ell_{j}}^{m_{j}}\left(
L_{j}\right) \\
\times\left(  -1\right)  ^{\Sigma k^{1:p-3}}\prod\limits_{a=0}^{p-3}%
\begin{pmatrix}
\ell^{a} & \ell_{a+2} & \ell^{a+1}\\
-m^{a} & m_{a+2} & m^{a+1}%
\end{pmatrix}
\prod\limits_{a=1}^{p-3}\sqrt{2\ell^{a}+1},
\end{multline*}
is rotation invariant, see (\ref{RotationProd3}), here $k^{1:p-3}$ denotes
$\left(  k^{1},\ldots,k^{p-3}\right)  $ and $\Sigma k^{1:p-3}=%
%TCIMACRO{\tsum \nolimits_{j=1}^{p-3}}%
%BeginExpansion
{\textstyle\sum\nolimits_{j=1}^{p-3}}
%EndExpansion
k^{j}$, for short. Hence we can apply the rotation $g_{L_{p}L_{p-1}}$ such
that $g_{L_{p}L_{p-1}}L_{p}=N$, and $g_{L_{p}L_{p-1}}L_{p-1}$ belongs into the
plane $xOz$, we have
\begin{multline*}
I_{\ell_{1},\ldots,\ell_{p},\ell^{1},\ldots,\ell^{p-3}}\left(  L_{1}%
,\ldots,L_{p}\right)  =\sqrt{\frac{2\ell_{p}+1}{4\pi}}\sum_{\substack{m_{1}%
,\ldots,m_{p}\\m^{1},\ldots,m^{p-3}}}\prod\limits_{j=1}^{p-1}Y_{\ell_{j}%
}^{m_{j}}\left(  g_{L_{p}L_{p-1}}L_{j}\right) \\
\times\left(  -1\right)  ^{\Sigma k^{1:p-3}}\prod\limits_{a=0}^{p-3}%
\begin{pmatrix}
\ell^{a} & \ell_{a+2} & \ell^{a+1}\\
-m^{a} & m_{a+2} & m^{a+1}%
\end{pmatrix}
\prod\limits_{a=1}^{p-3}\sqrt{2\ell^{a}+1}%
\begin{pmatrix}
\ell^{p-3} & \ell_{p-1} & \ell_{p}\\
-m^{p-3} & m_{p-1} & 0
\end{pmatrix}
.
\end{multline*}
According to the usual measure $\prod_{k=1}^{p-1}\Omega\left(  dL_{k}\right)
=\prod_{k=1}^{p-1}\sin\vartheta_{k}d\vartheta_{k}d\varphi_{k}$, $\vartheta
_{k}\in\left[  0,\pi\right]  $, $\varphi_{k}\in\left[  0,2\pi\right]  $, the
system of functions $I_{\ell_{1},\ldots,\ell_{p},\ell^{1},\ldots,\ell^{p-3}}$
forms an orthogonal system.

Consider now
\begin{align*}
\widetilde{S}_{p}\left(  \left.  \ell_{1},\ldots,\ell_{p}\right\vert \ell
^{1},\ldots,\ell^{p-3}\right)   &  =\sum_{\substack{m_{1},\ldots,m_{p}
\\m^{1},\ldots,m^{p-3}}}\left(  -1\right)  ^{\Sigma_{a=1}^{p-3}m^{a}}%
\prod\limits_{a=0}^{p-3}%
\begin{pmatrix}
\ell^{a} & \ell_{a+2} & \ell^{a+1}\\
m^{a} & m_{a+2} & m^{a+1}%
\end{pmatrix}
\\
&  \times\prod\limits_{a=1}^{p-3}\sqrt{2\ell^{a}+1}\operatorname*{Cum}%
\nolimits_{p}\left(  Z_{\ell_{1}}^{m_{1}},Z_{\ell_{2}}^{m_{2}},\ldots
,Z_{\ell_{p}}^{m_{p}}\right)
\end{align*}
where remember $\ell^{0}=\ell_{1}$, and $\ell^{p-2}=\ell_{p}$. The left size
is symmetric in $\ell_{1},\ldots,\ell_{p}$ hence the right size must be
symmetric as well. One might fix an order for the entries of $\ell_{1}%
,\ldots,\ell_{p}$ to get a unique representation for the cumulant. We consider
a monotone ordering $\ell_{1}\leq\ell_{2}\leq\ldots\leq\ell_{p}$ and refer to
it as canonical representation.

Note that parity transformation (\ref{Parity}) implies that $\ell_{1}+\ell
_{2}+\ell_{3}+\ldots+\ell_{p}$ must be even.%
\begin{align}
\operatorname*{Cum}\nolimits_{p}\left(  X\left(  L_{1}\right)  ,X\left(
L_{2}\right)  ,X\left(  L_{3}\right)  ,\ldots,X\left(  L_{p}\right)  \right)
& =\sum_{\ell_{1:p}=0}^{\infty}\operatorname*{Cum}\nolimits_{p}\left(
u_{\ell_{1}}\left(  L_{1}\right)  ,u_{\ell_{2}}\left(  L_{2}\right)
,\ldots,u_{\ell_{p}}\left(  L_{p}\right)  \right)  \nonumber\\
& =\sum_{\ell_{1:p}=0}^{\infty}\widetilde{S}_{p}\left(  \left.  \ell
_{1},\ldots,\ell_{p}\right\vert \ell^{1},\ldots,\ell^{p-3}\right)  I_{\ell
_{1},\ldots,\ell_{p},\ell^{1},\ldots,\ell^{p-3}}\left(  L_{1},\ldots
,L_{p}\right)  .\label{Expansion_p}%
\end{align}

\begin{definition}
The p$^{th}$ order polyspectrum of the isotropic field $X\left(  L\right)  $
is \newline$S_{p}\left(  \left.  \ell_{1},\ldots,\ell_{p}\right\vert \ell
^{1},\ldots,\ell^{p-3}\right)  =\widetilde{S}_{p}\left(  \left.  \ell
_{1},\ldots,\ell_{p}\right\vert \ell^{1},\ldots,\ell^{p-3}\right)  $.
\end{definition}

\begin{theorem}
The p$^{th}$ order cumulant $\operatorname*{Cum}\nolimits_{p}\left(  X\left(
L_{1}\right)  ,X\left(  L_{2}\right)  ,\ldots,X\left(  L_{p}\right)  \right)
$ of the isotropic field $X\left(  L\right)  $ have the series expansion
(\ref{Expansion_p}) in terms of the polyspectrum $S_{p}\left(  \left.
\ell_{1},\ldots,\ell_{p}\right\vert \ell^{1},\ldots,\ell^{p-3}\right)  $ and
orthonormed system $\mathcal{I}_{\left.  \ell_{1},\ldots,\ell_{p}\right\vert
\ell^{1},\ldots,\ell^{p-3}}$, hence
\[
S_{p}\left(  \left.  \ell_{1},\ldots,\ell_{p}\right\vert \ell^{1},\ldots
,\ell^{p-3}\right)  =%
%TCIMACRO{\tidotsint _{\mathbb{S}_{2}}}%
%BeginExpansion
{\textstyle\idotsint_{\mathbb{S}_{2}}}
%EndExpansion
\mathcal{C}_{p}\left(  \vartheta_{1:p-1},\varphi_{1:p-3}\right)
\mathcal{I}_{\left.  \ell_{1},\ldots,\ell_{p}\right\vert \ell^{1},\ldots
,\ell^{p-3}}\left(  \vartheta_{1:p-1},\varphi_{1:p-3}\right)  \prod
_{k=1}^{p-1}\Omega\left(  dL_{k}\right)  .
\]

\end{theorem}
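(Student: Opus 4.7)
The plan is to assemble the series expansion (\ref{Expansion_p}) directly from the calculations preceding the statement, and then derive the inversion formula from orthonormality of the system $\mathcal{I}_{\ell_{1:p},\ell^{1:p-3}}$ under the product surface measure on $(\mathbb{S}_{2})^{p-1}$.

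For the expansion I would start from $X(L)=\sum_{\ell}u_{\ell}(L)$ and multilinearity of cumulants, giving
$$\operatorname{Cum}\nolimits_{p}\bigl(X(L_{1}),\ldots,X(L_{p})\bigr)=\sum_{\ell_{1:p}=0}^{\infty}\operatorname{Cum}\nolimits_{p}\bigl(u_{\ell_{1}}(L_{1}),\ldots,u_{\ell_{p}}(L_{p})\bigr).$$
Substituting the form of $\operatorname{Cum}\nolimits_{p}(Z^{m_{1}}_{\ell_{1}},\ldots,Z^{m_{p}}_{\ell_{p}})$ supplied by Lemma \ref{Lemm_Iso_p} and factoring out $\widetilde{S}_{p}$ leaves the un-normalized polypolar harmonic $\widetilde{I}_{\ell_{1:p},\ell^{1:p-3}}$. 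Its rotational invariance (extending the $p=4$ argument based on Lemma \ref{Lemma_4D} by induction on the number of triangles in the associated decomposition polygon) allows me to apply the rotation $g_{L_{p}L_{p-1}}$, sending $L_{p}$ to the North pole and $L_{p-1}$ into the $xOz$-plane; the residual $Y^{0}_{\ell_{p}}(N)=\sqrt{(2\ell_{p}+1)/(4\pi)}$ supplies the normalization defining $\mathcal{I}$, which yields (\ref{Expansion_p}).

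For the inversion formula I need the orthonormality relation
$$\idotsint_{\mathbb{S}_{2}}\mathcal{I}^{\ast}_{\ell_{1:p},\ell^{1:p-3}}\,\mathcal{I}_{j_{1:p},\widetilde{\ell}^{1:p-3}}\prod_{k=1}^{p-1}\Omega(dL_{k})=\prod_{k=1}^{p}\delta_{\ell_{k},j_{k}}\prod_{a=1}^{p-3}\delta_{\ell^{a},\widetilde{\ell}^{a}}.$$
I would prove this by expanding both factors in spherical harmonics and integrating: orthonormality of $Y_{\ell}^{m}$ on each of $L_{1},\ldots,L_{p-2}$ produces $\delta_{\ell_{k},j_{k}}\delta_{m_{k},n_{k}}$ for $k\leq p-2$, while the $L_{p-1}$ integration (constrained to $\varphi_{p-1}=0$) paired with the pole value $Y_{\ell_{p}}^{0}(N)$ forces $\delta_{\ell_{p-1},j_{p-1}}\delta_{\ell_{p},j_{p}}$. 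What remains is a sum of squared 3j products that collapses by iterated use of the standard 3j orthogonality
$$(2\ell^{a+1}+1)\sum_{m^{a},m_{a+2}}\begin{pmatrix}\ell^{a}&\ell_{a+2}&\ell^{a+1}\\ -m^{a}&m_{a+2}&m^{a+1}\end{pmatrix}\begin{pmatrix}\ell^{a}&\ell_{a+2}&\widetilde{\ell}^{a+1}\\ -m^{a}&m_{a+2}&\widetilde{m}^{a+1}\end{pmatrix}=\delta_{\ell^{a+1},\widetilde{\ell}^{a+1}}\delta_{m^{a+1},\widetilde{m}^{a+1}},$$
applied recursively from the anchored diagonal $\ell^{0}=\ell_{1}$ through $a=0,1,\ldots,p-4$. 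The $p=3$ and $p=4$ computations in the body of the paper serve as the base and first inductive step. Multiplying (\ref{Expansion_p}) by $\mathcal{I}^{\ast}_{\ell_{1:p},\ell^{1:p-3}}$ and integrating term by term then delivers the claimed inversion formula.

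The main obstacle is the bookkeeping in the 3j telescoping: the phases $(-1)^{\Sigma m^{a}}$ and the weights $\sqrt{2\ell^{a}+1}$ must be tracked consistently across all $p-3$ internal diagonals, and the anchor-end contributions from $L_{p-1}$ and $L_{p}$ must be shown to produce exactly the right boundary factors. Once the induction mechanism is in place, however, the argument reduces to repeating the trispectrum calculation displayed after Lemma \ref{Lemm_Iso4} one triangle at a time along the polygon.
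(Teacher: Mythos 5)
Your proposal follows essentially the same route as the paper: decompose $X=\sum_\ell u_\ell$, insert the cumulant form from Lemma \ref{Lemm_Iso_p}, use rotational invariance of the $p$-product of spherical harmonics (the paper's Lemma \ref{Lemma_SumD_p} is exactly the inductive telescoping you describe) to reduce via $g_{L_pL_{p-1}}$, and then invert by the orthonormality of $\mathcal{I}_{\ell_{1:p},\ell^{1:p-3}}$ obtained from iterated $3j$ orthogonality. The paper merely asserts the orthogonality for general $p$ after exhibiting it for $p=3,4$, so your sketch of the recursive $3j$ collapse is a correct filling-in of the same argument rather than a different one.
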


\subsection{Linear field}

Let us consider the particular case when $Z_{\ell}^{m}$ are independent if
$\ell$ is fixed. We have%

\begin{align*}
\operatorname*{Cum}\nolimits_{p}\left(  Z_{\ell}^{m_{1}},Z_{\ell}^{m_{2}%
},\ldots,Z_{\ell}^{m_{p}}\right)   &  =\delta_{m_{i}=m}\sum_{\substack{\ell
^{1},\ldots,\ell^{p-3} \\m^{1},\ldots,m^{p-3}}}\left(  -1\right)
^{\Sigma_{a=1}^{p-3}m^{a}}\prod\limits_{a=0}^{p-3}%
\begin{pmatrix}
\ell^{a} & \ell & \ell^{a+1}\\
-m^{a} & m_{a+2} & m^{a+1}%
\end{pmatrix}
\\
&  \times S_{p}\left(  \left.  \ell\right\vert \ell^{1},\ldots,\ell
^{p-3}\right)  \prod\limits_{a=1}^{p-3}\sqrt{2\ell^{a}+1}.
\end{align*}
We have seen that for $p=3,4$, all $m_{i}=0$ follows. Let us see the case
$p=5$,
\[%
\begin{pmatrix}
\ell & \ell^{1}\\
m_{1:2} & m^{1}%
\end{pmatrix}%
\begin{pmatrix}
\ell^{1} & \ell & \ell^{2}\\
-m^{1} & m_{3} & m^{2}%
\end{pmatrix}%
\begin{pmatrix}
\ell^{2} & \ell & \ell\\
-m^{2} & m_{4} & m_{5}%
\end{pmatrix}
\]
then from the selection rules follows that $m^{1}=-\left(  m_{1}+m_{2}\right)
=-2m$, $m^{2}=m^{1}-m_{3}=-3m$, $m^{2}=m_{4}+m_{5}=2m$, hence $m=0.$ In
general it is easy to see that $m^{k}=-\left(  k+1\right)  m$, for
$k=1,2,\ldots,p-3$, and at the same time $m^{p-3}=2m$, hence $m=0$, and
$m^{j}=0$ as well.

Consider the polyspectrum
\[
\widetilde{S}_{p}\left(  \left.  \ell\right\vert \ell^{1:p-3}\right)
=\sum_{\substack{k_{1},\ldots,k_{p} \\k^{1},\ldots,k^{p-3}}}\left(  -1\right)
^{\Sigma k^{1:p-2}}\prod\limits_{a=0}^{p-3}%
\begin{pmatrix}
\ell^{a} & \ell & \ell^{a+1}\\
-k^{a} & k_{a+2} & k^{a+1}%
\end{pmatrix}
\prod\limits_{a=1}^{p-3}\sqrt{2\ell^{a}+1}\operatorname*{Cum}\nolimits_{p}%
\left(  Z_{\ell}^{k_{1:p}}\right)  ,
\]
from the independence follows that $\ell_{j}=\ell$ and $k_{j}=k$, for
$j=1,2,\ldots,p$. Hence similar argument to the previous one leads to the
result: $k_{j}=0$, and $k^{j}=0$ as well. We have
\[
\widetilde{S}_{p}\left(  \left.  \ell\right\vert \ell^{1:p-3}\right)
=\prod\limits_{a=0}^{p-3}%
\begin{pmatrix}
\ell^{a} & \ell & \ell^{a+1}\\
0 & 0 & 0
\end{pmatrix}
\prod\limits_{a=1}^{p-3}\sqrt{2\ell^{a}+1}\operatorname*{Cum}\nolimits_{p}%
\left(  Z_{\ell}^{0},Z_{\ell}^{0},\ldots,Z_{\ell}^{0}\right)  .
\]
Now, by the Lemma \ref{Lemm_Iso_p} we get
\[
\operatorname*{Cum}\nolimits_{p}\left(  Z_{\ell}^{0},Z_{\ell}^{0}%
,\ldots,Z_{\ell}^{0}\right)  =\sum_{\ell^{1},\ldots,\ell^{p-3}}\prod
\limits_{a=0}^{p-3}%
\begin{pmatrix}
\ell^{a} & \ell & \ell^{a+1}\\
0 & 0 & 0
\end{pmatrix}
^{2}\prod\limits_{a=1}^{p-3}\sqrt{2\ell^{a}+1}\operatorname*{Cum}%
\nolimits_{p}\left(  Z_{\ell}^{0},Z_{\ell}^{0},\ldots,Z_{\ell}^{0}\right)  .
\]
Hence $\operatorname*{Cum}\nolimits_{p}\left(  Z_{\ell_{1}}^{k_{1}}%
,Z_{\ell_{2}}^{k_{2}},\ldots,Z_{\ell_{p}}^{k_{p}}\right)  =0$, for all $k_{j}%
$. Now we have a general conclusion because the only case when all cumulants
vanish except the second order one is the Gaussian. Once the rows of $\left\{
Z_{\ell}^{m}\right\}  $ are Gaussian then all the entries of $\left\{
Z_{\ell}^{m}\right\}  $ are independent. Indeed the isotropy implies that all
the entries of $\left\{  Z_{\ell}^{m}\right\}  $ are uncorraleted now they are
Gaussian hence they are independent.

\begin{lemma}
If the isotropic field $X\left(  L\right)  $ is linear, i.e. inside the rows
of the generating array $Z_{\ell}^{m}$ all random variables are independent,
then the whole array contains independent entries and $X\left(  L\right)  $ is Gaussian.
\end{lemma}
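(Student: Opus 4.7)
The plan is to use Lemma \ref{Lemm_Iso_p} together with the row-independence hypothesis to show that every within-row cumulant of order $p\ge 3$ of the array $\{Z_\ell^m\}$ vanishes. Granted that, each row $\{Z_\ell^m\}_m$ is jointly Gaussian and each field $u_\ell$ becomes Gaussian; combining this with the diagonal second-order structure from the second-order isotropy lemma will yield a Gaussian isotropic $X(L)$ whose generating entries are independent.

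First I would handle the within-row case by specializing \eqref{CumP_Z} to $\ell_1=\cdots=\ell_p=\ell$. Row-independence kills $\operatorname{Cum}\nolimits_p(Z_\ell^{m_1},\ldots,Z_\ell^{m_p})$ unless $m_1=\cdots=m_p=m$, because otherwise the $p$-tuple splits into nonempty independent subsets on which joint cumulants vanish. Iterating the selection rule $-m^a+m_{a+2}+m^{a+1}=0$ along the chain of 3j symbols in \eqref{CumP_Z} gives $m^k=-(k+1)m$ for $k\le p-3$, while the terminal symbol enforces $m^{p-3}=2m$, so $m=0$. Substituting $\ell_i=\ell$ and $m_i=0$ into \eqref{Spectr_p} and then back into \eqref{CumP_Z} produces the self-referential identity
\[
\operatorname{Cum}\nolimits_p(Z_\ell^0,\ldots,Z_\ell^0)=c_{\ell,p}\,\operatorname{Cum}\nolimits_p(Z_\ell^0,\ldots,Z_\ell^0),
\]
with
\[
c_{\ell,p}=\sum_{\ell^{1},\ldots,\ell^{p-3}}\prod_{a=0}^{p-3}\begin{pmatrix}\ell^a & \ell & \ell^{a+1}\\ 0 & 0 & 0\end{pmatrix}^{2}\prod_{a=1}^{p-3}(2\ell^a+1),\qquad \ell^0=\ell^{p-2}=\ell.
\]

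The main obstacle is verifying $c_{\ell,p}\ne 1$ for every $\ell\ge 1$ and $p\ge 3$; granted that, the identity forces $\operatorname{Cum}\nolimits_p(Z_\ell^0,\ldots,Z_\ell^0)=0$ and all within-row cumulants of order $\ge 3$ vanish. The cases $p=3,4$ are handled explicitly in the preceding subsections; for general $p$ I would telescope the sum using the orthogonality relation $\sum_{\ell'}(2\ell'+1)\begin{pmatrix}\ell_1 & \ell_2 & \ell'\\ 0 & 0 & 0\end{pmatrix}^{2}=1$ paired with the strict bound $\begin{pmatrix}\ell' & \ell & \ell\\ 0 & 0 & 0\end{pmatrix}^{2}<1$ for $\ell\ge 1$, obtaining $c_{\ell,p}<1$ uniformly.

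Once every within-row cumulant of order $\ge 3$ is zero, each row is jointly Gaussian and $u_\ell(L)=\sum_m Z_\ell^m Y_\ell^m(L)$ is a Gaussian field. The same reduction of Lemma \ref{Lemm_Iso_p}, now applied to mixed indices $\ell_1\le\cdots\le\ell_p$, expresses each mixed polyspectrum $\widetilde{S}_p(\ell_{1:p}\mid\cdot)$ through \eqref{Spectr_p} as a linear combination of cumulants that, via the selection rule and the already-established within-row vanishing, collapse to zero, propagating the vanishing to every $p$-th order cumulant of the array. Hence the full array has Gaussian characteristic function; combined with the identity $\mathsf{E}Z_\ell^m Z_{\ell'}^{m'\ast}=\delta_{\ell,\ell'}\delta_{m,m'}f_\ell$ from the second-order isotropy lemma, uncorrelated jointly Gaussian variables are independent, so the entries of $\{Z_\ell^m\}$ are independent and $X(L)=\sum_{\ell,m}Z_\ell^m Y_\ell^m(L)$ is a Gaussian isotropic field.
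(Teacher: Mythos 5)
Your within-row argument reproduces the paper's own computation: specialize Lemma \ref{Lemm_Iso_p} to $\ell_1=\cdots=\ell_p=\ell$, use row-independence to reduce to $m_1=\cdots=m_p=m$, run the selection rules down the chain of $3j$ symbols to force $m=0$, and close the loop between \eqref{CumP_Z} and \eqref{Spectr_p} to get the self-referential identity. Your verification that $c_{\ell,p}<1$ (telescoping with $\sum_{\ell'}(2\ell'+1)\bigl(\begin{smallmatrix}\ell_1&\ell_2&\ell'\\ 0&0&0\end{smallmatrix}\bigr)^2=1$ against the strict bound on the terminal symbol for $\ell\geq1$) is actually \emph{more} than the paper supplies -- the text simply asserts the conclusion -- and your normalization $\prod(2\ell^a+1)$ is the correct one obtained by composing the two formulas.

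The genuine gap is in the penultimate step, where you claim the vanishing ``propagates to every $p$-th order cumulant of the array.'' For genuinely mixed degrees $\ell_1\leq\cdots\leq\ell_p$ not all equal, \eqref{Spectr_p} expresses $\widetilde{S}_p(\ell_{1:p}\mid\cdot)$ through cumulants $\operatorname{Cum}_p\bigl(Z_{\ell_1}^{k_1},\ldots,Z_{\ell_p}^{k_p}\bigr)$ that are \emph{not} within-row cumulants: the hypothesis gives independence only inside each row, so there is no factorization across distinct $\ell_j$ to kill them, and the selection rule only constrains $k_1+\cdots+k_p=0$. Nothing in your argument (or in the already-established within-row vanishing) forces these cross-row cumulants to be zero, yet their vanishing is exactly what is needed to conclude that the whole array is jointly Gaussian before invoking ``uncorrelated jointly Gaussian implies independent.'' To be fair, the paper's own proof has the same lacuna -- it passes from ``each row is Gaussian'' plus ``entries are uncorrelated'' directly to ``entries are independent,'' which presupposes joint Gaussianity of the full array -- so you have faithfully reconstructed the published argument; but as a self-contained proof the cross-row step requires a separate input, e.g.\ the characterization of Baldi and Marinucci or an argument exploiting rotational invariance of the joint law of several rows simultaneously.
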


\section{Construction of isotropic field}

The stochastic isotropic field%
\[
X\left(  L\right)  =\sum_{\ell=0}^{\infty}\sum_{m=-\ell}^{\ell}Z_{\ell}%
^{m}Y_{\ell}^{m}\left(  L\right)  ,
\]
on sphere has very special angular spectra, in particular the form of the
cumulants of the triangular array $\left\{  Z_{\ell}^{m}\right\}  $ is
restricted, see (\ref{CumP_Z}). If one starts with a non-Gaussian continuous
in mean square $X\left(  L\right)  $ then the triangular array $\left\{
Z_{\ell}^{m}\right\}  $ is naturally given by the inversion formula
\[
Z_{\ell}^{m}=\int_{\mathbb{S}_{2}}X\left(  L\right)  Y_{\ell}^{m\ast}\left(
L\right)  dL.
\]
Now, we address the reverse question of construction of a triangular array
$\left\{  Z_{\ell}^{m}\right\}  $ with the desired properties of their
cumulants. Let us start with triangular array $\left\{  \widetilde{Z}_{\ell
}^{m}\right\}  $, assume it is uncorrelated and all moments exist. Consider
the vectors $\widetilde{Z}_{\ell}=\left[  \widetilde{Z}_{\ell}^{-\ell
},\widetilde{Z}_{\ell}^{-\ell+1},\ldots,\widetilde{Z}_{\ell}^{\ell}\right]
^{\top}$, .$\ell=0,1,2,\ldots$, according to the rows of $\left\{
\widetilde{Z}_{\ell}^{m}\right\}  $. The finite dimensional distribution is
characterized by its cumulant function (logarithm of the characteristic
function)
\[
\Phi_{\widetilde{Z}}\left(  \left.  \omega_{\ell}\right\vert \ell
=0,1,2,\ldots\right)  =\ln\mathsf{E\exp}\left(  i\sum_{\ell}\omega_{\ell
}^{\top}\widetilde{Z}_{\ell}\right)  ,
\]
such that only finite many coordinates of the variables $\left(  \left.
\omega_{\ell}\right\vert \ell=0,1,2,\ldots\right)  $ are different from zero.
Consider the transformed series $\widehat{Z}_{\ell}=D^{\left(  \ell\right)
}\widetilde{Z}_{\ell}$, where $D^{\left(  \ell\right)  }$ is the Wigner matrix
of rotations $D^{\left(  \ell\right)  }=\left[  D_{k,m}^{\left(  \ell\right)
}\right]  _{k,m=-\ell}^{\ell}$, and define a triangular array $\left\{
Z_{\ell}^{m}\right\}  $ through the cumulant function
\[
\Phi_{Z}\left(  \left.  \omega_{\ell}\right\vert \ell=0,1,2,\ldots\right)
=\int_{SO\left(  3\right)  }\ln\mathsf{E\exp}\left(  i\sum_{\ell}\omega_{\ell
}^{\top}D^{\left(  \ell\right)  }\widetilde{Z}_{\ell}\right)  dg
\]
where again only finite many coordinates of the variables $\left(  \left.
\omega_{\ell}\right\vert \ell=0,1,2,\ldots\right)  $ are different from zero
and $dg=\sin\vartheta d\vartheta d\varphi d\gamma/8\pi^{2}$, is the Haar
measure with unite mass. This new triangular array $\left\{  Z_{\ell}%
^{m}\right\}  $ will be called Wigner D-transform of $\left\{  \widetilde{Z}%
_{\ell}^{m}\right\}  $. The third order cumulants of $Z_{\ell}^{m}$ for
instance
\begin{align*}
\operatorname*{Cum}\nolimits_{3}\left(  Z_{\ell_{1}}^{k_{1}},Z_{\ell_{2}%
}^{k_{2}},Z_{\ell_{3}}^{k_{3}}\right)   &  =\int_{SO\left(  3\right)  }%
\sum_{m_{1},m_{2},m_{3}}D_{k_{1},m_{1}}^{\left(  \ell_{1}\right)  }%
D_{k_{2},m_{2}}^{\left(  \ell_{2}\right)  }D_{k_{3},m_{3}}^{\left(  \ell
_{3}\right)  }dg\operatorname*{Cum}\nolimits_{3}\left(  \widetilde{Z}%
_{\ell_{1}}^{m_{1}},\widetilde{Z}_{\ell_{2}}^{m_{2}},\widetilde{Z}_{\ell_{3}%
}^{m_{3}}\right) \\
&  =%
\begin{pmatrix}
\ell_{1} & \ell_{2} & \ell_{3}\\
k_{1} & k_{2} & k_{3}%
\end{pmatrix}
\sum_{m_{1},m_{2},m_{3}}%
\begin{pmatrix}
\ell_{1} & \ell_{2} & \ell_{3}\\
m_{1} & m_{2} & m_{3}%
\end{pmatrix}
\operatorname*{Cum}\nolimits_{3}\left(  \widetilde{Z}_{\ell_{1}}^{m_{1}%
},\widetilde{Z}_{\ell_{2}}^{m_{2}},\widetilde{Z}_{\ell_{3}}^{m_{3}}\right) \\
&  =%
\begin{pmatrix}
\ell_{1} & \ell_{2} & \ell_{3}\\
k_{1} & k_{2} & k_{3}%
\end{pmatrix}
B_{3}\left(  \ell_{1},\ell_{2},\ell_{3}\right)  ,
\end{align*}
which fulfils (\ref{BicovZ}). The function $B_{3}\left(  \ell_{1},\ell
_{2},\ell_{3}\right)  $ is given by
\[
B_{3}\left(  \ell_{1},\ell_{2},\ell_{3}\right)  =\sum_{m_{1},m_{2},m_{3}}%
\begin{pmatrix}
\ell_{1} & \ell_{2} & \ell_{3}\\
m_{1} & m_{2} & m_{3}%
\end{pmatrix}
\operatorname*{Cum}\nolimits_{3}\left(  \widetilde{Z}_{\ell_{1}}^{m_{1}%
},\widetilde{Z}_{\ell_{2}}^{m_{2}},\widetilde{Z}_{\ell_{3}}^{m_{3}}\right)  ,
\]
and also
\[
B_{3}\left(  \ell_{1},\ell_{2},\ell_{3}\right)  =\sum_{m_{1},m_{2},m_{3}}%
\begin{pmatrix}
\ell_{1} & \ell_{2} & \ell_{3}\\
m_{1} & m_{2} & m_{3}%
\end{pmatrix}
\operatorname*{Cum}\nolimits_{3}\left(  Z_{\ell_{1}}^{m_{1}},Z_{\ell_{2}%
}^{m_{2}},Z_{\ell_{3}}^{m_{3}}\right)  .
\]
The conclusion is that a subset of cumulants $\operatorname*{Cum}%
\nolimits_{3}\left(  \widetilde{Z}_{\ell_{1}}^{m_{1}},\widetilde{Z}_{\ell_{2}%
}^{m_{2}},\widetilde{Z}_{\ell_{3}}^{m_{3}}\right)  $, i.e. $m_{1}+m_{2}%
+m_{3}=0$, is used in the construction and the superposition with
'probability' amplitudes is applied. In general we also have%
\begin{align*}
\operatorname*{Cum}\nolimits_{p}\left(  Z_{\ell_{1}}^{m_{1}},Z_{\ell_{2}%
}^{m_{2}},\ldots,Z_{\ell_{p}}^{m_{p}}\right)   &  =\sum_{\substack{\ell
^{1},\ldots,\ell^{p-3}\\m^{1},\ldots,m^{p-3}}}\left(  -1\right)
^{\Sigma_{a=1}^{p-3}m^{a}}\prod\limits_{a=0}^{p-3}%
\begin{pmatrix}
\ell^{a} & \ell_{a+2} & \ell^{a+1}\\
-m^{a} & m_{a+2} & m^{a+1}%
\end{pmatrix}
\\
&  \times\prod\limits_{a=1}^{p-3}\sqrt{2\ell^{a}+1}S_{p}\left(  \left.
\ell_{1:p}\right\vert \ell^{1:p-3}\right)
\end{align*}
where
\begin{align*}
S_{p}\left(  \left.  \ell_{1:p}\right\vert \ell^{1:p-3}\right)   &
=\sum_{\substack{m_{1},\ldots,m_{p}\\m^{1},\ldots,m^{p-3}}}\left(  -1\right)
^{\Sigma_{a=1}^{p-3}m^{a}}\prod\limits_{a=0}^{p-3}%
\begin{pmatrix}
\ell^{a} & \ell_{a+2} & \ell^{a+1}\\
-m^{a} & m_{a+2} & m^{a+1}%
\end{pmatrix}
\\
&  \times\prod\limits_{a=1}^{p-3}\sqrt{2\ell^{a}+1}\operatorname*{Cum}%
\nolimits_{p}\left(  \widetilde{Z}_{\ell_{1}}^{k_{1}},\widetilde{Z}_{\ell_{2}%
}^{k_{2}},\ldots,\widetilde{Z}_{\ell_{p}}^{k_{p}}\right)  .
\end{align*}

\appendix

\section{Cumulants\label{Sect_Cummulants}}

\subsection{Basic properties%
\index{Cumulant!properties}%
}

Cumulants (also called invariants) are very important quantities for the
characterization of a random series, see \cite{TerdikRaosSankhya06} for some
details. Moments and cumulants are equvivalent, since the moments can be
defined as the coefficients in the series expansion of the characteristic
function similarly cumulants are the coefficients in the series expansion of
the log-characteristic function. The main importance of the cumulants is that
it does not contain the Gaussian part of the moments. Let $\underline{{X}}\in
R^{n}$ denote a random vector.

\subsubsection{ Symmetry}%

\[
\operatorname*{Cum}(\underline{{X}})=\operatorname*{Cum}(\underline{X}%
_{\mathfrak{p}(1:n\mathfrak{)}}),\;\underline{{X}}\in\mathbb{R}^{n},
\]
where $\mathfrak{p}(1:n)=\left(  \mathfrak{p}\left(  1\right)  ,\mathfrak{p}%
\left(  2\right)  ,\ldots,\mathfrak{p}\left(  n\right)  \right)  ,$
$p\in\mathfrak{{P}_{n}}$ and $\mathfrak{{P}_{n}}$ denotes the set of all
permutations of the integers $(1:n)$.

\subsubsection{Multilinearity}

For any constants $c_{1:2}=(c_{1},c_{2})$
\[
\operatorname*{Cum}(c_{1}Y_{1}+c_{2}Y_{2},\underline{{X}})=c_{1}%
\operatorname*{Cum}(Y_{1},\underline{{X}})+c_{2}\operatorname*{Cum}%
(Y_{2},\underline{{X}}).
\]

\subsubsection{Independence}

If $\underline{{X}}\in R^{n}$ is independent of $\underline{Y}\in R^{m}$ where
$n,m>0$ then
\[
\operatorname*{Cum}(\underline{{X}},\underline{Y})=0.
\]
In particular if $n=m$ then for independent $\underline{{X}}$ and
$\underline{Y}$
\[
\operatorname*{Cum}(\underline{{X}}+\underline{Y})=\operatorname*{Cum}%
(\underline{{X}})+\operatorname*{Cum}(\underline{Y}).
\]
This formula is the additive version of the formula of the moment of the
product of independent variables.

\subsubsection{Gaussianity}

The random vector $\underline{{X}}\in R^{n}$ is Gaussian if and only if for
all vector $k_{(1:m)}$ with elements from $(1:n)$
\[
\operatorname*{Cum}(\underline{X}_{k_{(1:m)}})=0,\ m>2.
\]

\subsubsection{Expression of the cumulant via moments%
\index{Cumulant!by moments}%
}

The first order cumulants is the moment
\[
\mathsf{\operatorname*{Cum}}(X)=\mathsf{E}X,
\]
the second and third order cumulants are the second and third order central
moments%
\begin{align}
\mathsf{\operatorname*{Cum}}(X_{1},X_{2})  &  =\operatorname*{Cov}(X_{1}%
,X_{2})\label{cov_cum}\\
\mathsf{\operatorname*{Cum}}(X_{1},X_{2},X_{3})  &  =\mathsf{E}\prod_{i=1}%
^{3}\left(  X_{i}-\mathsf{E}X_{i}\right)  .
\end{align}
The general formula is
\begin{equation}
\operatorname*{Cum}(\underline{{X}})=\sum_{m=1}^{n}(-1)^{m-1}(m-1)!\sum
_{U\in\mathcal{P}_{n}}\prod_{j=1}^{m}\mathsf{E\ }{X}_{1:n}^{u_{j,1:n}%
},\;\underline{{X}}\in\mathbb{R}^{n}, \label{cummo}%
\end{equation}
where the second summation is taken over all possible partitions $U$, where
$\left[  u_{j,1:n}\right]  _{j=1:m}$ is an indicator of a partition
$\mathcal{L} $ with $m$ subsets of the set $\left(  1:n\right)  =\left(
1,2,\ldots,n\right)  $, $u_{j,k}=1$ if $k\in\mathcal{L}$, otherwise $0$. The
double sum in (\ref{cummo}) is over all partitions of $\mathcal{L\in P}_{n},$
where $\mathcal{P}_{n}$ is the set of all partitions of the numbers $\left(
1:n\right)  $.

\begin{example}
\label{ex_partC} First note that if $n=3,$ then the $m^{th}$ order partitions
$\mathcal{L}$ $\in\mathcal{P}_{3}$ with the corresponding indicators and
permutations of $\left(  1:3\right)  $ are

\begin{enumerate}
\item for $m=1;$ $\mathcal{L}=\left\{  \left(  1:3\right)  \right\}  $%
\[
\mathcal{L}\leftrightarrow u_{1,1:3}=\left[  1,1,1\right]  ,\quad
\mathsf{E\ }{X}_{1:3}^{u_{1,1:3}}=\mathsf{E}X_{1}X_{2}X_{3},
\]

\item for $m=2;$ $\mathcal{L}_{1}=\left\{  (1),\ (2,3)\right\}  ,$
$\mathcal{L}_{2}=\left\{  (1,3),\ (2)\right\}  ,$ $\mathcal{L}_{3}=\left\{
(1,2),\ (3)\right\}  ,$%
\begin{align*}
\mathcal{L}_{1}  &  \leftrightarrow%
\begin{bmatrix}
u_{1,1:3}\\
u_{2,1:3}%
\end{bmatrix}
=\left[
\begin{array}
[c]{ccc}%
1 & 0 & 0\\
0 & 1 & 1
\end{array}
\right]  ,\quad\prod_{j=1}^{2}\mathsf{E\ }{X}_{1:3}^{u_{j,1:3}}=\mathsf{E}%
X_{1}\mathsf{E}X_{2}X_{3},\\
\mathcal{L}_{2}  &  \leftrightarrow%
\begin{bmatrix}
u_{1,1:3}\\
u_{2,1:3}%
\end{bmatrix}
=\left[
\begin{array}
[c]{ccc}%
1 & 0 & 1\\
0 & 1 & 0
\end{array}
\right]  ,\quad\prod_{j=1}^{2}\mathsf{E\ }{X}_{1:3}^{u_{j,1:3}}=\mathsf{E}%
X_{2}\mathsf{E}X_{1}X_{3},\\
\mathcal{L}_{3}  &  \leftrightarrow%
\begin{bmatrix}
u_{1,1:3}\\
u_{2,1:3}%
\end{bmatrix}
=\left[
\begin{array}
[c]{ccc}%
1 & 1 & 0\\
0 & 0 & 1
\end{array}
\right]  ,\quad\prod_{j=1}^{2}\mathsf{E\ }{X}_{1:3}^{u_{j,1:3}}=\mathsf{E}%
X_{3}\mathsf{E}X_{1}X_{2},
\end{align*}

\item for $m=3;\mathcal{L}=\left\{  (1),\ \left(  2\right)  ,(3)\right\}  $
\[
\mathcal{L}\leftrightarrow%
\begin{bmatrix}
u_{1,1:3}\\
u_{2,1:3}\\
u_{2,1:3}%
\end{bmatrix}
=\left[
\begin{array}
[c]{ccc}%
1 & 0 & 0\\
0 & 1 & 0\\
0 & 0 & 1
\end{array}
\right]  ,\quad\prod_{j=1}^{3}\mathsf{E\ }{X}_{1:3}^{u_{j,1:3}}=\mathsf{E}%
X_{1}\mathsf{E}X_{2}\mathsf{E}X_{3},
\]
Now we apply (\ref{cummo}):
\begin{align*}
\operatorname*{Cum}(X_{1},X_{2},X_{3})  &  =\mathsf{E}X_{1}X_{2}%
X_{3}-\mathsf{E}X_{1}\mathsf{E}X_{2}X_{3}-\mathsf{E}X_{2}\mathsf{E}X_{1}%
X_{3}\\
&  -\mathsf{E}X_{3}\mathsf{E}X_{1}X_{2}+2\mathsf{E}X_{1}\mathsf{E}%
X_{2}\mathsf{E}X_{3}\\
&  =\mathsf{E}\prod_{i=1}^{3}\left(  X_{i}-\mathsf{E}X_{i}\right)  ,
\end{align*}
The first three cumulants, see also (\ref{cov_cum}), equal the central moments
but this is not true for higher order cumulants. One might easily check this
for the case of cumulants of order four. Indeed, $\underline{X}\in
\mathbb{R}^{4}$
\begin{align}
\operatorname*{Cum}\left(  \underline{X}\right)   &  =\operatorname*{Cum}%
\left(  \underline{X}-\mathsf{E}\underline{X}\right) \nonumber\\
&  =\mathsf{E}\prod_{i=1}^{4}\left(  X_{i}-\mathsf{E}X_{i}\right)
-\mathsf{\operatorname*{Cov}}\left(  X_{1},X_{2}\right)
\mathsf{\operatorname*{Cov}}\left(  X_{3},X_{4}\right)  \mathsf{\ }\nonumber\\
&  \hspace{-0.4in}-\mathsf{\operatorname*{Cov}}\left(  X_{1},X_{3}\right)
\mathsf{\operatorname*{Cov}}\left(  X_{2},X_{4}\right)
-\mathsf{\operatorname*{Cov}}\left(  X_{1},X_{4}\right)
\mathsf{\operatorname*{Cov}}\left(  X_{2},X_{3}\right)  \mathsf{\ }\nonumber\\
&  \neq\mathsf{E}\prod_{i=1}^{4}\left(  X_{i}-\mathsf{E\ }X_{i}\right)  ,
\label{cum-4}%
\end{align}
unless all covariances are zero.
\end{enumerate}
\end{example}

\subsubsection{\textbf{Expression of the moment via cumulants}%
\index{Moment!by cumulants}%
}

We consider the moment $\mathsf{E}\prod_{i=1}^{n}X_{i}$ as the general case
because the moment $\mathsf{E}Y_{1:m}^{k_{1:m}}$ can be put into the form
$\mathsf{E}\prod_{i=1}^{n}X_{i}$, where we define $n=\Sigma k_{(1:m)}$ and
\[
\underline{{X}}=\underset{k_{1}}{(\underbrace{Y_{1},\ldots,Y_{1}}}%
,\ldots,\underset{k_{m}}{\underbrace{Y_{m},\ldots,Y_{m}}}).
\]
Now
\begin{equation}
\mathsf{E}\prod_{i=1}^{n}X_{i}=\sum_{\mathcal{L\in P}_{n}}\prod_{\mathbf{b}%
_{j}\in\mathcal{L}}\operatorname*{Cum}(X_{\mathbf{b}_{j}}), \label{mocum}%
\end{equation}
where the summation is over all partitions $\mathcal{L=}\left\{
\mathbf{b}_{1},\mathbf{b}_{2},\ldots,\mathbf{b}_{k}\right\}  $ of $(1:n)$.

An example of the use of the formula (\ref{mocum}) can be seen when $n=3$.

\begin{example}
Each partition of $\left(  1:3\right)  $ are listed in the previus example
therefore if $\underline{X}\in\mathbb{R}^{3},$
\begin{align*}
\mathsf{E}X_{1}X_{2}X_{3}  &  =\operatorname*{Cum}(X_{1:3}%
)+\operatorname*{Cum}(X_{1})\operatorname*{Cum}(X_{2},X_{3})\\
&  \ +\operatorname*{Cum}(X_{2})\operatorname*{Cum}(X_{1},X_{3}%
)+\operatorname*{Cum}(X_{3})\operatorname*{Cum}(X_{1},X_{2})\\
&  \ +\operatorname*{Cum}(X_{1})\operatorname*{Cum}(X_{2})\operatorname*{Cum}%
(X_{3}).
\end{align*}
Now in particular
\[
\mathsf{E\ }X^{3}=\operatorname*{Cum}(X,X,X)+3\operatorname*{Cum}%
(X)\operatorname*{Cum}(X,X)+\operatorname*{Cum}(X)^{3}.
\]

\end{example}

\bigskip

\section{Spherical Harmonics\label{App_Spher_Harmonics}}

The following notation and results are used in this mauscript.

\begin{enumerate}
\item $\mathbb{R}^{3}$ is Euclidean space of dimension $3$.

\item $\mathbb{S}_{2}$ denotes the \textbf{sphere }with radius $1$ in
$\mathbb{R}^{3}$. The following notations are used; colatitude (angular)
coordinate: $\vartheta\in\left[  0,\pi\right]  $, longitude coordinate
$\varphi\in\left[  0,2\pi\right]  $,$\;$North pole: $N$ with $\vartheta=0$,
and $\varphi$ is indeterminate, (latitude coordinate is expressed by
colatitude coordinate: $\pi/2-\vartheta\in\left[  -\pi/2,\pi/2\right]  $, in
that case the North pole has $\vartheta=\pi/2$). The spherical coordinates on
$\mathbb{S}_{2}$: $\left(  \sin\vartheta\cos\varphi,\sin\vartheta\sin
\varphi,\cos\vartheta\right)  $

\item $SO\left(  3\right)  $ denotes the 3D (special orthogonal) rotation group

\item \textbf{Addition Theorem}\textit{. }\cite{Erdelyi11a} vol. 2,7.15 (30),
pp.116 Let $U_{1}$ and $U_{2}$ be two vectors with angle $\vartheta$ and
lengths $r_{1}$ and $r_{2}$ respectively, and the Euclidean distance is
denoted by $\rho=\left\Vert U_{1}-U_{2}\right\Vert =\sqrt{r_{1}^{2}+r_{2}%
^{2}-2r_{1}r_{2}\cos\gamma}$, then we have \textit{\ }
\begin{equation}
\rho^{-\nu}J_{\nu}\left(  \rho\right)  =\left(  \frac{r_{1}r_{2}}{2}\right)
^{-\nu}\Gamma\left(  \nu\right)  \sum_{\ell=0}^{\infty}\left(  \ell
+\nu\right)  C_{\ell}^{\nu}\left(  \cos\gamma\right)  J_{\ell+\nu}\left(
r_{1}\right)  J_{\ell+\nu}\left(  r_{2}\right)  , \label{AdditionTheorem}%
\end{equation}
where $J_{\nu}$ denotes the Bessel function of the first kind, $C_{\ell}^{\nu
}$ Gegenbauer polynomial $\left(  C_{\ell}^{1/2}=P_{\ell}\right)  $.

\item \textbf{Lapalce-Beltrami operator} $\bigtriangleup_{B}$,
\begin{align*}
\bigtriangleup_{B}  &  =\frac{1}{\sin\vartheta}\frac{\partial}{\partial
\vartheta}\left(  \sin\vartheta\frac{\partial}{\partial\vartheta}\right)
+\frac{1}{\sin^{2}\vartheta}\frac{\partial^{2}}{\partial\varphi^{2}},\\
\bigtriangleup_{B}  &  =\frac{1}{\sin\vartheta}\left[  \frac{\partial
}{\partial\vartheta}\left(  \sin\vartheta\frac{\partial}{\partial\vartheta
}\right)  +\frac{1}{\sin\vartheta}\frac{\partial^{2}}{\partial\varphi^{2}%
}\right]  ,\\
\bigtriangleup_{B}  &  =\frac{\partial^{2}}{\partial\vartheta^{2}}+\frac
{\cos\vartheta}{\sin\vartheta}\frac{\partial}{\partial\vartheta}+\frac{1}%
{\sin^{2}\vartheta}\frac{\partial^{2}}{\partial\varphi^{2}},
\end{align*}
(\cite{Apostol2007}, vol. 2 pp. 293).

\item $\bigtriangledown$ is the central difference operator $\left(
\bigtriangledown_{1}^{2}+\bigtriangledown_{2}^{2}\right)  X_{j,k}%
=X_{j+1,k}+X_{j-1,k}+X_{j,k+1}+X_{j,k-1}-4X_{j,k}$

\item \textbf{Spherical function} $f$ of order $\ell$.%
\begin{align*}
\bigtriangleup_{B}f  &  =\frac{1}{\sin\vartheta}\frac{\partial}{\partial
\vartheta}\left(  \sin\vartheta\frac{\partial f}{\partial\vartheta}\right)
+\frac{1}{\sin^{2}\vartheta}\frac{\partial^{2}f}{\partial\varphi^{2}}\\
&  =-\ell\left(  \ell+1\right)  f,
\end{align*}
\cite{Varshalovich1988}. \textit{Regular }\textbf{spherical harmonics
}$Y_{\ell}$, homogeneous polynomial of degree $\ell$ and
\[
\nabla^{2}Y_{\ell}=0.
\]

\item \textit{Normalized} \textbf{Legendre polynomial} $\widetilde{P}_{\ell} $
(\textit{Rodrigues' formula})%
\begin{align*}
\widetilde{P}_{\ell}\left(  x\right)   &  =\sqrt{\frac{2\ell+1}{2}}\frac
{1}{2^{\ell}\ell!}\frac{d^{\ell}\left(  x^{2}-1\right)  ^{\ell}}{dx^{\ell}%
},\;x\in\left[  -1,1\right]  ,\\
&  \int_{-1}^{1}\left(  \widetilde{P}_{\ell}\left(  x\right)  \right)
^{2}dx=1
\end{align*}
\textit{Standardized }\textbf{Legendre polynomial }$P_{0}\left(  x\right)
=1,$
\[
P_{\ell}\left(  x\right)  =\frac{1}{2^{\ell}\ell!}\frac{d^{\ell}\left(
x^{2}-1\right)  ^{\ell}}{dx^{\ell}},\;x\in\left[  -1,1\right]  ,
\]
$P_{\ell}\left(  1\right)  =1$(\cite{Erdelyi11a}, vol. 2, pp.180) it is
orthogonal and
\begin{align}
\int_{-1}^{1}\left[  P_{\ell}\left(  x\right)  \right]  ^{2}dx  &  =\frac
{2}{2\ell+1}\nonumber\\
\int_{\mathbb{S}_{2}}\left[  P_{\ell}\left(  \cos\vartheta\right)  \right]
^{2}\Omega\left(  dL\right)   &  =\int_{0}^{2\pi}\int_{0}^{\pi}\left[
P_{\ell}\left(  \cos\vartheta\right)  \right]  ^{2}\sin\vartheta d\vartheta
d\varphi\nonumber\\
&  =2\pi\int_{0}^{\pi}\left[  P_{\ell}\left(  \cos\vartheta\right)  \right]
^{2}\sin\vartheta d\vartheta\nonumber\\
&  =2\pi\int_{-1}^{1}\left[  P_{\ell}\left(  x\right)  \right]  ^{2}%
dx\nonumber\\
&  =\frac{4\pi}{2\ell+1} \label{LegengreNormSqu}%
\end{align}
\textbf{recurrence}%
\[
\left(  \ell+1\right)  P_{\ell+1}\left(  x\right)  =\left(  2\ell+1\right)
xP_{\ell}\left(  x\right)  -\ell P_{\ell-1}\left(  x\right)
\]
\textbf{generating functions }(\cite{Erdelyi11a}, vol. 2, pp.183)%
\begin{align}
\sum_{\ell=0}^{\infty}P_{\ell}\left(  x\right)  z^{\ell}  &  =\left(
1-2xz+z^{2}\right)  ^{-1/2},\;x\in\left(  -1,1\right)  ,\;\left\vert
z\right\vert <1\label{Funct_Gen_Pl1}\\
\sum_{\ell=0}^{\infty}\frac{1}{\ell!}P_{\ell}\left(  \cos\gamma\right)
z^{\ell}  &  =e^{z\cos\gamma}J_{0}\left(  z\sin\vartheta\right)
\label{Funct_Gen_Pl2}\\
\sum_{\ell=0}^{\infty}\frac{\left(  -1\right)  ^{\ell}}{\ell+1/2}P_{\ell
}\left(  \cos\gamma\right)  z^{2\ell+1}  &  =F\left(  \sin\vartheta
/2,\varphi\right)  ,\quad z=\tan\frac{\varphi}{2},\quad0<\varphi<\frac{\pi}%
{2},0<\vartheta<\pi.\nonumber
\end{align}

\item $P_{\ell}^{m}$ is the \textit{associated normalized }\textbf{Legendre
function} \textit{of the first kind} (Gegenbauer polynomial at particular
indices) of \textbf{degree} $\ell$ and \textbf{order} $m$
\begin{align*}
P_{\ell}^{m}\left(  x\right)   &  =\left(  -1\right)  ^{m}\left(
1-x^{2}\right)  ^{m/2}\frac{d^{m}P_{\ell}\left(  x\right)  }{dx^{m}},\\
P_{\ell}^{m}\left(  z\right)   &  =\left(  z^{2}-1\right)  ^{m/2}\frac
{d^{m}P_{\ell}\left(  z\right)  }{dz^{m}},
\end{align*}
recurrence formula is%
\[
\left(  \ell-m+1\right)  P_{\ell+1}^{m}\left(  x\right)  =\left(
2\ell+1\right)  xP_{\ell}^{m}\left(  x\right)  -\left(  \ell+m\right)
P_{\ell-1}^{m}\left(  x\right)  ,
\]
in particular $P_{\ell}^{m}\left(  1\right)  =\delta_{m,0}$, $P_{\ell}%
^{0}\left(  x\right)  =P_{\ell}\left(  x\right)  $,
\[
P_{\ell}^{-m}\left(  x\right)  =\left(  -1\right)  ^{m}\frac{\Gamma\left(
\ell-m+1\right)  }{\Gamma\left(  \ell+m+1\right)  }P_{\ell}^{m}\left(
x\right)  .
\]

\item \textbf{Funk-Hecke formula} (\cite{Mueller1966}) Suppose $G$ is
continuous on $\left[  -1,1\right]  $, then for any spherical harmonic
$Y_{\ell}\left(  L\right)  $%
\begin{align*}
\int_{\mathbb{S}_{2}}G\left(  L_{1}\cdot L\right)  Y_{\ell}\left(  L\right)
\Omega\left(  dL\right)   &  =cY_{\ell}\left(  L_{1}\right)  ,\\
c  &  =2\pi\int_{-1}^{1}G\left(  x\right)  P_{\ell}\left(  x\right)  dx,
\end{align*}
where $\Omega\left(  dL\right)  =\sin\vartheta d\vartheta d\varphi$ is
Lebesque element of surface area on $\mathbb{S}_{2}$, $L_{1}\cdot L_{2}%
=\cos\vartheta$. In particular
\begin{align}
\int_{\mathbb{S}_{2}}G\left(  L_{1}\cdot L\right)  Y_{\ell}^{m}\left(
L\right)  \Omega\left(  dL\right)   &  =g_{\ell}Y_{\ell}^{m}\left(
L_{1}\right)  ,\label{Formula_Funk-Hecke}\\
g_{\ell}  &  =2\pi\int_{-1}^{1}G\left(  x\right)  P_{\ell}\left(  x\right)
dx\nonumber
\end{align}
see \cite{Yadrenko1983} pp.72. For a general dimension $d$ instead of
$P_{\ell}$ one has the Gegenbauer polynomial $C_{\ell}^{\left(  d-2\right)
/2}$. \newline Also%
\begin{align*}
\int_{\mathbb{S}_{2}}G\left(  L_{1}\cdot L\right)  P_{\ell}\left(  L_{2}\cdot
L\right)  \Omega\left(  dL\right)   &  =f_{\ell}P_{\ell}\left(  L_{1}\cdot
L_{2}\right) \\
f_{\ell}  &  =2\pi\int_{-1}^{1}G\left(  x\right)  P_{\ell}\left(  x\right)  dx
\end{align*}
here $f_{\ell}=\left(  LT\right)  _{\ell}G$ is the Legendre transform of $G$.
Hence
\[
G\left(  \cos\vartheta\right)  =\sum_{\ell=0}^{\infty}\frac{2\ell+1}{2}%
\int_{-1}^{1}G\left(  x\right)  P_{\ell}\left(  x\right)  dxP_{\ell}\left(
\cos\vartheta\right)
\]
In particular \textit{Funk-Hecke formula when }$N$ denotes the north pole
\begin{align*}
\int_{\mathbb{S}_{2}}\left[  P_{\ell}\left(  N\cdot L\right)  \right]
^{2}\Omega\left(  dL\right)   &  =cP_{\ell}\left(  N\cdot N\right) \\
c  &  =2\pi\int_{-1}^{1}\left[  P_{\ell}\left(  x\right)  \right]
^{2}dx,\quad P_{\ell}\left(  1\right)  =1\\
\int_{\mathbb{S}_{2}}\left[  P_{\ell}\left(  N\cdot L\right)  \right]
^{2}\Omega\left(  dL\right)   &  =\frac{4\pi}{2\ell+1}%
\end{align*}

\item \textbf{Orthonormal} \textbf{spherical harmonics} \textbf{with complex
values }$Y_{\ell}^{m}\left(  \vartheta,\varphi\right)  $, $\ell=0,1,2,\ldots$,
$m=-\ell,-\ell+1,\ldots-1,0,1,\ldots,\ell-1,\ell$ of \textbf{degree} $\ell$
and \textbf{order} $m$ (rank $\ell$ and projection $m$)
\begin{equation}
Y_{\ell}^{m}\left(  \vartheta,\varphi\right)  =\left(  -1\right)  ^{m}%
\sqrt{\frac{2\ell+1}{4\pi}\frac{\left(  \ell-m\right)  !}{\left(
\ell+m\right)  !}}P_{\ell}^{m}\left(  \cos\vartheta\right)  e^{im\varphi
},\;\varphi\in\left[  0,2\pi\right]  ,\;\vartheta\in\left[  0,\pi\right]
\label{SpheriHarmOrthoNorm}%
\end{equation}%
\begin{align*}
Y_{\ell}^{m}\left(  \vartheta,\varphi\right)  ^{\ast}  &  =\left(  -1\right)
^{m}Y_{\ell}^{-m}\left(  \vartheta,\varphi\right) \\
Y_{\ell}^{m}\left(  -\vartheta,-\varphi\right)   &  =\left(  -1\right)
^{\ell}Y_{\ell}^{m}\left(  \vartheta,\varphi\right)  .
\end{align*}
In particular $Y_{\ell}^{0}\left(  \vartheta,\varphi\right)  =\sqrt
{\frac{2\ell+1}{4\pi}}P_{\ell}\left(  \cos\vartheta\right)  $, $Y_{0}%
^{0}\left(  \vartheta,\varphi\right)  =\sqrt{\frac{1}{4\pi}}$,
\[
Y_{\ell}^{m}\left(  N\right)  =\delta_{m,0}\sqrt{\frac{2\ell+1}{4\pi}},
\]
we have%
\[
\int_{0}^{2\pi}\int_{0}^{\pi}\left\vert Y_{\ell}^{m}\left(  \vartheta
,\varphi\right)  \right\vert ^{2}\sin\vartheta d\vartheta d\varphi=1
\]
note that $Y_{\ell}^{m}$ is normalized \textit{fully}, some authors do not
apply $1/\sqrt{4\pi}$ in the definition of $Y_{\ell}^{m}$, also for a sphere
with radius $R$ \textit{spherical harmonics are normalized additionally}%
\textbf{\ }$Y_{\ell}^{m}\left(  \vartheta,\varphi\right)  /R$. It also follows%
\begin{align*}
Y_{\ell}^{m\ast}\left(  \vartheta,\varphi\right)   &  =Y_{\ell}^{m}\left(
\vartheta,-\varphi\right) \\
&  =\left(  -1\right)  ^{m}Y_{\ell}^{-m}\left(  \vartheta,\varphi\right)  ,\\
Y_{\ell}^{-m}\left(  \vartheta,\varphi\right)   &  =\left(  -1\right)
^{m}e^{-i2m\varphi}Y_{\ell}^{m}\left(  \vartheta,\varphi\right)  .
\end{align*}

\item \label{Wigner_3j}\textbf{Wigner }$3j$\textbf{-symbols}, \cite{Louck2006}
notation%
\[%
\begin{pmatrix}
\ell_{1:3}\\
m_{1:3}%
\end{pmatrix}
=%
\begin{pmatrix}
\ell_{1} & \ell_{2} & \ell_{3}\\
m_{1} & m_{2} & m_{3}%
\end{pmatrix}
,
\]
\textit{Clebsch-Gordan coefficients,}%
\begin{align*}
C_{\ell_{1},k_{1};\ell_{2},k_{2}}^{\ell,k}  &  =\left(  -1\right)  ^{\ell
_{1}-\ell_{2}+k}\sqrt{2\ell+1}%
\begin{pmatrix}
\ell_{1:2} & \ell\\
k_{1:2} & -k
\end{pmatrix}
,\\%
\begin{pmatrix}
\ell_{1:3}\\
m_{1:3}%
\end{pmatrix}
&  =\frac{\left(  -1\right)  ^{\ell_{3}+m_{3}+2\ell_{1}}}{\sqrt{2\ell_{3}+1}%
}C_{\ell_{1},-k_{1};\ell_{2},-k_{2}}^{\ell_{3},k_{3}}.
\end{align*}
Symmetries%
\begin{align}%
\begin{pmatrix}
\ell_{1:3}\\
m_{1:3}%
\end{pmatrix}
&  =\left(  -1\right)  ^{\ell_{1}+\ell_{2}+\ell_{3}}%
\begin{pmatrix}
\ell_{2} & \ell_{1} & \ell_{3}\\
m_{2} & m_{1} & m_{3}%
\end{pmatrix}
\label{W3_perm}\\
&  =%
\begin{pmatrix}
\ell_{2} & \ell_{3} & \ell_{1}\\
m_{2} & m_{3} & m_{1}%
\end{pmatrix}
,\nonumber\\%
\begin{pmatrix}
\ell_{1:3}\\
m_{1:3}%
\end{pmatrix}
&  =\left(  -1\right)  ^{\ell_{1}+\ell_{2}+\ell_{3}}%
\begin{pmatrix}
\ell_{1} & \ell_{2} & \ell_{3}\\
-m_{1} & -m_{2} & -m_{3}%
\end{pmatrix}
.\nonumber
\end{align}
Orthogonality relations
\begin{align}
\left(  2\ell+1\right)  \sum_{m_{1:2}}%
\begin{pmatrix}
\ell_{1:2} & \ell\\
m_{1:2} & m
\end{pmatrix}%
\begin{pmatrix}
\ell_{1:2} & j\\
m_{1:2} & k
\end{pmatrix}
&  =\delta_{m,k}\delta_{\ell,j}\label{W3j_Orth1}\\
\sum_{\ell,k}\left(  2\ell+1\right)
\begin{pmatrix}
\ell_{1:2} & \ell\\
m_{1:2} & k
\end{pmatrix}%
\begin{pmatrix}
\ell_{1:2} & \ell\\
k_{1:2} & k
\end{pmatrix}
&  =\delta_{m_{1},k_{1}}\delta_{m_{2},k_{2}} \label{W3j_Orth2}%
\end{align}%
\[%
\begin{pmatrix}
\ell_{1} & \ell_{2} & \ell_{3}\\
0 & 0 & 0
\end{pmatrix}
=0,
\]
if $\mathcal{L=}\ell_{1}+\ell_{2}+\ell_{3}$ is odd, otherwise see
\cite{Edmonds1957}, (3.7.17)\newline%
\[%
\begin{pmatrix}
\ell_{1} & \ell_{2} & \ell_{3}\\
0 & 0 & 0
\end{pmatrix}
=\left(  -1\right)  ^{\mathcal{L}/2}\sqrt{\frac{\prod\left(  \mathcal{L}%
-2\ell_{j}\right)  !}{\left(  \mathcal{L}+1\right)  !}}\frac{\left(
\mathcal{L}/2\right)  !}{\prod\left(  \mathcal{L}/2-\ell_{j}\right)  !},
\]
if $\ell$ is even say $\ell=2k$,then%
\[%
\begin{pmatrix}
\ell & \ell & \ell\\
0 & 0 & 0
\end{pmatrix}
=\left(  -1\right)  ^{3k}\sqrt{\frac{\left(  \ell!\right)  ^{3}}{\left(
6k+1\right)  !}}\frac{\left(  3k\right)  !}{\left(  k!\right)  ^{3}}.
\]

\item \textbf{\label{Select.Rules} Selection rules: a }Wigner 3j symbol
vanishes unless

\begin{itemize}
\item $m_{1}+m_{2}+m_{3}=0,$

\item Integer perimeter rule: $\mathcal{L}=\ell_{1}+\ell_{2}+\ell_{3}$ is an
integer (if $m_{1}=m_{2}=m_{3}=0$, then $\mathcal{L}$ is even).

\item Triangular inequality $\left\vert \ell_{1}-\ell_{2}\right\vert \leq
\ell_{3}\leq\ell_{1}+\ell_{2}$ fulfilles.
\end{itemize}

\item \textbf{Gaunt series } \cite{Edmonds1957}, (4.6.5)%
\begin{align*}
Y_{\ell_{1}}^{m_{1}}\left(  \vartheta,\varphi\right)  Y_{\ell_{2}}^{m_{2}%
}\left(  \vartheta,\varphi\right)   &  =\sum_{\ell,m}\sqrt{\frac{\left(
2\ell_{1}+1\right)  \left(  2\ell_{2}+1\right)  \left(  2\ell+1\right)  }%
{4\pi}}\\
&  \times%
\begin{pmatrix}
\ell_{1} & \ell_{2} & \ell\\
0 & 0 & 0
\end{pmatrix}%
\begin{pmatrix}
\ell_{1:2} & \ell\\
m_{1:2} & m
\end{pmatrix}
Y_{\ell}^{m\ast}\left(  \vartheta,\varphi\right)  ,\\
Y_{\ell_{1}}^{m_{1}}\left(  \vartheta,\varphi\right)  Y_{\ell_{2}}^{m_{2}%
}\left(  \vartheta,\varphi\right)   &  =\sum_{\ell,m}\sqrt{\frac{\left(
2\ell_{1}+1\right)  \left(  2\ell_{2}+1\right)  }{4\pi\left(  2\ell+1\right)
}}C_{\ell_{1},m_{1};\ell_{2},m_{2}}^{\ell,m}C_{\ell_{1},0;\ell_{2},0}%
^{\ell_{3},0}Y_{\ell}^{m}\left(  \vartheta,\varphi\right)
\end{align*}
\cite{Varshalovich1988}, pp. 144, products of three and more

\item \textbf{Rayleigh plane wave expansion in 3D:} $\underline{\widehat{k}%
}=\underline{k}/\left\vert \underline{k}\right\vert $, $\underline{\widehat{x}%
}=\underline{x}/\left\vert \underline{x}\right\vert $, $r=\left\vert
\underline{x}\right\vert $, $k=\left\vert \underline{k}\right\vert $,
\[
e^{i\underline{k}\cdot\underline{x}}=e^{ikr\cos\vartheta}=4\pi\sum_{\ell
=0}^{\infty}\sum_{m=-\ell}^{\ell}i^{\ell}j_{\ell}\left(  kr\right)  Y_{\ell
}^{m}\left(  \underline{\widehat{k}}\right)  ^{\ast}Y_{\ell}^{m}\left(
\underline{\widehat{x}}\right)
\]
where
\[
j_{\ell}\left(  kr\right)  =\sqrt{\frac{\pi}{2kr}}J_{\ell+1/2}\left(
kr\right)  ,
\]
is the Spherical Bessel function of the first kind, also%
\begin{align*}
e^{iz\cos\vartheta}  &  =\sum_{\ell=-\infty}^{\infty}i^{\ell}J_{\ell}\left(
z\right)  e^{i\ell\vartheta},\\
e^{iz\cos\vartheta}  &  =J_{0}\left(  z\right)  +\sum_{\ell=-\infty}^{\infty
}i^{\ell}J_{\ell}\left(  z\right)  \cos\ell\vartheta,
\end{align*}
see \cite{Louck2006}.

\item \textbf{2-D Dirac delta on sphere}%
\[
\delta\left(  L_{1},L_{2}\right)  =\sum_{\ell,m}Y_{\ell}^{m}\left(
L_{1}\right)  Y_{\ell}^{m}\left(  L_{2}\right)  =\frac{1}{4\pi}\sum_{\ell
}\left(  2\ell+1\right)  P_{\ell}\left(  L_{1}\cdot L_{2}\right)  ,
\]
\qquad\textbf{1-D Dirac delta}%
\[
\delta\left(  x-y\right)  =\frac{1}{2}\sum_{\ell}\left(  2\ell+1\right)
P_{\ell}\left(  x\right)  P_{\ell}\left(  y\right)  ,
\]
see \cite{Louck2006}.

\item \textbf{Rotational invariant functions. }The function\textbf{\ }
\begin{align}
I_{\ell}\left(  L_{1},L_{2}\right)   &  =\frac{4\pi}{2\ell+1}\sum_{k=-\ell
}^{\ell}\left(  -1\right)  ^{k}Y_{\ell}^{k}\left(  L_{1}\right)  Y_{\ell}%
^{-k}\left(  L_{2}\right) \nonumber\\
&  =\frac{4\pi}{2\ell+1}\sum_{k=-\ell}^{\ell}Y_{\ell}^{k}\left(  L_{1}\right)
Y_{\ell}^{k\ast}\left(  L_{2}\right)  \label{RotationProd2}%
\end{align}
is rotational invariant. This expression is valid when we apply rotation
$g_{L_{1}}$, where $g_{L_{1}}L_{1}=N$
\begin{align*}
I_{\ell}\left(  L_{1},L_{2}\right)   &  =\frac{4\pi}{2\ell+1}\frac{2\ell
+1}{4\pi}\sum_{k=-\ell}^{\ell}\delta_{k,0}Y_{\ell}^{k}\left(  g_{L_{1}}%
L_{2}\right) \\
&  =Y_{\ell}^{0}\left(  g_{L_{1}}L_{2}\right) \\
&  =P_{\ell}\left(  \cos L_{1}\cdot L_{2}\right)  .
\end{align*}
It follows from the addition formula (\ref{FormulaAddition}) as well. The
function
\begin{equation}
I_{\ell_{1}\ell_{2}\ell_{3}}\left(  L_{1},L_{2},L_{3}\right)  =\frac{\left(
4\pi\right)  ^{3/2}}{\sqrt{\prod\left(  2\ell_{j}+1\right)  }}\sum_{m_{1:3}}%
\begin{pmatrix}
\ell_{1:3}\\
m_{1:3}%
\end{pmatrix}
Y_{\ell_{1}}^{m_{1}}\left(  L_{1}\right)  Y_{\ell_{2}}^{m_{2}}\left(
L_{2}\right)  Y_{\ell_{3}}^{m_{3}}\left(  L_{3}\right)  \label{RotationProd3}%
\end{equation}
of three location is rotational invariant, see \cite{Louck2006}, pp.14. We
repeat the notation $%
\begin{pmatrix}
\ell_{1:3}\\
m_{1:3}%
\end{pmatrix}
=%
\begin{pmatrix}
\ell_{1} & \ell_{2} & \ell_{3}\\
m_{1} & m_{2} & m_{3}%
\end{pmatrix}
$. In particular we have
\[
I_{\ell_{1:3}}\left(  L_{1:3}\right)  =I_{\ell_{1:3}}\left(  N,g_{L_{1}}%
L_{2},g_{L_{1}}L_{3}\right)  .
\]

\item \label{Wigner_D_matrix}\textbf{Wigner D-matrix} Let $\Lambda\left(
g\right)  Y_{\ell}^{m}\left(  L\right)  =Y_{\ell}^{m}\left(  g^{-1}L\right)
$,%
\begin{equation}
\Lambda\left(  g\right)  Y_{\ell}^{m}\left(  L\right)  =\sum_{k=-\ell}^{\ell
}D_{k,m}^{\left(  \ell\right)  }\left(  g\right)  Y_{\ell}^{k}\left(
L\right)  \label{MatrixWignerD}%
\end{equation}
if $\ell$ is fixed $D_{m,k}^{\left(  \ell\right)  }\left(  g\right)  $ is
unitary
\begin{align*}
\sum_{k=-\ell}^{\ell}D_{m_{1},k}^{\left(  \ell\right)  }\left(  g\right)
D_{m_{2},k}^{\left(  \ell\right)  \ast}\left(  g\right)   &  =\delta
_{m_{1},m_{2}},\\
\sum_{k=-\ell}^{\ell}D_{k,m_{1}}^{\left(  \ell\right)  }\left(  g\right)
D_{k,m_{2}}^{\left(  \ell\right)  \ast}\left(  g\right)   &  =\delta
_{m_{1},m_{2}},\\
\sum_{k=-\ell}^{\ell}D_{k,m_{1}}^{\left(  \ell\right)  \ast}\left(  g\right)
D_{k,m_{2}}^{\left(  \ell\right)  }\left(  g\right)   &  =\delta_{m_{1},m_{2}%
}.
\end{align*}
We introduce the notation $D^{\left(  \ell\right)  }=\left[  D_{m,k}^{\left(
\ell\right)  }\right]  $, for fixed rotation $g$. Thus $D^{\left(
\ell\right)  }$ denotes a unitary matrix of order $2\ell+1$, and it follows
$D^{\left(  \ell\right)  }\left[  D^{\left(  \ell\right)  }\right]
^{-1}=D^{\left(  \ell\right)  }D^{\left(  \ell\right)  \ast}$, $\det
D^{\left(  \ell\right)  }=1$ (unimodular). $D_{m,k}^{\left(  \ell\right)
}\left(  g\right)  $ is given by the clear formula
\begin{equation}
D_{m,k}^{\left(  \ell\right)  }\left(  \varphi,\vartheta,\gamma\right)
=\exp\left(  -im\varphi\right)  d_{m,k}^{\left(  \ell\right)  }\left(
\vartheta\right)  \exp\left(  -ik\gamma\right)  \label{Wigner_D_expl_form}%
\end{equation}
where $d_{m,k}^{\left(  \ell\right)  }$ is the Wigner (small) d-matrix it is
real and
\begin{align*}
d_{m,k}^{\left(  \ell\right)  }\left(  -\vartheta\right)   &  =\left(
-1\right)  ^{m-k}d_{m,k}^{\left(  \ell\right)  }\left(  \vartheta\right) \\
&  =d_{k,m}^{\left(  \ell\right)  }\left(  \vartheta\right) \\
&  =\left(  -1\right)  ^{k-m}d_{-k,-m}^{\left(  \ell\right)  }\left(
\vartheta\right) \\
&  =d_{-m,-k}^{\left(  \ell\right)  }\left(  \vartheta\right)
\end{align*}
see \cite{Varshalovich1988}, pp79 for details\newline for instance%
\begin{align*}
\left[  D_{m,k}^{\left(  \ell\right)  }\left(  \varphi,\vartheta
,\gamma\right)  \right]  ^{\ast}  &  =\left[  \overline{D_{k,m}^{\left(
\ell\right)  }}\left(  \varphi,\vartheta,\gamma\right)  \right]  =\left[
\left(  -1\right)  ^{m-k}\overline{D_{m,k}^{\left(  \ell\right)  }}\left(
\gamma,\vartheta,\varphi\right)  \right] \\
&  =\left[  \left(  -1\right)  ^{m-k}D_{-m,-k}^{\left(  \ell\right)  }\left(
\gamma,\vartheta,\varphi\right)  \right] \\
&  =\left[  D_{m,k}^{\left(  \ell\right)  }\left(  -\gamma,-\vartheta
,-\varphi\right)  \right]  ,
\end{align*}%
\begin{align*}
D_{k,m}^{\left(  \ell\right)  }\left(  \varphi,\vartheta,\gamma\right)   &
=\left(  -1\right)  ^{m-k}D_{m,k}^{\left(  \ell\right)  }\left(
\gamma,\vartheta,\varphi\right) \\
&  =D_{m,k}^{\left(  \ell\right)  }\left(  \gamma,-\vartheta,\varphi\right)
\end{align*}
note here that if the Euler angles $\left(  \varphi,\vartheta,\gamma\right)  $
corresponds to $g$ then $\left(  -\gamma,-\vartheta,-\varphi\right)  $
corresponds to $g^{-1}$, hence
\begin{align}
D_{m,k}^{\left(  \ell\right)  }\left(  g\right)   &  =\left(  -1\right)
^{m-k}D_{-m,-k}^{\left(  \ell\right)  \ast}\left(  g\right)
\label{D_conjugate}\\
D_{m,k}^{\left(  \ell\right)  \ast}\left(  g\right)   &  =D_{k,m}^{\left(
\ell\right)  }\left(  g^{-1}\right) \nonumber
\end{align}
also%
\[
D_{m,k}^{\left(  \ell\right)  }\left(  g\right)  =D_{\ell-m+1,\ell
-k+1}^{\left(  \ell\right)  }\left(  g\right)
\]
Let $g$ the successive application of $g_{1}$ and $g_{2}$ then
\[
D_{m,k}^{\left(  \ell\right)  }\left(  g\right)  =\sum_{j=-\ell}^{\ell}%
D_{m,j}^{\left(  \ell\right)  }\left(  g_{1}\right)  D_{j,k}^{\left(
\ell\right)  }\left(  g_{2}\right)  .
\]

\item \textbf{Condon and Shortley phase convention}, \cite{Edmonds1957},
(4.3.3)
\begin{align}
Y_{\ell}^{m}\left(  \vartheta,\varphi\right)   &  =\sqrt{\frac{2\ell+1}{4\pi}%
}D_{0,-m}^{\left(  \ell\right)  }\left(  \gamma,\vartheta,\varphi\right)
\label{Convention_Condon_Shortley}\\
&  =\left(  -1\right)  ^{m}D_{0,m}^{\left(  \ell\right)  \ast}\left(
\gamma,\vartheta,\varphi\right) \nonumber\\
&  =\left(  -1\right)  ^{m}\sqrt{\frac{2\ell+1}{4\pi}}D_{-m,0}^{\left(
\ell\right)  }\left(  \varphi,\vartheta,\gamma\right) \nonumber\\
&  =\sqrt{\frac{2\ell+1}{4\pi}}D_{m,0}^{\left(  \ell\right)  \ast}\left(
\varphi,\vartheta,\gamma\right)  ,\nonumber
\end{align}
where $\gamma$ is arbitrary angle. This form is referred to as passive
convention as well, see \cite{Morrison1987}. It also follows%
\begin{align*}
Y_{\ell}^{m\ast}\left(  \vartheta,\varphi\right)   &  =Y_{\ell}^{m}\left(
\vartheta,-\varphi\right) \\
&  =\left(  -1\right)  ^{m}Y_{\ell}^{-m}\left(  \vartheta,\varphi\right)  ,\\
Y_{\ell}^{-m}\left(  \vartheta,\varphi\right)   &  =\left(  -1\right)
^{m}e^{-i2m\varphi}Y_{\ell}^{m}\left(  \vartheta,\varphi\right)  .
\end{align*}
\newline\cite{Wormer}, (52) in terms of Wigner $3j$-symbols,
\cite{Edmonds1957} (4.3.2)

\item Singly coupled form, Clebsch-Gordan series:%
\begin{align}
D_{m_{1},k_{1}}^{\left(  \ell_{1}\right)  }D_{m_{2},k_{2}}^{\left(  \ell
_{2}\right)  }  &  =\sum_{\ell,m,k}\left(  2\ell+1\right)
\begin{pmatrix}
\ell_{1:2} & \ell\\
m_{1:2} & m
\end{pmatrix}
D_{m,k}^{\left(  \ell\right)  \ast}%
\begin{pmatrix}
\ell_{1:2} & \ell\\
k_{1:2} & k
\end{pmatrix}
\label{CoupledD}\\
&  =\sum_{\ell,m,k}C_{\ell_{1},m_{1};\ell_{2},m_{2}}^{\ell,m}C_{\ell_{1}%
,k_{1};\ell_{2},k_{2}}^{\ell,k}D_{m,k}^{\left(  \ell\right)  },\nonumber
\end{align}%
\[
D_{m_{1},k_{1}}^{\left(  \ell_{1}\right)  }D_{m_{2},k_{2}}^{\left(  \ell
_{2}\right)  }=\sum_{\ell=\left\vert \ell_{1}-\ell_{2}\right\vert }^{\ell
_{1}+\ell_{2}}%
\begin{pmatrix}
\ell_{1:2} & \ell\\
m_{1:2} & m
\end{pmatrix}%
\begin{pmatrix}
\ell_{1:2} & \ell\\
k_{1:2} & k
\end{pmatrix}
\left(  -1\right)  ^{m-k}\left(  2\ell+1\right)  D_{m_{1}+m_{2},k_{1}+k_{2}%
}^{\left(  \ell\right)  }%
\]
$-m=m_{1}+m_{2},-k=k_{1}+k_{2}$, \cite{Edmonds1957}, (4.3.4). Double coupled
form%
\[
\sum_{m_{1},m_{2},k_{1},k_{2}}%
\begin{pmatrix}
\ell_{1:2} & \ell\\
m_{1:2} & m
\end{pmatrix}
D_{m_{1},k_{1}}^{\left(  \ell_{1}\right)  }D_{m_{2},k_{2}}^{\left(  \ell
_{2}\right)  }%
\begin{pmatrix}
\ell_{1:2} & j\\
k_{1:2} & k
\end{pmatrix}
=\frac{\delta_{\ell,j}}{2\ell+1}D_{m,k}^{\left(  \ell\right)  \ast}%
\]
\cite{Edmonds1957}, (4.3.3)%
\begin{equation}
\sum_{m_{1},m_{2},m_{3}}D_{m_{1},k_{1}}^{\left(  \ell_{1}\right)  }%
D_{m_{2},k_{2}}^{\left(  \ell_{2}\right)  }D_{m_{3},k_{3}}^{\left(  \ell
_{3}\right)  }%
\begin{pmatrix}
\ell_{1:3}\\
m_{1:3}%
\end{pmatrix}
=%
\begin{pmatrix}
\ell_{1:3}\\
k_{1:3}%
\end{pmatrix}
, \label{D_3Prod}%
\end{equation}
The integrals
\begin{align}
\int_{SO\left(  3\right)  }D_{m,k}^{\left(  \ell\right)  }dg  &  =\delta
_{\ell,0}\delta_{m,0}\delta_{k,0}\label{Int_D}\\
\mathcal{G}_{k_{1},k_{2};m_{1},m_{2},}^{\ell_{1},\ell_{2}}  &  =\int%
_{SO\left(  3\right)  }D_{m_{1},k_{1}}^{\left(  \ell_{1}\right)  \ast}%
D_{m_{2},k_{2}}^{\left(  \ell_{2}\right)  }dg=\delta_{\ell_{1},\ell_{2}}%
\delta_{m_{1},m_{2}}\delta_{k_{1},k_{2}}\frac{1}{2\ell_{1}+1}\label{Int_D_D}\\
\mathcal{G}_{k_{1},k_{2},k_{3};m_{1},m_{2},m_{3}}^{\ell_{1},\ell_{2},\ell
_{3}}  &  =\int_{SO\left(  3\right)  }D_{m_{1},k_{1}}^{\left(  \ell
_{1}\right)  }D_{m_{2},k_{2}}^{\left(  \ell_{2}\right)  }D_{m_{3},k_{3}%
}^{\left(  \ell_{3}\right)  }dg=%
\begin{pmatrix}
\ell_{1:3}\\
m_{1:3}%
\end{pmatrix}%
\begin{pmatrix}
\ell_{1:3}\\
k_{1:3}%
\end{pmatrix}
\label{Int_D_D_D}%
\end{align}%
\[
\int_{SO\left(  3\right)  }D_{m_{1},k_{1}}^{\left(  \ell_{1}\right)  \ast
}D_{m_{2},k_{2}}^{\left(  \ell_{2}\right)  }D_{m_{3},k_{3}}^{\left(  \ell
_{3}\right)  }dg=\frac{1}{2\ell_{3}+1}C_{\ell_{1},m_{1};\ell_{2},m_{2}}%
^{\ell_{3},m_{3}}C_{\ell_{1},k_{1};\ell_{2},k_{2}}^{\ell_{3},k_{3}}%
\]
where the Haar measure is $dg=\sin\vartheta d\vartheta d\varphi d\gamma
/8\pi^{2}$, \cite{Varshalovich1988}, 4.11.1, \cite{Edmonds1957}, (4.6.2)

\item \textbf{Addition formula} \cite{GradshteynRyzhik}, 8.814,
\cite{Erdelyi11a} vol. 2, 11.4(8) pp. 236, \cite{Mueller1966}.
\begin{equation}
\sum_{m=-\ell}^{\ell}Y_{\ell}^{m\ast}\left(  L_{1}\right)  Y_{\ell}^{m}\left(
L_{2}\right)  =\frac{2\ell+1}{4\pi}P_{\ell}\left(  \cos\vartheta\right)
\label{FormulaAddition}%
\end{equation}
where $\cos\vartheta=L_{1}\cdot L_{2}$. For general dimension $d$
\cite{Yadrenko1983} p.72 provides an addition formula in terms of Gegenbauer
polynomials $C_{\ell}^{\nu}$.
\end{enumerate}

\section{Proofs, Bispectrum\label{Append_Bisp_Trisp}}

We use the following notations $Z_{\ell_{1:3}}^{m_{1:3}}=\left(  Z_{\ell_{1}%
}^{m_{1}},Z_{\ell_{2}}^{m_{2}},Z_{\ell_{3}}^{m_{3}}\right)  $, $%
\begin{pmatrix}
\ell_{1:3}\\
m_{1:3}%
\end{pmatrix}
=%
\begin{pmatrix}
\ell_{1} & \ell_{2} & \ell_{3}\\
m_{1} & m_{2} & m_{3}%
\end{pmatrix}
,$ $B_{3}\left(  \ell_{1:3}\right)  =B_{3}\left(  \ell_{1},\ell_{2},\ell
_{3}\right)  .$

\begin{proof}
[Proof of Lemma]\textbf{\label{Proof_L_Iso3}}\ref{Lemm_Iso3} Let
\[
\operatorname*{Cum}\nolimits_{3}\left(  Z_{\ell_{1:3}}^{m_{1:3}}\right)  =%
\begin{pmatrix}
\ell_{1:3}\\
m_{1:3}%
\end{pmatrix}
B_{3}\left(  \ell_{1:3}\right)  ,
\]
then
\begin{align}
\operatorname*{Cum}\nolimits_{3}\left(  \mathcal{Z}_{\ell_{1:3}}^{k_{1:3}%
}\right)   &  =\sum_{m_{1},m_{2},m_{3}}D_{k_{1},m_{1}}^{\left(  \ell
_{1}\right)  }D_{k_{2},m_{2}}^{\left(  \ell_{2}\right)  }D_{m_{3},k_{3}%
}^{\left(  \ell_{3}\right)  }\operatorname*{Cum}\nolimits_{3}\left(
Z_{\ell_{1:3}}^{m_{1:3}}\right) \label{Cum_3Z}\\
&  =\sum_{m_{1:3}}D_{k_{1},m_{1}}^{\left(  \ell_{1}\right)  }D_{k_{2},m_{2}%
}^{\left(  \ell_{2}\right)  }D_{m_{3},k_{3}}^{\left(  \ell_{3}\right)  }%
\begin{pmatrix}
\ell_{1:3}\\
m_{1:3}%
\end{pmatrix}
B_{3}\left(  \ell_{1:3}\right) \nonumber\\
&  =%
\begin{pmatrix}
\ell_{1:3}\\
k_{1:3}%
\end{pmatrix}
B_{3}\left(  \ell_{1:3}\right) \nonumber\\
&  =\operatorname*{Cum}\nolimits_{3}\left(  Z_{\ell_{1}}^{k_{1}},Z_{\ell_{2}%
}^{k_{2}},Z_{\ell_{3}}^{k_{3}}\right) \nonumber
\end{align}
where $\mathcal{Z}_{\ell_{1:3}}^{k_{1:3}}=\left(  \mathcal{Z}_{\ell_{1}%
}^{k_{1}},\mathcal{Z}_{\ell_{2}}^{k_{2}},\mathcal{Z}_{\ell_{3}}^{k_{3}%
}\right)  $ see (\ref{D_3Prod}). Hence the assumption of isotropy is
fulfilled.\newline If (\ref{BicovZ}) is not assumed then under the assumption
of isotropy we have $\operatorname*{Cum}\nolimits_{3}\left(  \mathcal{Z}%
_{\ell_{1:3}}^{k_{1:3}}\right)  =\operatorname*{Cum}\nolimits_{3}\left(
Z_{\ell_{1:3}}^{k_{1:3}}\right)  $, and integrate both sides of (\ref{Cum_3Z})
according to the Haar measure and obtain
\begin{align*}
\operatorname*{Cum}\nolimits_{3}\left(  Z_{\ell_{1:3}}^{k_{1:3}}\right)   &  =%
\begin{pmatrix}
\ell_{1:3}\\
k_{1:3}%
\end{pmatrix}
\sum_{m_{1:3}}%
\begin{pmatrix}
\ell_{1:3}\\
m_{1:3}%
\end{pmatrix}
\operatorname*{Cum}\nolimits_{3}\left(  Z_{\ell_{1:3}}^{m_{1:3}}\right) \\
&  =%
\begin{pmatrix}
\ell_{1:3}\\
k_{1:3}%
\end{pmatrix}
B_{3}\left(  \ell_{1:3}\right)
\end{align*}
see (\ref{Int_D_D_D}), where
\[
B_{3}\left(  \ell_{1:3}\right)  =\sum_{m_{1:3}}%
\begin{pmatrix}
\ell_{1:3}\\
m_{1:3}%
\end{pmatrix}
\operatorname*{Cum}\nolimits_{3}\left(  Z_{\ell_{1:3}}^{m_{1:3}}\right)  .
\]
Hence (\ref{BicovZ}) is a necessary and sufficient assumption for the third
order isotropy. In this case the bispectrum is a linear combination of the
cumulants fo the angular projections by the probability amplitude of coupling
three angular momenta $\ell_{1:3}$.
\end{proof}

\begin{proof}
[Proof of (\ref{Bisp_sym})]\label{Bisp_sym_degree}Consider the expression%
\[%
\begin{pmatrix}
\ell_{1:3}\\
k_{1:3}%
\end{pmatrix}
B_{3}\left(  \ell_{1:3}\right)  =%
\begin{pmatrix}
\ell_{1:3}\\
k_{1:3}%
\end{pmatrix}
\sum_{m_{1:3}}%
\begin{pmatrix}
\ell_{1:3}\\
m_{1:3}%
\end{pmatrix}
\operatorname*{Cum}\nolimits_{3}\left(  Z_{\ell_{1:3}}^{m_{1:3}}\right)  .
\]
Any permutation of $\ell_{1:3}$ implies the same sign of the Wigner
coefficients hence it does not change the value.
\end{proof}

\begin{proof}
[Proof: 3-product of\textbf{ }spherical harmonics is rotation invariant]%
\textbf{ }Indeed
\begin{align*}
I_{\ell_{1:3}}\left(  gL_{1:3}\right)   &  =\sum_{m_{1:3}}%
\begin{pmatrix}
\ell_{1:3}\\
m_{1:3}%
\end{pmatrix}
\sum_{k_{1:3}}D_{k_{1},m_{1}}^{\left(  \ell_{1}\right)  }D_{k_{2},m_{2}%
}^{\left(  \ell_{2}\right)  }D_{m_{3},k_{3}}^{\left(  \ell_{3}\right)
}Y_{\ell_{1}}^{k_{1}}\left(  L_{1}\right)  Y_{\ell_{2}}^{k_{2}}\left(
L_{2}\right)  Y_{\ell_{3}}^{k_{3}}\left(  L_{3}\right) \\
&  =\sum_{k_{1:3}}%
\begin{pmatrix}
\ell_{1:3}\\
k_{1:3}%
\end{pmatrix}
Y_{\ell_{1}}^{k_{1}}\left(  L_{1}\right)  Y_{\ell_{2}}^{k_{2}}\left(
L_{2}\right)  Y_{\ell_{3}}^{k_{3}}\left(  L_{3}\right)  .
\end{align*}
see (\ref{D_3Prod}).
\end{proof}

%

%TCIMACRO{\TeXButton{old--}{\begin{comment}}}%
%BeginExpansion
\begin{comment}%
%EndExpansion

Cartesian coordinate $\left(  \sin\vartheta\cos\varphi,\sin\vartheta
\sin\varphi,\cos\vartheta\right)  $.%

\begin{equation}
\sum_{m_{1:3}}%
\begin{pmatrix}
\ell_{1:3}\\
m_{1:3}%
\end{pmatrix}
^{2}=1 \label{W3_prob}%
\end{equation}
%

%TCIMACRO{\TeXButton{---old}{\end{comment}}}%
%BeginExpansion
\end{comment}%
%EndExpansion

\section{Proofs, Trispectrum\label{Append_Trisp}}

Repeat the notations $\ell_{1:4}=\left(  \ell_{1},\ell_{2},\ell_{3},\ell
_{4}\right)  $, \newline$\operatorname*{Cum}\nolimits_{4}\left(  X\left(
L_{1}\right)  ,X\left(  L_{2}\right)  ,X\left(  L_{3}\right)  ,X\left(
L_{4}\right)  \right)  =\operatorname*{Cum}\nolimits_{4}\left(  X\left(
L_{1:4}\right)  \right)  ,$ $T_{4}\left(  \left.  \ell_{1},\ell_{2},\ell
_{3},\ell_{4}\right\vert \ell^{1}\right)  =T_{4}\left(  \left.  \ell
_{1:4}\right\vert \ell^{1}\right)  $.

\begin{lemma}
\label{Lemma_4D}%
\begin{multline*}
\sum_{m_{1:4}}D_{k_{1},m_{1}}^{\left(  \ell_{1}\right)  }D_{k_{2},m_{2}%
}^{\left(  \ell_{2}\right)  }D_{k_{3},m_{3}}^{\left(  \ell_{3}\right)
}D_{k_{4},m_{4}}^{\left(  \ell_{4}\right)  }\left(  -1\right)  ^{m}%
\begin{pmatrix}
\ell_{1:2} & \ell\\
m_{1:2} & -m
\end{pmatrix}%
\begin{pmatrix}
\ell & \ell_{3:4}\\
m & m_{3:4}%
\end{pmatrix}
\\
=\left(  -1\right)  ^{k}%
\begin{pmatrix}
\ell_{1:2} & \ell\\
k_{1:2} & -k
\end{pmatrix}%
\begin{pmatrix}
\ell_{3:4} & \ell\\
k_{3:4} & k
\end{pmatrix}
.
\end{multline*}

\end{lemma}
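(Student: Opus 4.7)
The plan is to prove the identity by working backwards from the right-hand side. Since the selection rule on the first 3j-symbol pins $k=k_{1}+k_{2}$, I can harmlessly rewrite the RHS as $\sum_{k=-\ell}^{\ell}(-1)^{k}\begin{pmatrix}\ell_{1:2} & \ell\\ k_{1:2} & -k\end{pmatrix}\begin{pmatrix}\ell & \ell_{3:4}\\ k & k_{3:4}\end{pmatrix}$, since only one term in this sum is non-vanishing, and I use the cyclic symmetry (\ref{W3_perm}) to bring the second factor to the displayed order. The purpose of artificially inserting this $k$-sum is that it opens the door to a unitarity-based collapse on the $\ell$-slot once each 3j-symbol is unrolled.

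Next, I would apply the rotation-invariance of 3j-symbols from (\ref{D_3Prod}), used in the same form as in the proof of Lemma \ref{Lemm_Iso3}, to each of the two 3j-symbols on this rewritten RHS. This introduces dummy indices $m_{1},m_{2},m'$ (from the first 3j-symbol) and $m_{3},m_{4},m''$ (from the second), together with a product of six Wigner D-matrices. Four of these, namely $\prod_{i=1}^{4}D_{k_{i},m_{i}}^{(\ell_{i})}$, already match the LHS; the two remaining factors are $D_{-k,-m'}^{(\ell)}$ and $D_{k,m''}^{(\ell)}$, both carrying the shared intermediate angular momentum $\ell$.

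The central step is to collapse $\sum_{k}(-1)^{k}D_{-k,-m'}^{(\ell)}D_{k,m''}^{(\ell)}$. Applying (\ref{D_conjugate}) in the form $D_{-k,-m'}^{(\ell)}=(-1)^{m'-k}D_{k,m'}^{(\ell)\ast}$, the $k$-dependent phases combine to $(-1)^{k+m'-k}=(-1)^{m'}$, and the sum reduces to $(-1)^{m'}\sum_{k}D_{k,m'}^{(\ell)\ast}D_{k,m''}^{(\ell)}=(-1)^{m'}\delta_{m',m''}$ by column orthogonality of $D^{(\ell)}$. Setting $m:=m'=m''$ then produces exactly the LHS, with the selection rule of the first 3j-symbol silently converting the residual sum over $m$ into the implicit choice $m=m_{1}+m_{2}$. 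The main obstacle is the careful sign bookkeeping in this final step: one must verify that the phase factors from (\ref{D_conjugate}) combine precisely to the $k$-independent $(-1)^{m'}$ before invoking unitarity; once this is clean, the rest is routine manipulation based on (\ref{D_3Prod}) and the 3j-symbol symmetries.
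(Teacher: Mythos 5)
Your proof is correct, but it runs in the opposite direction from the paper's and leans on different identities. The paper starts from the left-hand side: it first couples $D_{k_{3},m_{3}}^{\left(\ell_{3}\right)}D_{k_{4},m_{4}}^{\left(\ell_{4}\right)}$ into a single $D_{k^{2},m^{2}}^{\left(\ell^{2}\right)\ast}$ via the Clebsch--Gordan series (\ref{CoupledD}), then collapses the $m_{3:4}$-sum with the $3j$ orthogonality relation (\ref{W3j_Orth1}) to force $\ell^{2}=\ell$ and $m^{2}=-m$, and finishes by applying (\ref{D_3Prod}) to the surviving triple product. You instead start from the right-hand side, insert a harmless sum over $k$, unroll each of the two $3j$ symbols into a triple product of D-matrices via (\ref{D_3Prod}), and then eliminate the two leftover factors $D_{-k,-m^{\prime}}^{\left(\ell\right)}D_{k,m^{\prime\prime}}^{\left(\ell\right)}$ carrying the intermediate momentum by combining (\ref{D_conjugate}) with the unitarity of $D^{\left(\ell\right)}$; your sign bookkeeping $\left(-1\right)^{k}\left(-1\right)^{m^{\prime}-k}=\left(-1\right)^{m^{\prime}}$ and the collapse to $\left(-1\right)^{m^{\prime}}\delta_{m^{\prime},m^{\prime\prime}}$ are both right, and the residual $m^{\prime}$-sum is indeed pinned to $m^{\prime}=m_{1}+m_{2}=m$ by the selection rule, reproducing the left-hand side exactly. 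What your route buys is economy of ingredients: you never need the Clebsch--Gordan series or the $3j$ orthogonality sums, only the invariance (\ref{D_3Prod}), the conjugation symmetry, and unitarity. What the paper's route buys is that its intermediate coupling computation is recycled almost verbatim in Lemma \ref{Int_4D} and in the induction of Lemma \ref{Lemma_IntD_p}, so it fits the surrounding machinery more naturally. The one point worth stating explicitly in your write-up is that you apply (\ref{D_3Prod}) with the summation running over the second index of each $D_{k,m}^{\left(\ell\right)}$ rather than the first as literally displayed; this variant is equally valid (replace $g$ by $g^{-1}$ and conjugate, using that the $3j$ symbols are real) and is the form the paper itself uses tacitly, but it deserves a sentence.
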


\begin{proof}
Indeed, from
\[
D_{k_{3},m_{3}}^{\left(  \ell_{3}\right)  }D_{k_{4},m_{4}}^{\left(  \ell
_{4}\right)  }=\sum_{\ell^{2},k^{2},m^{2}}\left(  2\ell^{2}+1\right)
\begin{pmatrix}
\ell_{3:4} & \ell^{2}\\
m_{3:4} & m^{2}%
\end{pmatrix}%
\begin{pmatrix}
\ell_{3:4} & \ell^{2}\\
k_{3:4} & k^{2}%
\end{pmatrix}
D_{k^{2},m^{2}}^{\left(  \ell^{2}\right)  \ast},
\]
and from $m_{1}+m_{2}+m_{3}+m_{4}=0$, it follows%
\begin{align*}
&  \sum_{m_{1:4}}D_{k_{1},m_{1}}^{\left(  \ell_{1}\right)  }D_{k_{2},m_{2}%
}^{\left(  \ell_{2}\right)  }D_{k_{3},m_{3}}^{\left(  \ell_{3}\right)
}D_{k_{4},m_{4}}^{\left(  \ell_{4}\right)  }\left(  -1\right)  ^{m}%
\begin{pmatrix}
\ell_{1:2} & \ell\\
m_{1:2} & -m
\end{pmatrix}%
\begin{pmatrix}
\ell & \ell_{3:4}\\
m & m_{3:4}%
\end{pmatrix}
\\
&  =\sum_{\ell^{2},k^{2},m^{2}}\sum_{m_{1:4}}D_{k_{1},m_{1}}^{\left(  \ell
_{1}\right)  }D_{k_{2},m_{2}}^{\left(  \ell_{2}\right)  }D_{k^{2},m^{2}%
}^{\left(  \ell^{2}\right)  \ast}\left(  2\ell^{2}+1\right) \\
&  \times\left(  -1\right)  ^{m}%
\begin{pmatrix}
\ell_{3:4} & \ell^{2}\\
m_{3:4} & m^{2}%
\end{pmatrix}%
\begin{pmatrix}
\ell_{3:4} & \ell^{2}\\
k_{3:4} & k^{2}%
\end{pmatrix}%
\begin{pmatrix}
\ell_{1:2} & \ell\\
m_{1:2} & -m
\end{pmatrix}%
\begin{pmatrix}
\ell & \ell_{3:4}\\
m & m_{3:4}%
\end{pmatrix}
\\
&  =\sum_{\ell^{2},k^{2},m^{2}}\sum_{m_{3:4}}%
\begin{pmatrix}
\ell_{3:4} & \ell^{2}\\
m_{3:4} & -m^{2}%
\end{pmatrix}%
\begin{pmatrix}
\ell & \ell_{3:4}\\
m & m_{3:4}%
\end{pmatrix}
\left(  2\ell^{2}+1\right) \\
&  \times\left(  -1\right)  ^{m}\sum_{m_{1:2}}D_{k_{1},m_{1}}^{\left(
\ell_{1}\right)  }D_{k_{2},m_{2}}^{\left(  \ell_{2}\right)  }D_{k^{2},m^{2}%
}^{\left(  \ell^{2}\right)  }\left(  -1\right)  ^{m^{2}-k^{2}}%
\begin{pmatrix}
\ell_{3:4} & \ell^{2}\\
k_{3:4} & -k^{2}%
\end{pmatrix}%
\begin{pmatrix}
\ell_{1:2} & \ell\\
m_{1:2} & -m
\end{pmatrix}
\\
&  =\delta_{m,-m^{2}}\sum_{\ell^{2},k^{2},m^{2}}\sum_{m_{1:2}}\delta
_{\ell,\ell^{2}}D_{k_{1},m_{1}}^{\left(  \ell_{1}\right)  }D_{k_{2},m_{2}%
}^{\left(  \ell_{2}\right)  }D_{-k^{2},m^{2}}^{\left(  \ell^{2}\right)  }%
\begin{pmatrix}
\ell_{1:2} & \ell\\
m_{1:2} & -m^{2}%
\end{pmatrix}
\left(  -1\right)  ^{-k^{2}}%
\begin{pmatrix}
\ell_{3:4} & \ell\\
k_{3:4} & -k^{2}%
\end{pmatrix}
\\
&  =%
\begin{pmatrix}
\ell_{1:2} & \ell\\
k_{1:2} & k^{2}%
\end{pmatrix}
\left(  -1\right)  ^{-k^{2}}%
\begin{pmatrix}
\ell_{3:4} & \ell\\
k_{3:4} & -k^{2}%
\end{pmatrix}
,
\end{align*}
\newline since $3j$-symbols are orthgonal
\[
\left(  2\ell+1\right)  \sum_{m_{1:2}}%
\begin{pmatrix}
\ell_{1:2} & \ell\\
m_{1:2} & -m
\end{pmatrix}%
\begin{pmatrix}
\ell_{1:2} & \ell^{1}\\
m_{1:2} & m^{1}%
\end{pmatrix}
=\delta_{\ell,\ell^{1}}\delta_{m,-m^{1}},
\]
see (\ref{W3j_Orth1}) and (\ref{D_3Prod}) is valid.
\end{proof}

\begin{lemma}
\label{Int_4D}%
\begin{align*}
&  \int_{SO\left(  3\right)  }D_{k_{1},m_{1}}^{\left(  \ell_{1}\right)
}D_{k_{2},m_{2}}^{\left(  \ell_{2}\right)  }D_{k_{3},m_{3}}^{\left(  \ell
_{3}\right)  }D_{k_{4},m_{4}}^{\left(  \ell_{4}\right)  }dg\\
&  =\sum_{\ell}\left(  2\ell+1\right)  \left(  -1\right)  ^{k-m}%
\begin{pmatrix}
\ell_{1:2} & \ell\\
k_{1:2} & -k
\end{pmatrix}%
\begin{pmatrix}
\ell_{3:4} & \ell\\
k_{3:4} & k
\end{pmatrix}%
\begin{pmatrix}
\ell_{1:2} & \ell\\
m_{1:2} & -m
\end{pmatrix}%
\begin{pmatrix}
\ell_{3:4} & \ell\\
m_{3:4} & m
\end{pmatrix}
\end{align*}

\end{lemma}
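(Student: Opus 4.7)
The plan is to reduce the fourfold integral to a collapsing double sum by applying the Clebsch--Gordan coupling formula (\ref{CoupledD}) to the two pairs of $D$-matrices and then integrating with the two-factor orthogonality (\ref{Int_D_D}). First I would couple $D^{(\ell_1)}_{k_1,m_1}D^{(\ell_2)}_{k_2,m_2}$ via (\ref{CoupledD}), obtaining a sum over auxiliary $(\ell,k,m)$ of the form $(2\ell+1)\begin{pmatrix}\ell_{1:2}&\ell\\k_{1:2}&k\end{pmatrix}\begin{pmatrix}\ell_{1:2}&\ell\\m_{1:2}&m\end{pmatrix}D^{(\ell)*}_{k,m}$, and do the same for $D^{(\ell_3)}_{k_3,m_3}D^{(\ell_4)}_{k_4,m_4}$ with independent indices $(\ell',k',m')$. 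The integrand then becomes $D^{(\ell)*}_{k,m}D^{(\ell')*}_{k',m'}$ multiplied by a product of four 3j symbols and the weight $(2\ell+1)(2\ell'+1)$.

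Second, I would compute $\int_{SO(3)} D^{(\ell)*}_{k,m} D^{(\ell')*}_{k',m'}\,dg$ by rewriting the second factor through (\ref{D_conjugate}) as $(-1)^{k'-m'}D^{(\ell')}_{-k',-m'}$ and then applying the two-$D$ orthogonality (\ref{Int_D_D}). This forces $\ell'=\ell$, $k'=-k$, $m'=-m$, and produces the factor $(-1)^{k+m}/(2\ell+1)$. The $1/(2\ell+1)$ cancels one copy of $(2\ell+1)$ from the coupling weights, and the $(\ell',k',m')$ summation collapses, so only a triple sum over $(\ell,k,m)$ remains.

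Third, I would reindex $k\mapsto -k$, $m\mapsto -m$ in the remaining sum; this moves the minus sign onto the other column of each 3j symbol and, using $(-1)^{2k}=1$, turns $(-1)^{k+m}$ into $(-1)^{k-m}$. At this point the expression matches the right-hand side of the lemma verbatim. The 3j selection rules then enforce $k=k_1+k_2=-(k_3+k_4)$ and $m=m_1+m_2=-(m_3+m_4)$, so the $(k,m)$ sum carries at most one nonzero term for each $\ell$, while the triangle inequalities restrict $\ell$ to $\max(|\ell_1-\ell_2|,|\ell_3-\ell_4|)\le\ell\le\min(\ell_1+\ell_2,\ell_3+\ell_4)$.

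The main technical hurdle is bookkeeping the phase conventions: the row versus column placement of indices in $D^{(\ell)}_{\cdot,\cdot}$, the sign $(-1)^{k'-m'}$ coming from (\ref{D_conjugate}), and the $(-1)$-phases picked up when the sign of a 3j column is flipped (\ref{W3_perm}). Once these phases are aligned, the identity follows mechanically from two applications of the coupling formula together with the two-factor orthogonality, without any further input.
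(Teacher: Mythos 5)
Your argument is correct and arrives at the stated identity, but it uses a slightly different decomposition than the paper. The paper couples only one pair, $D^{(\ell_3)}_{k_3,m_3}D^{(\ell_4)}_{k_4,m_4}$, via (\ref{CoupledD}), rewrites the resulting $D^{(\ell^1)\ast}_{k^1,m^1}$ as $(-1)^{k^1-m^1}D^{(\ell^1)}_{-k^1,-m^1}$ using (\ref{D_conjugate}), and then finishes in a single stroke with the triple-product integral (\ref{Int_D_D_D}), which directly delivers the two remaining $3j$ symbols with lower row $(k_{1:2},-k^1)$ and $(m_{1:2},-m^1)$. You instead couple both pairs symmetrically and close with the two-factor orthogonality (\ref{Int_D_D}); this costs an extra application of (\ref{CoupledD}) and of (\ref{D_conjugate}) but avoids invoking (\ref{Int_D_D_D}) and makes the cancellation of one factor of $2\ell+1$ against the $1/(2\ell+1)$ from orthogonality explicit. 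The two routes are equivalent, since (\ref{Int_D_D_D}) is itself obtained by coupling a pair and applying (\ref{Int_D_D}). Your phase bookkeeping checks out: $(-1)^{k'-m'}$ evaluated at $k'=-k$, $m'=-m$ equals $(-1)^{m-k}=(-1)^{k-m}$ for integer orders, and the relabelling $k\mapsto -k$, $m\mapsto -m$ is a pure change of summation variable, so no column-sign flip of a $3j$ symbol via (\ref{W3_perm}) is actually required. Your closing remark that the selection rules force $k=k_1+k_2=-(k_3+k_4)$ and likewise for $m$ correctly explains why only the sum over $\ell$ survives in the statement.
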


\begin{proof}
Here we apply the formula (\ref{CoupledD}) for the single coupled D-matrices
then the integral of triple product of D-matrices (\ref{Int_D_D_D}) and get
\begin{align*}
&  \int_{SO\left(  3\right)  }D_{k_{1},m_{1}}^{\left(  \ell_{1}\right)
}D_{k_{2},m_{2}}^{\left(  \ell_{2}\right)  }D_{k_{3},m_{3}}^{\left(  \ell
_{3}\right)  }D_{k_{4},m_{4}}^{\left(  \ell_{4}\right)  }dg\\
&  =\sum_{\ell^{1},k^{1},m^{1}}\left(  2\ell^{1}+1\right)  \left(  -1\right)
^{k^{1}-m^{1}}%
\begin{pmatrix}
\ell_{3:4} & \ell^{1}\\
k_{3:4} & k^{1}%
\end{pmatrix}%
\begin{pmatrix}
\ell_{3:4} & \ell^{1}\\
m_{3:4} & m^{1}%
\end{pmatrix}
\int_{SO\left(  3\right)  }D_{k_{1},m_{1}}^{\left(  \ell_{1}\right)  }%
D_{k_{2},m_{2}}^{\left(  \ell_{2}\right)  }D_{-k^{1},-m^{1}}^{\left(  \ell
^{1}\right)  }dg\\
&  =\sum_{\ell^{1},k^{1},m^{1}}%
\begin{pmatrix}
\ell_{1:2} & \ell^{1}\\
k_{1:2} & -k^{1}%
\end{pmatrix}%
\begin{pmatrix}
\ell_{3:4} & \ell^{1}\\
k_{3:4} & k^{1}%
\end{pmatrix}
\\
&  \times\left(  2\ell^{1}+1\right)  \left(  -1\right)  ^{k^{1}-m^{1}}%
\begin{pmatrix}
\ell_{1:2} & \ell^{1}\\
m_{1:2} & -m^{1}%
\end{pmatrix}%
\begin{pmatrix}
\ell_{3:4} & \ell^{1}\\
m_{3:4} & m^{1}%
\end{pmatrix}
.
\end{align*}

\end{proof}

\begin{lemma}
\label{Lemma_sym_L4}If $\ell_{1}+\ell_{2}+\ell_{3}+\ell_{4}$ is even then
\[
\sum_{\ell}\left(  2\ell+1\right)  \left(  -1\right)  ^{k-m}%
\begin{pmatrix}
\ell_{1:2} & \ell\\
k_{1:2} & -k
\end{pmatrix}%
\begin{pmatrix}
\ell & \ell_{3:4}\\
k & k_{3:4}%
\end{pmatrix}%
\begin{pmatrix}
\ell_{1:2} & \ell\\
m_{1:2} & -m
\end{pmatrix}%
\begin{pmatrix}
\ell & \ell_{3:4}\\
m & m_{3:4}%
\end{pmatrix}
\]
is symmetric in $\ell_{1:4}$.
\end{lemma}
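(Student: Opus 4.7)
The plan is to identify the sum on the left-hand side with the Haar integral of Lemma \ref{Int_4D} and then exploit the manifest symmetry of that integral under permutation of the four scalar factors in the integrand.

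First, I would observe that, by the cyclic symmetry of 3j-symbols (\ref{W3_perm}), the factors $\begin{pmatrix}\ell & \ell_{3:4}\\ k & k_{3:4}\end{pmatrix}$ and $\begin{pmatrix}\ell & \ell_{3:4}\\ m & m_{3:4}\end{pmatrix}$ appearing in the statement coincide with $\begin{pmatrix}\ell_{3:4} & \ell\\ k_{3:4} & k\end{pmatrix}$ and $\begin{pmatrix}\ell_{3:4} & \ell\\ m_{3:4} & m\end{pmatrix}$ respectively. Hence the sum coincides exactly with the right-hand side of Lemma \ref{Int_4D}, so that it equals
\[
\int_{SO(3)} D^{(\ell_1)}_{k_1, m_1}(g)\, D^{(\ell_2)}_{k_2, m_2}(g)\, D^{(\ell_3)}_{k_3, m_3}(g)\, D^{(\ell_4)}_{k_4, m_4}(g)\, dg.
\]
The integrand is a product of four complex scalars, hence invariant under any permutation of the factors. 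Therefore the integral, and a fortiori the original sum, is invariant under simultaneous permutation of the quadruples $(\ell_j, k_j, m_j)$, $j=1,2,3,4$. In particular, it is symmetric in $\ell_{1:4}$, with the $k$-- and $m$--indices understood to accompany their corresponding $\ell$'s.

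The parity hypothesis $\ell_1+\ell_2+\ell_3+\ell_4$ even is not strictly needed for the integral route; it is stated because it is the natural compatibility condition (cf.\ Remark \ref{Rem_parity}) ensuring that the quantity arises as a non-trivial cumulant of an isotropic field, and because it makes the symmetry consistent with the 3j-symbol selection rules after any regrouping of the diagonal. A purely algebraic proof could instead proceed by applying the column-permutation rule (\ref{W3_perm}) together with recoupling via a 6j-symbol whenever one swaps an index from the $(\ell_1,\ell_2)$ pair with one from the $(\ell_3,\ell_4)$ pair. In that more laborious route, the signs $(-1)^{\ell_i+\ell_j+\ell}$ produced by odd column swaps would combine pairwise, with the net sign reducing to $(-1)^{\ell_1+\ell_2+\ell_3+\ell_4}$, which equals $+1$ precisely under the stated parity hypothesis.

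I expect the main obstacle in either route to be notational rather than substantive: keeping track of which projection is negated and of the cyclic ordering of the two 3j-symbols when comparing different groupings. The integral representation sidesteps the recoupling bookkeeping entirely, which is why I regard it as the cleanest strategy.
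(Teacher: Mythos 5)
Your proof is correct, but it takes a genuinely different route from the paper's. The paper proves Lemma \ref{Lemma_sym_L4} by a purely algebraic recoupling argument: it expands the product of the two $3j$-symbols in each factor via a $6j$-symbol identity (regrouping the coupling from the pairs $(\ell_1,\ell_2),(\ell_3,\ell_4)$ to $(\ell_2,\ell_3),(\ell_1,\ell_4)$), and then uses the fact that $\sqrt{(2\ell+1)(2\ell_0+1)}\left\{\begin{smallmatrix}\ell_1 & \ell_2 & \ell\\ \ell_3 & \ell_4 & \ell_0\end{smallmatrix}\right\}$ is an orthogonal matrix in $(\ell,\ell_0)$ to collapse the sum over $\ell$, exhibiting the quantity in the alternative coupling scheme and hence its symmetry. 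You instead recognize the sum as the right-hand side of Lemma \ref{Int_4D} (after the even, cyclic column permutation (\ref{W3_perm}) that turns $\bigl(\begin{smallmatrix}\ell & \ell_{3:4}\\ k & k_{3:4}\end{smallmatrix}\bigr)$ into $\bigl(\begin{smallmatrix}\ell_{3:4} & \ell\\ k_{3:4} & k\end{smallmatrix}\bigr)$), so that it equals $\int_{SO(3)}\prod_{j=1}^{4}D_{k_j,m_j}^{(\ell_j)}\,dg$, and the symmetry under simultaneous permutation of the quadruples $(\ell_j,k_j,m_j)$ is immediate from the commutativity of the scalar factors. This is shorter and sidesteps all $6j$ bookkeeping; it also makes transparent your (correct) observation that the parity hypothesis is not needed for the conclusion, and that the symmetry is the simultaneous one in $(\ell_j,k_j,m_j)$ --- which is indeed the form in which the lemma is invoked in the proof of Lemma \ref{Lemm_Iso4}, where the kernel is contracted against the permutation-symmetric cumulant and the $k_{1:4}$ are summed out. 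What the paper's longer route buys in exchange is the explicit identity relating the two coupling schemes (the $(\ell^1)$-decomposition along the two diagonals of the quadrilateral), which is of independent interest for the discussion following Lemma \ref{Lemm_Iso4}; your argument establishes the symmetry without producing that explicit recoupled form. There is no circularity in your dependence on Lemma \ref{Int_4D}, since its proof rests only on the Clebsch--Gordan series (\ref{CoupledD}) and the triple-product integral (\ref{Int_D_D_D}).
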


\begin{proof}
Start with
\begin{multline*}%
\begin{pmatrix}
\ell_{1:2} & \ell\\
k_{1:2} & -k
\end{pmatrix}%
\begin{pmatrix}
\ell & \ell_{3:4}\\
k & k_{3:4}%
\end{pmatrix}
\\
=\sum_{\ell_{0}}\left(  2\ell_{0}+1\right)  \left(  -1\right)  ^{\Sigma
+\ell_{0}-\ell-k_{1}-k_{3}}%
\begin{pmatrix}
\ell_{2} & \ell_{3} & \ell_{0}\\
k_{2} & k_{3} & -k_{0}%
\end{pmatrix}%
\begin{pmatrix}
\ell_{1} & \ell_{0} & \ell_{4}\\
k_{1} & k_{0} & k_{4}%
\end{pmatrix}%
\begin{Bmatrix}
\ell_{1} & \ell_{2} & \ell\\
\ell_{3} & \ell_{4} & \ell_{0}%
\end{Bmatrix}
\end{multline*}
where $\Sigma=\ell_{1}+\ell_{2}+\ell_{3}+\ell_{4}$. Now we evaluate the
product
\begin{align*}
&  \sum_{\ell_{0}^{1}}\left(  2\ell_{0}^{1}+1\right)  \left(  -1\right)
^{\Sigma+\ell_{0}^{1}-\ell-k_{1}-k_{3}}%
\begin{pmatrix}
\ell_{2} & \ell_{3} & \ell_{0}^{1}\\
k_{2} & k_{3} & -k_{0}^{1}%
\end{pmatrix}%
\begin{pmatrix}
\ell_{1} & \ell_{0}^{1} & \ell_{4}\\
k_{1} & k_{0}^{1} & k_{4}%
\end{pmatrix}%
\begin{Bmatrix}
\ell_{1} & \ell_{2} & \ell\\
\ell_{3} & \ell_{4} & \ell_{0}^{1}%
\end{Bmatrix}
\\
&  \times\sum_{\ell_{0}^{2}}\left(  2\ell_{0}^{2}+1\right)  \left(  -1\right)
^{\Sigma+\ell_{0}^{2}-\ell-m_{1}-m_{3}}%
\begin{pmatrix}
\ell_{2} & \ell_{3} & \ell_{0}^{2}\\
m_{2} & m_{3} & -m_{0}^{2}%
\end{pmatrix}%
\begin{pmatrix}
\ell_{1} & \ell_{0}^{2} & \ell_{4}\\
m_{1} & m_{0}^{2} & m_{4}%
\end{pmatrix}%
\begin{Bmatrix}
\ell_{1} & \ell_{2} & \ell\\
\ell_{3} & \ell_{4} & \ell_{0}^{2}%
\end{Bmatrix}
.
\end{align*}
The values
\[
\sqrt{\left(  2\ell+1\right)  \left(  2\ell_{0}^{1}+1\right)  }%
\begin{Bmatrix}
\ell_{1} & \ell_{2} & \ell\\
\ell_{3} & \ell_{4} & \ell_{0}^{1}%
\end{Bmatrix}
\]
form a real orthogonal matrix, see \cite{Edmonds1957} (6.2.10), rows and
columns being labelled by $\ell$ and $\ell_{0}^{1}$. Hence if we sum it up by
$\ell$, the result is
\begin{align*}
&  \sum_{\ell_{0}^{1}}\left(  2\ell_{0}^{1}+1\right)  \left(  -1\right)
^{-k_{0}^{1}}%
\begin{pmatrix}
\ell_{2} & \ell_{3} & \ell_{0}^{1}\\
k_{2} & k_{3} & -k_{0}^{1}%
\end{pmatrix}%
\begin{pmatrix}
\ell_{1} & \ell_{0}^{1} & \ell_{4}\\
k_{1} & k_{0}^{1} & k_{4}%
\end{pmatrix}
\\
&  \times\left(  -1\right)  ^{-m_{0}^{1}}%
\begin{pmatrix}
\ell_{2} & \ell_{3} & \ell_{0}^{1}\\
m_{2} & m_{3} & -m_{0}^{1}%
\end{pmatrix}%
\begin{pmatrix}
\ell_{1} & \ell_{0}^{1} & \ell_{4}\\
m_{1} & m_{0}^{1} & m_{4}%
\end{pmatrix}
,
\end{align*}
q.e.d.
\end{proof}

\begin{proof}
[Proof of Lemma \ref{Lemm_Iso4}]We have
\[
\operatorname*{Cum}\nolimits_{4}\left(  \mathcal{Z}_{\ell_{1:4}}^{m_{1:4}%
}\right)  =\sum_{k_{1:4}}D_{m_{1},k_{1}}^{\left(  \ell_{1}\right)  }%
D_{m_{2},k_{2}}^{\left(  \ell_{2}\right)  }D_{m_{3},k_{3}}^{\left(  \ell
_{3}\right)  }D_{m_{4},k_{4}}^{\left(  \ell_{4}\right)  }\operatorname*{Cum}%
\nolimits_{4}\left(  Z_{\ell_{1:4}}^{k_{1:4}}\right)  .
\]
Under the assumption of isotropy $\operatorname*{Cum}\nolimits_{4}\left(
\mathcal{Z}_{\ell_{1:4}}^{m_{1:4}}\right)  =\operatorname*{Cum}\nolimits_{4}%
\left(  Z_{\ell_{1:4}}^{m_{1:4}}\right)  $, now integrate both sides by the
Haar measure, see Lemma \ref{Lemma_IntD_p} for $p=4$, and obtain
\begin{multline*}
\operatorname*{Cum}\nolimits_{4}\left(  Z_{\ell_{1:4}}^{m_{1:4}}\right)
=\sum_{k_{1:4}}\sum_{\ell^{1},k^{1},m^{1}}%
\begin{pmatrix}
\ell_{1:2} & \ell^{1}\\
m_{1:2} & -m^{1}%
\end{pmatrix}%
\begin{pmatrix}
\ell^{1} & \ell_{3:4}\\
m^{1} & m_{3:4}%
\end{pmatrix}
\left(  -1\right)  ^{m^{1}-k^{1}}\\
\times\left(  2\ell^{1}+1\right)
\begin{pmatrix}
\ell_{1:2} & \ell^{1}\\
k_{1:2} & -k^{1}%
\end{pmatrix}%
\begin{pmatrix}
\ell^{1} & \ell_{3:4}\\
k^{1} & k_{3:4}%
\end{pmatrix}
\operatorname*{Cum}\nolimits_{4}\left(  Z_{\ell_{1:4}}^{k_{1:4}}\right)
\end{multline*}
Note that each term according to summation $k_{1:4}$ is\textit{\ symmetric} in
$\ell_{1:4}$, see Lemma \ref{Lemma_sym_L4} above. Now, define%
\[
T_{4}\left(  \left.  \ell_{1:4}\right\vert \ell^{1}\right)  =\sum
_{k^{1},k_{1:4}}\left(  -1\right)  ^{k^{1}}\sqrt{2\ell^{1}+1}%
\begin{pmatrix}
\ell_{1:2} & \ell^{1}\\
k_{1:2} & -k^{1}%
\end{pmatrix}%
\begin{pmatrix}
\ell^{1} & \ell_{3:4}\\
k^{1} & k_{3:4}%
\end{pmatrix}
\operatorname*{Cum}\nolimits_{4}\left(  Z_{\ell_{1:4}}^{k_{1:4}}\right)  ,
\]
with this notation we have the cumulant in the form
\[
\operatorname*{Cum}\nolimits_{4}\left(  Z_{\ell_{1:4}}^{m_{1:4}}\right)
=\sum_{\ell^{1},m^{1}}%
\begin{pmatrix}
\ell_{1:2} & \ell^{1}\\
m_{1:2} & -m^{1}%
\end{pmatrix}%
\begin{pmatrix}
\ell^{1} & \ell_{3:4}\\
m^{1} & m_{3:4}%
\end{pmatrix}
\left(  -1\right)  ^{m^{1}}\sqrt{2\ell^{1}+1}T_{4}\left(  \left.  \ell
_{1:4}\right\vert \ell^{1}\right)  .
\]
If instead of isotropy (\ref{Cum4_Z}) is assumed then
\begin{align*}
\operatorname*{Cum}\nolimits_{4}\left(  \mathcal{Z}_{\ell_{1:4}}^{m_{1:4}%
}\right)   &  =\sum_{k_{1:4}}D_{m_{1},k_{1}}^{\left(  \ell_{1}\right)
}D_{m_{2},k_{2}}^{\left(  \ell_{2}\right)  }D_{m_{3},k_{3}}^{\left(  \ell
_{3}\right)  }D_{m_{4},k_{4}}^{\left(  \ell_{4}\right)  }\operatorname*{Cum}%
\nolimits_{4}\left(  Z_{\ell_{1:4}}^{k_{1:4}}\right) \\
&  =\sum_{k_{1:4}}D_{m_{1},k_{1}}^{\left(  \ell_{1}\right)  }D_{m_{2},k_{2}%
}^{\left(  \ell_{2}\right)  }D_{m_{3},k_{3}}^{\left(  \ell_{3}\right)
}D_{m_{4},k_{4}}^{\left(  \ell_{4}\right)  }\\
&  \times\sum_{\ell}%
\begin{pmatrix}
\ell_{1:2} & \ell\\
k_{1:2} & -k
\end{pmatrix}%
\begin{pmatrix}
\ell & \ell_{3:4}\\
k & k_{3:4}%
\end{pmatrix}
\left(  -1\right)  ^{k}\sqrt{2\ell+1}T_{4}\left(  \left.  \ell_{1:4}%
\right\vert \ell\right) \\
&  =\sum_{\ell}T_{4}\left(  \left.  \ell_{1:4}\right\vert \ell\right)
\sqrt{2\ell+1}\\
&  \times\sum_{k_{1:4}}D_{m_{1},k_{1}}^{\left(  \ell_{1}\right)  }%
D_{m_{2},k_{2}}^{\left(  \ell_{2}\right)  }D_{m_{3},k_{3}}^{\left(  \ell
_{3}\right)  }D_{m_{4},k_{4}}^{\left(  \ell_{4}\right)  }\left(  -1\right)
^{k}%
\begin{pmatrix}
\ell_{1:2} & \ell\\
k_{1:2} & -k
\end{pmatrix}%
\begin{pmatrix}
\ell & \ell_{3:4}\\
k & k_{3:4}%
\end{pmatrix}
,
\end{align*}
the Lemma \ref{Lemma_4D} can be applied and we get%
\begin{align*}
\operatorname*{Cum}\nolimits_{4}\left(  \mathcal{Z}_{\ell_{1:4}}^{m_{1:4}%
}\right)   &  =\sum_{m,\ell}%
\begin{pmatrix}
\ell_{1:2} & \ell\\
m_{1:2} & -m
\end{pmatrix}%
\begin{pmatrix}
\ell_{3:4} & \ell\\
m_{3:4} & m
\end{pmatrix}
\left(  -1\right)  ^{m}\sqrt{2\ell+1}T_{4}\left(  \left.  \ell_{1:4}%
\right\vert \ell\right) \\
&  =\operatorname*{Cum}\nolimits_{4}\left(  Z_{\ell_{1:4}}^{m_{1:4}}\right)  .
\end{align*}

\end{proof}

Particular cases
\begin{align*}
\operatorname*{Cum}\nolimits_{4}\left(  u_{\ell_{1}}\left(  L\right)
,u_{\ell_{2}}\left(  L\right)  ,u_{\ell_{3}}\left(  L\right)  ,u_{\ell_{4}%
}\left(  L\right)  \right)   &  =\sum_{m,\ell}T_{4}\left(  \left.  \ell
_{1},\ell_{2},\ell_{3},\ell_{4}\right\vert \ell\right)  \sqrt{2\ell+1}%
\sum_{m_{1:4}}\prod\limits_{j=1}^{4}Y_{\ell_{j}}^{m_{j}}\left(  L_{j}\right)
\\
&  \times%
\begin{pmatrix}
\ell_{1} & \ell_{2} & \ell\\
m_{1} & m_{2} & -m
\end{pmatrix}%
\begin{pmatrix}
\ell & \ell_{3} & \ell_{4}\\
m & m_{3} & 0
\end{pmatrix}
,
\end{align*}%
\begin{multline*}
\operatorname*{Cum}\nolimits_{4}\left(  u_{\ell_{1}}\left(  L_{1}\right)
,u_{\ell_{2}}\left(  L_{2}\right)  ,u_{\ell_{3}}\left(  L\right)  ,u_{\ell
_{4}}\left(  L\right)  \right) \\
=\operatorname*{Cum}\nolimits_{4}\left(  u_{\ell_{1}}\left(  g_{LL_{2}}%
L_{1}\right)  ,u_{\ell_{2}}\left(  g_{LL_{2}}L_{2}\right)  ,u_{\ell_{3}%
}\left(  N\right)  ,u_{\ell_{4}}\left(  N\right)  \right) \\
=\sqrt{\prod_{j=3:4}\frac{2\ell_{j}+1}{4\pi}}\sum_{\ell}T_{4}\left(  \left.
\ell_{1},\ell_{2},\ell_{3},\ell_{4}\right\vert \ell\right)
\begin{pmatrix}
\ell_{3} & \ell_{4} & \ell\\
0 & 0 & 0
\end{pmatrix}
\sqrt{2\ell+1}\\
\times\sum_{m_{1},m_{2}}Y_{\ell_{1}}^{m_{1}}\left(  g_{LL_{2}}L_{1}\right)
Y_{\ell_{2}}^{m_{2}}\left(  g_{LL_{2}}L_{2}\right)
\begin{pmatrix}
\ell_{1:2} & \ell\\
m_{1:2} & 0
\end{pmatrix}
\\
=\frac{\sqrt{2}}{4\pi}\sqrt{\prod_{j=3:4}\frac{2\ell_{j}+1}{4\pi}}\sum_{\ell}%
\begin{pmatrix}
\ell_{3} & \ell_{4} & \ell\\
0 & 0 & 0
\end{pmatrix}
T_{4}\left(  \left.  \ell_{1},\ell_{2},\ell_{3},\ell_{4}\right\vert
\ell\right)  \mathcal{I}_{\ell_{1},\ell_{2},\ell}\left(  \vartheta_{2}%
,\varphi_{2},\vartheta_{1}\right)  ,
\end{multline*}%
\begin{multline*}
\operatorname*{Cum}\nolimits_{4}\left(  u_{\ell_{1}}\left(  L_{1}\right)
,u_{\ell_{2}}\left(  L\right)  ,u_{\ell_{3}}\left(  L\right)  ,u_{\ell_{4}%
}\left(  L\right)  \right)  =\operatorname*{Cum}\nolimits_{4}\left(
u_{\ell_{1}}\left(  g_{LL_{1}}L_{1}\right)  ,u_{\ell_{2}}\left(  N\right)
,u_{\ell_{3}}\left(  N\right)  ,u_{\ell_{4}}\left(  N\right)  \right) \\
=\sqrt{\prod_{j=2:4}\frac{2\ell_{j}+1}{4\pi}}\sum_{\ell}T_{4}\left(  \left.
\ell_{1},\ell_{2},\ell_{3},\ell_{4}\right\vert \ell\right)  \sqrt{2\ell
+1}Y_{\ell_{1}}^{0}\left(  g_{LL_{1}}L_{1}\right)
\begin{pmatrix}
\ell_{1} & \ell_{2} & \ell\\
0 & 0 & 0
\end{pmatrix}%
\begin{pmatrix}
\ell & \ell_{3} & \ell_{4}\\
0 & 0 & 0
\end{pmatrix}
\\
=\sqrt{\prod_{j=1:4}\frac{2\ell_{j}+1}{4\pi}}\sum_{\ell}T_{4}\left(  \left.
\ell_{1},\ell_{2},\ell_{3},\ell_{4}\right\vert \ell\right)  \sqrt{2\ell
+1}P_{\ell_{1}}\left(  L\cdot L_{1}\right)
\begin{pmatrix}
\ell_{1} & \ell_{2} & \ell\\
0 & 0 & 0
\end{pmatrix}%
\begin{pmatrix}
\ell & \ell_{3} & \ell_{4}\\
0 & 0 & 0
\end{pmatrix}
,
\end{multline*}%
\begin{multline*}
\operatorname*{Cum}\nolimits_{4}\left(  u_{\ell_{1}}\left(  L\right)
,u_{\ell_{2}}\left(  L\right)  ,u_{\ell_{3}}\left(  L\right)  ,u_{\ell_{4}%
}\left(  L\right)  \right)  =\operatorname*{Cum}\nolimits_{4}\left(
u_{\ell_{1}}\left(  N\right)  ,u_{\ell_{2}}\left(  N\right)  ,u_{\ell_{3}%
}\left(  N\right)  ,u_{\ell_{4}}\left(  N\right)  \right) \\
=\sqrt{\prod_{j=1:4}\frac{2\ell_{i}+1}{4\pi}}\sum_{\ell}T_{4}\left(  \left.
\ell_{1},\ell_{2},\ell_{3},\ell_{4}\right\vert \ell\right)  \sqrt{2\ell+1}%
\begin{pmatrix}
\ell_{1} & \ell_{2} & \ell\\
0 & 0 & 0
\end{pmatrix}%
\begin{pmatrix}
\ell & \ell_{3} & \ell_{4}\\
0 & 0 & 0
\end{pmatrix}
.
\end{multline*}

\section{Proofs, Polyspectrum\label{Append_PolySp}}

Notation: $\mathcal{G}_{k_{1:p},m_{1:p}}^{\ell_{1:p}}=\mathcal{G}_{k_{1}%
,k_{2},\ldots,k_{p},m_{1},m_{2},\ldots,m_{p}}^{\ell_{1},\ell_{2},\ldots
,\ell_{3p}}$.

\begin{lemma}
\label{Lemma_IntD_p}Define $\ell^{0}=\ell_{1}$, $\ell^{p-2}=\ell_{p}$,
$k^{0}=-k_{1}$, $k^{p-2}=k_{p}$, $m^{0}=-m_{1}$, $m^{p-2}=m_{p}$, then for
$p>3$,
\begin{multline*}
\mathcal{G}_{k_{1:p},m_{1:p}}^{\ell_{1:p}}=\int_{SO\left(  3\right)  }%
\prod\limits_{a=1}^{p}D_{k_{a},m_{a}}^{\left(  \ell_{a}\right)  }dg=\sum
_{\ell^{a},m^{a},k^{a}}\left(  -1\right)  ^{\Sigma\left(  m^{1:p-3}%
-k^{1:p-3}\right)  }\prod\limits_{a=0}^{p-3}%
\begin{pmatrix}
\ell^{a} & \ell_{a+2} & \ell^{a+1}\\
-k^{a} & k_{a+2} & k^{a+1}%
\end{pmatrix}
\\
\times\prod\limits_{a=0}^{p-3}%
\begin{pmatrix}
\ell^{a} & \ell_{a+2} & \ell^{a+1}\\
-m^{a} & m_{a+2} & m^{a+1}%
\end{pmatrix}
\prod\limits_{a=1}^{p-3}\left(  2\ell^{a}+1\right)  ,
\end{multline*}

\end{lemma}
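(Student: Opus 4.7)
The plan is to proceed by induction on $p \geq 3$. The base cases are already available: for $p=3$ the formula reduces to \eqref{Int_D_D_D}, and for $p=4$ it coincides with Lemma \ref{Int_4D} (with $\ell^{1}$ playing the role of the single internal diagonal and $\ell^{0}=\ell_{1}$, $\ell^{p-2}=\ell^{2}=\ell_{4}$ absorbing the boundary edges into the notation).

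For the inductive step, assume the identity holds for $p-1$ and consider the integrand $\prod_{a=1}^{p} D_{k_{a},m_{a}}^{(\ell_{a})}$ for the $p$-fold case. I would apply the Clebsch--Gordan coupling \eqref{CoupledD} to the \emph{last two} Wigner D-matrices, writing
\[
D_{k_{p-1},m_{p-1}}^{(\ell_{p-1})}\,D_{k_{p},m_{p}}^{(\ell_{p})} = \sum_{\ell,k,m}(2\ell+1)\begin{pmatrix}\ell_{p-1}&\ell_{p}&\ell\\ k_{p-1}&k_{p}&k\end{pmatrix}\begin{pmatrix}\ell_{p-1}&\ell_{p}&\ell\\ m_{p-1}&m_{p}&m\end{pmatrix} D_{k,m}^{(\ell)\ast}.
\]
Then invoke the conjugation relation \eqref{D_conjugate} to replace $D_{k,m}^{(\ell)\ast}$ by $(-1)^{k-m}D_{-k,-m}^{(\ell)}$. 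The original integral becomes a weighted sum of integrals of $p-1$ D-matrices, namely $\int\prod_{a=1}^{p-2}D_{k_{a},m_{a}}^{(\ell_{a})}\cdot D_{-k,-m}^{(\ell)}\,dg$, to which the inductive hypothesis applies, with the new last edge carrying index $\ell$ (this is the role of $\ell^{p-3}$ in the target formula) and indices $(-k,-m)$ (matching the boundary conventions $k^{p-2}=k_{p}$, $m^{p-2}=m_{p}$ used at rank $p$).

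After the substitution, the $p-3$ internal diagonals $\ell^{1},\ldots,\ell^{p-4}$ produced by the inductive hypothesis together with the newly introduced $\ell^{p-3}=\ell$ yield the full triangulation of the $p$-gon that appears in the statement. The $3j$-symbol from coupling $(\ell_{p-1},\ell_{p},\ell^{p-3})$ becomes the $a=p-3$ factor, while the factor $(2\ell+1)=(2\ell^{p-3}+1)$ completes the product $\prod_{a=1}^{p-3}(2\ell^{a}+1)$, and the sign $(-1)^{k-m}=(-1)^{-k^{p-3}+m^{p-3}}$ extends the inductive sign $(-1)^{\Sigma(m^{1:p-4}-k^{1:p-4})}$ to the required $(-1)^{\Sigma(m^{1:p-3}-k^{1:p-3})}$ (using that $(-1)^{2k^{p-3}}=1$).

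The main obstacle will be pure bookkeeping: converting between the index ordering of \eqref{CoupledD} (where the projections appear as $k_{1:2},k$ in one $3j$-symbol and as $m_{1:2},m$ in the other) and the asymmetric convention of the target formula (where the $a$-th $3j$-symbol has $-m^{a}$ in its first slot for $a\geq 1$ but $-m^{0}=m_{1}$ at the boundary, and likewise for $k$). Aligning these requires the column-permutation symmetry \eqref{W3_perm}, which produces factors $(-1)^{\ell^{a}+\ell_{a+2}+\ell^{a+1}}$; these must be combined with the parity constraint $\ell_{1}+\cdots+\ell_{p}$ even (so that each inner triangle has integer perimeter) and the selection rule $\sum m_{a}=0$, $\sum k_{a}=0$, to collapse to the stated sign. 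Once that matching is verified, the inductive step is complete.
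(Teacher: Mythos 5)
Your proposal follows essentially the same route as the paper: the paper's proof of Lemma \ref{Lemma_IntD_p} also proceeds by recursion, coupling the last two D-matrices via \eqref{CoupledD}, converting $D^{(\ell)\ast}_{k,m}$ to $(-1)^{m-k}D^{(\ell)}_{-k,-m}$ with \eqref{D_conjugate}, and then invoking the lower-order integral formula (it writes out the step $p=5\to p=4$ explicitly, resting on Lemma \ref{Int_4D}, and leaves the general induction to the reader exactly as you describe). The bookkeeping issues you flag (column permutations of the $3j$-symbols and the resulting signs) are the only delicate point, and the paper handles them in the same implicit way, so your plan is sound.
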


\begin{proof}
The fourth order key formula is emphasized here for understanding the
induction
\begin{align*}
&  \mathcal{G}_{k_{1:5};m_{1:5}}^{\ell_{1:5}}=\int_{SO\left(  3\right)
}D_{k_{1},m_{1}}^{\left(  \ell_{1}\right)  }D_{k_{2},m_{2}}^{\left(  \ell
_{2}\right)  }D_{k_{3},m_{3}}^{\left(  \ell_{3}\right)  }D_{k_{4},m_{4}%
}^{\left(  \ell_{4}\right)  }D_{k_{5},m_{5}}^{\left(  \ell_{5}\right)  }dg\\
&  =\sum_{\ell^{2},k^{2},m^{2}}\int_{SO\left(  3\right)  }D_{k_{1},m_{1}%
}^{\left(  \ell_{1}\right)  }D_{k_{2},m_{2}}^{\left(  \ell_{2}\right)
}D_{k_{3},m_{3}}^{\left(  \ell_{3}\right)  }D_{-k^{2},-m^{2}}^{\left(
\ell^{2}\right)  }dg\\
&  \times\left(  -1\right)  ^{m^{2}-k^{2}}\left(  2\ell^{2}+1\right)
\begin{pmatrix}
\ell_{4:5} & \ell^{2}\\
k_{4:5} & k^{2}%
\end{pmatrix}%
\begin{pmatrix}
\ell_{4:5} & \ell^{2}\\
m_{4:5} & m^{2}%
\end{pmatrix}
\\
&  =\sum_{\ell^{1:2},k^{1:2},m^{1:2}}%
\begin{pmatrix}
\ell_{1:2} & \ell^{1}\\
k_{1:2} & k^{1}%
\end{pmatrix}%
\begin{pmatrix}
\ell_{3} & \ell^{2} & \ell^{1}\\
k_{3} & k^{2} & -k^{1}%
\end{pmatrix}%
\begin{pmatrix}
\ell_{4:5} & \ell^{2}\\
k_{4:5} & -k^{2}%
\end{pmatrix}
\left(  -1\right)  ^{m^{1}-k^{1}}\\
&  \times\left(  2\ell^{1}+1\right)
\begin{pmatrix}
\ell_{1:2} & \ell^{1}\\
m_{1:2} & m^{1}%
\end{pmatrix}%
\begin{pmatrix}
\ell_{3} & \ell^{2} & \ell^{1}\\
m_{3} & m^{2} & -m^{1}%
\end{pmatrix}%
\begin{pmatrix}
\ell_{4:5} & \ell^{2}\\
m_{4:5} & -m^{2}%
\end{pmatrix}
\\
&  \times\left(  -1\right)  ^{m^{2}-k^{2}}\left(  2\ell^{2}+1\right) \\
&  =\sum_{\ell^{1:2},k^{1:2},m^{1:2}}%
\begin{pmatrix}
\ell_{1:2} & \ell^{1}\\
k_{1:2} & k^{1}%
\end{pmatrix}%
\begin{pmatrix}
\ell^{1} & \ell_{3} & \ell^{2}\\
-k^{1} & k_{3} & k^{2}%
\end{pmatrix}%
\begin{pmatrix}
\ell^{2} & \ell_{4:5}\\
-k^{2} & k_{4:5}%
\end{pmatrix}
\left(  -1\right)  ^{m^{1}-k^{1}}\left(  2\ell^{1}+1\right) \\
&  \times%
\begin{pmatrix}
\ell_{1:2} & \ell^{1}\\
m_{1:2} & m^{1}%
\end{pmatrix}%
\begin{pmatrix}
\ell^{1} & \ell_{3} & \ell^{2}\\
-m^{1} & m_{3} & m^{2}%
\end{pmatrix}%
\begin{pmatrix}
\ell^{2} & \ell_{4:5}\\
-m^{2} & m_{4:5}%
\end{pmatrix}
\times\left(  -1\right)  ^{m^{2}-k^{2}}\left(  2\ell^{2}+1\right)  .
\end{align*}
Define $\ell^{p-2}=\ell_{p}$, $k^{p-2}=k_{p}$, $m^{p-2}=m_{p}$
\[
\int_{SO\left(  3\right)  }\prod\limits_{a=1}^{p}D_{k_{a},m_{a}}^{\left(
\ell_{a}\right)  }dg=\left(  -1\right)  ^{m_{p}}\sum_{\ell^{a},m^{a},k^{a}%
}\prod\limits_{a=0}^{p-3}C_{\ell^{a},k^{a};\ell_{a+2},k_{a+2}}^{\ell
^{a+1},k^{a+1}}C_{\ell^{a},m^{a};\ell_{a+2},m_{a+2}}^{\ell^{a+1},m^{a+1}}%
\]%
\begin{multline*}
\int_{SO\left(  3\right)  }\prod\limits_{a=1}^{p}D_{k_{a},m_{a}}^{\left(
\ell_{a}\right)  }dg=\sum_{\ell^{a},m^{a},k^{a}}\left(  -1\right)
^{\Sigma\left(  m^{1:p-3}-k^{1:p-3}\right)  }%
\begin{pmatrix}
\ell_{1:2} & \ell^{1}\\
k_{1:2} & k^{1}%
\end{pmatrix}
\prod\limits_{a=1}^{p-3}%
\begin{pmatrix}
\ell^{a} & \ell_{a+2} & \ell^{a+1}\\
-k^{a} & k_{a+2} & k^{a+1}%
\end{pmatrix}
\\
\times%
\begin{pmatrix}
\ell_{1:2} & \ell^{1}\\
m_{1:2} & m^{1}%
\end{pmatrix}
\prod\limits_{a=1}^{p-3}%
\begin{pmatrix}
\ell^{a} & \ell_{a+2} & \ell^{a+1}\\
-m^{a} & m_{a+2} & m^{a+1}%
\end{pmatrix}
\left(  2\ell^{a}+1\right)
\end{multline*}

\end{proof}

\begin{lemma}
\label{Lemma_SumD_p} Define $\ell^{0}=\ell_{1}$, $\ell^{p-2}=\ell_{p}$,
$k^{0}=-k_{1}$, $k^{p-2}=k_{p}$, $m^{0}=-m_{1}$, $m^{p-2}=m_{p}$, then for
$p<3$ and for any $\ell^{1:p-3}=\left(  \ell^{1},\ell^{2},\ldots,\ell
^{p-3}\right)  $ we have
\begin{multline*}
\mathcal{H}_{k_{1:p},m_{1:p}}^{\ell_{1:p}}\left(  \ell^{1:p-3}\right)
=\sum_{m_{1:p}}\prod\limits_{a=1}^{p}D_{k_{a},m_{a}}^{\left(  \ell_{a}\right)
}\left(  -1\right)  ^{\Sigma m^{1:p-3}}\prod\limits_{a=0}^{p-3}%
\begin{pmatrix}
\ell^{a} & \ell_{a+2} & \ell^{a+1}\\
-m^{a} & m_{a+2} & m^{a+1}%
\end{pmatrix}
\\
=\left(  -1\right)  ^{\Sigma k^{1:p-3}}\prod\limits_{a=0}^{p-3}%
\begin{pmatrix}
\ell^{a} & \ell_{a+2} & \ell^{a+1}\\
-k^{a} & k_{a+2} & k^{a+1}%
\end{pmatrix}
.
\end{multline*}

\end{lemma}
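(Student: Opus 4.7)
My plan is to prove Lemma~\ref{Lemma_SumD_p} by induction on $p$, using the triple-product identity (\ref{D_3Prod}) together with the unitarity of the Wigner $D$-matrix and the conjugation relation (\ref{D_conjugate}) as the basic building blocks. I first note that the hypothesis ``$p<3$'' in the statement appears to be a typographical slip for ``$p\ge 3$''; the case $p=3$ serves as the base of the induction and coincides with (\ref{D_3Prod}) itself, since both the phase $(-1)^{\Sigma m^{1:p-3}}$ and its $k$-counterpart are empty products equal to $1$ and the chain of $3j$-symbols collapses to the single factor $\bigl(\begin{smallmatrix}\ell_1 & \ell_2 & \ell_3\\ m_1 & m_2 & m_3\end{smallmatrix}\bigr)$.

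The inductive step hinges on a peeling identity that I would derive first from (\ref{D_3Prod}): multiplying both sides by $D^{(\ell_3)\ast}_{k_3,M}$, summing over $k_3$, invoking the unitarity $\sum_{k_3}D^{(\ell_3)}_{k_3,m_3}D^{(\ell_3)\ast}_{k_3,M}=\delta_{m_3,M}$, and then converting the remaining conjugate factor via (\ref{D_conjugate}) (with the change of variable $k_3\mapsto -k$ and $M\mapsto -m$), one obtains
\[
\sum_{m_1,m_2} D^{(\ell_1)}_{k_1,m_1}\, D^{(\ell_2)}_{k_2,m_2}\, \begin{pmatrix}\ell_1 & \ell_2 & \ell_3\\ m_1 & m_2 & -m\end{pmatrix}
= (-1)^{m}\sum_{k}(-1)^{k}\, D^{(\ell_3)}_{k,m}\, \begin{pmatrix}\ell_1 & \ell_2 & \ell_3\\ k_1 & k_2 & -k\end{pmatrix}.
\]
This device replaces a pair of $D$-matrices bound to a $3j$-symbol by a single $D$-matrix at the coupled degree together with a $3j$-symbol whose lower row now carries the $k$-projections.

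For the induction step from $p$ to $p+1$, my plan is to apply the peeling identity to the rightmost pair $D^{(\ell_p)}_{k_p,m_p}\, D^{(\ell_{p+1})}_{k_{p+1},m_{p+1}}$, which are coupled via the terminal $3j$-symbol of the chain (the $a=p-2$ factor, in which $\ell^{p-1}=\ell_{p+1}$ and $m^{p-1}=m_{p+1}$). After aligning the columns of that $3j$-symbol by the cyclic symmetry (\ref{W3_perm}), summing over $(m_p,m_{p+1})$ leaves one new $D$-matrix $D^{(\ell^{p-2})}_{k,\,m^{p-2}}$ and a single extracted $3j$-symbol in the $k$-projections, together with an outer sum over the dummy $k$. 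Relabelling $(\ell_1,\ldots,\ell_{p-1},\ell^{p-2})$ as a new $p$-tuple of quantum numbers with projections $(k_1,\ldots,k_{p-1},k)$ identifies the surviving inner expression as precisely $\mathcal{H}^{\tilde\ell_{1:p}}_{\tilde k_{1:p}}(\ell^{1:p-3})$ of the $p$-case, to which the inductive hypothesis applies.

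Applying the inductive hypothesis and observing that the selection rule of the extracted $3j$-symbol enforces $k=k_p+k_{p+1}$, which is exactly the chain value $k^{p-2}=-\sum_{j=1}^{p-1}k_j$ dictated by the recursion $k^a=k^{a-1}-k_{a+1}$, collapses the outer sum over $k$ to a single term; combining the leftover $(-1)^{k}$ from the peeling step with the inductively produced $(-1)^{\Sigma k^{1:p-3}}$ promotes the phase to $(-1)^{\Sigma k^{1:p-2}}$, while the extracted $3j$-symbol supplies the missing $a=p-2$ factor in the product, yielding the claim for $p+1$. The main technical obstacle I anticipate is the careful bookkeeping of signs: the $3j$-symbol carries nontrivial sign changes under cyclic permutations of its columns and under simultaneous sign-flips of the lower row, so the interplay between the $(-1)^{m^a}$-factors from the $m$-chain, the $(-1)^{m^{p-2}}$ and $(-1)^{k}$ produced by the peeling identity, and the signs introduced by the substitution $m\mapsto-m$ must be tracked precisely to verify that the final phase collapses to exactly $(-1)^{\Sigma k^{1:p-2}}$ with no spurious residual factors.
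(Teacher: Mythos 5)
Your proposal is correct and follows essentially the same route as the paper: an induction on $p$ whose inductive step peels off the terminal pair $D^{(\ell_{p})}D^{(\ell_{p+1})}$ against the last $3j$-symbol of the chain, replacing it by a single $D$-matrix at the coupled degree $\ell^{p-2}$ and a $3j$-symbol in the $k$-projections, which is exactly what the paper does for the step from $p=4$ to $p=5$ via the Clebsch--Gordan series (\ref{CoupledD}), the orthogonality (\ref{W3j_Orth2}), the conjugation rule (\ref{D_conjugate}) and (\ref{D_3Prod}). Your ``peeling identity'' obtained from (\ref{D_3Prod}) plus unitarity is equivalent to the paper's combination of (\ref{CoupledD}) with $3j$-orthogonality, and your reading of ``$p<3$'' as a typo for $p\geq 3$ (with base case (\ref{D_3Prod})) is the right one.
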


\begin{proof}
Induction from $p=4$ to $p=5$ will show the general tratment. We prove that
$\mathcal{H}_{k_{1:5},m_{1:5}}^{\ell_{1:5}}\left(  \ell^{1:2}\right)  $
\begin{multline*}
\mathcal{H}_{k_{1:5},m_{1:5}}^{\ell_{1:5}}\left(  \ell^{1:2}\right) \\
=\sum_{m_{1:5}}D_{k_{1},m_{1}}^{\left(  \ell_{1}\right)  }D_{k_{2},m_{2}%
}^{\left(  \ell_{2}\right)  }D_{k_{3},m_{3}}^{\left(  \ell_{3}\right)
}D_{k_{4},m_{4}}^{\left(  \ell_{4}\right)  }D_{k_{5},m_{5}}^{\left(  \ell
_{5}\right)  }\left(  -1\right)  ^{m^{1}+m^{2}}%
\begin{pmatrix}
\ell_{1:2} & \ell^{1}\\
m_{1:2} & m^{1}%
\end{pmatrix}%
\begin{pmatrix}
\ell^{1} & \ell_{3} & \ell^{2}\\
-m^{1} & m_{3} & m^{2}%
\end{pmatrix}%
\begin{pmatrix}
\ell^{2} & \ell_{4:5}\\
-m^{2} & m_{4:5}%
\end{pmatrix}
\\
=\left(  -1\right)  ^{k^{1}+k^{2}}%
\begin{pmatrix}
\ell_{1:2} & \ell^{1}\\
k_{1:2} & k^{1}%
\end{pmatrix}%
\begin{pmatrix}
\ell^{1} & \ell_{3} & \ell^{2}\\
-k^{1} & k_{3} & k^{2}%
\end{pmatrix}%
\begin{pmatrix}
\ell^{2} & \ell_{4:5}\\
-k^{2} & k_{4:5}%
\end{pmatrix}
.
\end{multline*}
Indeed, we have
\[
D_{k_{4},m_{4}}^{\left(  \ell_{4}\right)  }D_{k_{5},m_{5}}^{\left(  \ell
_{5}\right)  }=\sum_{\ell^{2},k^{2},m^{2}}\left(  2\ell^{2}+1\right)
\begin{pmatrix}
\ell_{4:5} & \ell^{2}\\
m_{4:5} & m^{2}%
\end{pmatrix}%
\begin{pmatrix}
\ell_{4:5} & \ell^{2}\\
k_{4:5} & k^{2}%
\end{pmatrix}
D_{k^{2},m^{2}}^{\left(  \ell^{2}\right)  \ast}%
\]
$m_{1}+m_{2}+m_{3}+m_{4}=0$%
\begin{align*}
&  \mathcal{H}_{k_{1:5},m_{1:5}}^{\ell_{1:5}}\left(  \ell^{1:2}\right)
=\sum_{\ell^{2},k^{2},m^{2}}\sum_{m_{1:5}}D_{k_{1},m_{1}}^{\left(  \ell
_{1}\right)  }D_{k_{2},m_{2}}^{\left(  \ell_{2}\right)  }D_{k_{3},m_{3}%
}^{\left(  \ell_{3}\right)  }D_{k^{2},m^{2}}^{\left(  \ell^{2}\right)  \ast
}\left(  -1\right)  ^{m^{1}+m^{2}}\left(  2\ell^{2}+1\right) \\
&  \times%
\begin{pmatrix}
\ell_{4:5} & \ell^{2}\\
m_{4:5} & m^{2}%
\end{pmatrix}%
\begin{pmatrix}
\ell_{4:5} & \ell^{2}\\
k_{4:5} & k^{2}%
\end{pmatrix}%
\begin{pmatrix}
\ell_{1:2} & \ell^{1}\\
m_{1:2} & m^{1}%
\end{pmatrix}%
\begin{pmatrix}
\ell^{1} & \ell_{3} & \ell^{2}\\
-m^{1} & m_{3} & m^{2}%
\end{pmatrix}%
\begin{pmatrix}
\ell^{2} & \ell_{4:5}\\
-m^{2} & m_{4:5}%
\end{pmatrix}
,
\end{align*}
now by (\ref{D_conjugate})%
\begin{align*}
\mathcal{H}_{k_{1:5},m_{1:5}}^{\ell_{1:5}}\left(  \ell^{1:2}\right)   &
=\sum_{\ell^{2},k^{2},m^{2}}%
\begin{pmatrix}
\ell_{4:5} & \ell^{2}\\
k_{4:5} & k^{2}%
\end{pmatrix}
\left(  -1\right)  ^{k^{2}}\left(  2\ell^{2}+1\right)  \sum_{m_{4:5}}%
\begin{pmatrix}
\ell^{2} & \ell_{4:5}\\
-m^{2} & m_{4:5}%
\end{pmatrix}%
\begin{pmatrix}
\ell_{4:5} & \ell^{2}\\
m_{4:5} & m^{2}%
\end{pmatrix}
\\
&  \times\sum_{m_{1:3}}D_{k_{1},m_{1}}^{\left(  \ell_{1}\right)  }%
D_{k_{2},m_{2}}^{\left(  \ell_{2}\right)  }D_{k_{3},m_{3}}^{\left(  \ell
_{3}\right)  }D_{-k,-m}^{\left(  \ell\right)  }\left(  -1\right)  ^{m^{1}}%
\begin{pmatrix}
\ell_{1:2} & \ell^{1}\\
m_{1:2} & m^{1}%
\end{pmatrix}%
\begin{pmatrix}
\ell^{1} & \ell_{3} & \ell^{2}\\
-m^{1} & m_{3} & m^{2}%
\end{pmatrix}
.
\end{align*}
$3j$-symbols are orthogonal, see (\ref{W3j_Orth2}), \textbf{ }using
(\ref{D_3Prod}) we obtain%
\begin{align*}
\mathcal{H}_{k_{1:5},m_{1:5}}^{\ell_{1:5}}\left(  \ell^{1:2}\right)   &  =%
\begin{pmatrix}
\ell_{4:5} & \ell^{2}\\
k_{4:5} & k^{2}%
\end{pmatrix}
\left(  -1\right)  ^{k^{2}}\\
&  \times\sum_{m_{1:3}}D_{k_{1},m_{1}}^{\left(  \ell_{1}\right)  }%
D_{k_{2},m_{2}}^{\left(  \ell_{2}\right)  }D_{k_{3},m_{3}}^{\left(  \ell
_{3}\right)  }D_{-k^{2},-m^{2}}^{\left(  \ell\right)  }\left(  -1\right)
^{m^{1}}%
\begin{pmatrix}
\ell_{1:2} & \ell^{1}\\
m_{1:2} & m^{1}%
\end{pmatrix}%
\begin{pmatrix}
\ell^{1} & \ell_{3} & \ell^{2}\\
-m^{1} & m_{3} & -m^{2}%
\end{pmatrix}
\\
&  =\left(  -1\right)  ^{k^{1}+k^{2}}%
\begin{pmatrix}
\ell_{1:2} & \ell^{1}\\
k_{1:2} & k^{1}%
\end{pmatrix}%
\begin{pmatrix}
\ell^{1} & \ell_{3} & \ell^{2}\\
-k^{1} & k_{3} & k^{2}%
\end{pmatrix}%
\begin{pmatrix}
\ell^{2} & \ell_{3:4}\\
-k^{2} & k_{3:4}%
\end{pmatrix}
,
\end{align*}
obsereve that $k^{1}$ and $k^{2}$ are given by $k_{1:2}$ and $k_{3:4}$
respectively hence the signe of them can be chosen freely.
\end{proof}

\begin{proof}
[Proof of Lemma]\ref{Lemm_Iso_p} We have
\[
\operatorname*{Cum}\nolimits_{p}\left(  \mathcal{Z}_{\ell_{1:p}}^{m_{1:p}%
}\right)  =\sum_{k_{1:p}}\prod\limits_{a=1}^{p}D_{m_{a},k_{a}}^{\left(
\ell_{a}\right)  }\operatorname*{Cum}\nolimits_{4}\left(  Z_{\ell_{1:p}%
}^{k_{1:p}}\right)  .
\]
Under isotropy assumption
\begin{align*}
\operatorname*{Cum}\nolimits_{p}\left(  Z_{\ell_{1:p}}^{m_{1:p}}\right)   &
=\sum_{k_{1:p}}\int_{SO\left(  3\right)  }\prod\limits_{a=1}^{p}D_{m_{a}%
,k_{a}}^{\left(  \ell_{a}\right)  }dg\operatorname*{Cum}\nolimits_{p}\left(
Z_{\ell_{1:p}}^{k_{1:p}}\right) \\
&  =\sum_{\ell^{1:p-3},k^{1:p-3}}\left(  -1\right)  ^{\Sigma m^{1:p-3}}%
\prod\limits_{a=0}^{p-3}%
\begin{pmatrix}
\ell^{a} & \ell_{a+2} & \ell^{a+1}\\
-m^{a} & m_{a+2} & m^{a+1}%
\end{pmatrix}
\\
&  \times\sum_{k_{1:p}}\left(  -1\right)  ^{\Sigma k^{1:p-3}}\prod
\limits_{a=0}^{p-3}%
\begin{pmatrix}
\ell^{a} & \ell_{a+2} & \ell^{a+1}\\
-k^{a} & k_{a+2} & k^{a+1}%
\end{pmatrix}
\left(  2\ell^{a+1}+1\right)  \operatorname*{Cum}\nolimits_{p}\left(
Z_{\ell_{1:p}}^{k_{1:p}}\right) \\
&  =\sum_{\ell^{1:p-3},k^{1:p-3}}\left(  -1\right)  ^{\Sigma m^{1:p-3}}%
\prod\limits_{a=0}^{p-3}\sqrt{2\ell^{a}+1}%
\begin{pmatrix}
\ell^{a} & \ell_{a+2} & \ell^{a+1}\\
-m^{a} & m_{a+2} & m^{a+1}%
\end{pmatrix}
\widetilde{S}_{p}\left(  \left.  \ell_{1:p}\right\vert \ell^{1:p-3}\right)  .
\end{align*}
If isotropy is not assumed then from
\begin{align*}
\operatorname*{Cum}\nolimits_{p}\left(  Z_{\ell_{1:p}}^{k_{1:p}}\right)   &
=\sum_{\ell^{1:p-3},k^{1:p-3}}\left(  -1\right)  ^{\Sigma k^{1:p-3}}%
\prod\limits_{a=0}^{p-3}%
\begin{pmatrix}
\ell^{a} & \ell_{a+2} & \ell^{a+1}\\
-k^{a} & k_{a+2} & k^{a+1}%
\end{pmatrix}
\\
&  \times\prod\limits_{a=1}^{p-3}\sqrt{2\ell^{a}+1}\widetilde{S}_{p}\left(
\left.  \ell_{1:p}\right\vert \ell^{1:p-3}\right)  ,
\end{align*}
we obtain%
\begin{align*}
\operatorname*{Cum}\nolimits_{p}\left(  \mathcal{Z}_{\ell_{1:p}}^{m_{1:p}%
}\right)   &  =\sum_{k_{1:p}}\prod\limits_{a=1}^{p}D_{m_{a},k_{a}}^{\left(
\ell_{a}\right)  }\left(  -1\right)  ^{\Sigma k^{1:p-3}}\prod\limits_{a=0}%
^{p-3}%
\begin{pmatrix}
\ell^{a} & \ell_{a+2} & \ell^{a+1}\\
-k^{a} & k_{a+2} & k^{a+1}%
\end{pmatrix}
\\
&  \times\prod\limits_{a=1}^{p-3}\sqrt{2\ell^{a}+1}\widetilde{S}_{p}\left(
\left.  \ell_{1:p}\right\vert \ell^{1:p-3}\right) \\
&  =\left(  -1\right)  ^{\Sigma m^{1:p-3}}\prod\limits_{a=0}^{p-3}%
\begin{pmatrix}
\ell^{a} & \ell_{a+2} & \ell^{a+1}\\
-m^{a} & m_{a+2} & m^{a+1}%
\end{pmatrix}
\\
&  \times\prod\limits_{a=1}^{p-3}\sqrt{2\ell^{a}+1}\widetilde{S}_{p}\left(
\left.  \ell_{1:p}\right\vert \ell^{1:p-3}\right)  ,
\end{align*}
see Lemma \ref{Lemma_SumD_p} .
\end{proof}

\begin{acknowledgement}
The publication was supported by the T\'{A}MOP-4.2.2.C-11/1/KONV-2012-0001
project. The project has been supported by the European Union, co-financed by
the European Social Fund.
\end{acknowledgement}

\bibliographystyle{alpha}
\bibliography{00BiblMM13}

\end{document}